\title{Towards Establishing Monotonic Searchability in Self-Stabilizing Data Structures\footnote{This work was partially supported by the German Research Foundation (DFG) within the Collaborative Research Center ``On-The-Fly Computing'' (SFB 901).}}
\author[1]{Christian Scheideler}
\author[2]{Alexander Setzer}
\author[3]{Thim Strothmann}
\affil[1]{Paderborn University\\
  Fürstenallee 11, Paderborn, Germany}
\affil[2]{Paderborn University\\
  Fürstenallee 11, Paderborn, Germany}
\affil[3]{Paderborn University\\
  Fürstenallee 11, Paderborn, Germany}
\authorrunning{C. Scheideler and A. Setzer and T. Strothmann}
\subjclass{C.2.4 Distributed Systems}% mandatory: Please choose ACM 1998 classifications from http://www.acm.org/about/class/ccs98-html . E.g., cite as "F.1.1 Models of Computation". 
\keywords{Topological Self-Stabilization, Monotonic Searchability, Node Departures}% mandatory: Please provide 1-5 keywords
\newcommand\mytodo[1]{\textcolor{red}{TODO\ifthenelse{\isempty{#1}}{}{:} #1}}
\algrenewcommand{\algorithmiccomment}[1]{$\rhd$ #1}
\algrenewcommand\algorithmicprocedure{\textbf{action}}
\newcommand{\blp}{\textsc{Build-List+}\xspace}
\newcommand{\srp}{\textsc{Search+}\xspace}
\newcommand{\srpwithoutxspace}{\textsc{Search+}}
\newcommand{\blpp}{\textsc{Build-List*}\xspace}
\newcommand{\srpp}{\textsc{Search*}\xspace}
\newcommand{\linearize}[1]{\textsc{Linearize(\ensuremath{#1})}\xspace}
\newcommand{\introduce}[1]{\textsc{Introduce(\ensuremath{#1})}\xspace}
\newcommand{\tempdelegate}[1]{\textsc{TempDelegate(\ensuremath{#1})}\xspace}
\newcommand{\timeout}{\textsc{Timeout}\xspace}
\newcommand{\search}[1]{\textsc{Search(\ensuremath{#1})}\xspace}
\newcommand{\initsearch}[1]{\textsc{InitiateNewSearch(\ensuremath{#1})}\xspace}
\newcommand{\forwardprobe}[1]{\textsc{ForwardProbe(\ensuremath{#1})}\xspace}
\newcommand{\psuccess}[1]{\textsc{ProbeSuccess(\ensuremath{#1})}\xspace}
\newcommand{\pfail}[1]{\textsc{ProbeFail(\ensuremath{#1})}\xspace}
\newcommand{\rsp}{\ensuremath{R_s^+}\xspace}
\newcommand{\lsp}{\ensuremath{L_s^+}\xspace}
\newcommand{\rsvw}{\ensuremath{R_s(v,w)}\xspace}
\newcommand{\fdp}{$\mathcal{FDP}$\xspace}
\newcommand{\nidec}{$\mathcal{NIDEC}$\xspace}
\newcommand{\revandlin}[1]{\textsc{ReverseAndLinearize(\ensuremath{#1})}\xspace} %nodeList, CODE
\newcommand{\revandlinREQ}[1]{\textsc{ReverseAndLinearizeREQ(#1)}\xspace}
\newcommand{\revandlinACK}[1]{\textsc{ReverseAndLinearizeACK(#1)}\xspace} %id, CODE
\newcommand{\templeft}[1][]{\ensuremath{Temp_{L}\ifthenelse{\isempty{#1}}{}{(#1)}}\xspace}
\newcommand{\tempright}[1][]{\ensuremath{Temp_{R}\ifthenelse{\isempty{#1}}{}{(#1)}}\xspace}
\begin{document}

\maketitle

\begin{abstract}
Distributed applications are commonly based on overlay networks interconnecting their sites so that they can exchange information. 
For these overlay networks to preserve their functionality, they should be able to recover from various problems like membership changes or faults. 
Various self-stabilizing overlay networks have already been proposed in recent years, which have the advantage of being able to recover from any illegal state, but none of these networks can give any guarantees on its functionality while the recovery process is going on. 
We initiate research on overlay networks that are not only self-stabilizing but that also ensure that searchability is maintained while the recovery process is going on, as long as there are no corrupted messages in the system. 
More precisely, once a search message from node $u$ to another node $v$ is successfully delivered, all future search messages from $u$ to $v$ succeed as well. 
We call this property {\em monotonic searchability}.
 We show that in general it is impossible to provide monotonic searchability if corrupted messages are present in the system, which justifies the restriction to system states without corrupted messages. 
 Furthermore, we provide a self-stabilizing protocol for the line for which we can also show monotonic searchability. 
 It turns out that even for the line it is non-trivial to achieve this property. 
 Additionally, we extend our protocol to deal with node departures in terms of the Finite Departure Problem of Foreback et. al (SSS 2014). 
 This makes our protocol even capable of handling node dynamics.
 \end{abstract}

\section{Introduction}
The Internet has opened up tremendous opportunities for people to interact and exchange information.
 Particularly popular ways to interact are peer-to-peer systems and social networks. 
 For these systems to stay popular, it is very important that they are highly available. 
 However, once these systems become large enough, changes and faults are not an exception but the rule.
Therefore, mechanisms are needed that ensure that whenever there are problems, they are quickly repaired, and all parts of the system that are still functional should not be affected by the repair process. 
Protocols that are able to recover from arbitrary states are also known as \emph{self-stabilizing} protocols.

Since the seminal paper of Dijkstra in 1974~\cite{Dijkstra74}, self-stabilizing protocols have been investigated for many classical problems including leader election, consensus, matching, clock synchronization and token distribution problems.
%The major advantage of self-stabilizing protocols is that these protocols can %recover from an \emph{arbitrary initial state}. Therefore, the assumption that %protocols start in a well-initialized state, which is usually not the case in %large, decentralized and continuously operating systems, can be dropped.
%Moreover, self-stabilization can be used to study transient faults regardless %of their nature.
Recently, also various protocols for self-stabilizing overlay networks have been proposed (e.g., \cite{corona,JRSST09,DolevT2013,JacobRSS2012,DolevK08, AspnesW07,KniesburgesKS12,rechord,DBLP:journals/tcs/BernsGP13}). 
However, for all of these protocols it is only known that they \emph{eventually} converge to the desired solution, but the convergence process is not necessarily \emph{monotonic}. 
In other words, it is not ensured for two points in time $t, t'$ with $t<t'$ that the functionality of the topology at time $t'$ is better than the functionality at time $t$.

In this paper, we focus on protocols for self-stabilizing overlay networks that guarantee the \emph{monotonic} preservation of a characteristic that we call \emph{searchability}, i.e., once a search message from node $u$ to another node $v$ is successfully delivered, all future search messages from $u$ to $v$ succeed as well. 
Searchability is a useful and natural characteristic for an overlay network since searching for other participants is one of the most common tasks in real-world networks. 
Moreover, a protocol that preserves monotonic searchability has the huge advantage that in every state, even if the self-stabilization process has not converged yet, the already built topology can already be used for search requests.

As a starting point for rigorous research on monotonic searchability, we will focus on building a self-stabilizing protocol that preserves monotonic searchability for the line graph. 
Although the topology itself is fairly simple, to preserve searchability during the self-stabilization process turns out to be quite challenging. 
Additionally, we study monotonic searchability for the line graph if the node set is dynamic, i.e., nodes are allowed to leave the network.

\subsection{Model}
We consider a distributed system consisting of a fixed set of nodes in which each node has a unique reference and a unique immutable numerical identifier (or short id).
The system is controlled by a protocol that specifies the variables and actions that are available in each node. 
In addition to the protocol-based variables there is a system-based variable for each node called \emph{channel} whose values are sets of messages. 
We denote the channel of node $u$ as $u.Ch$ and $u.Ch$ contains all incoming messages to $u$. 
Its message capacity is unbounded and messages never get lost.
A node can add a message to $u.Ch$ if it has a reference to $u$.
Besides these channels there are no further communication means, so only point-to-point communication is possible.

There are two types of actions. 
The first type of \emph{action} has the form of a standard procedure 
$\langle label\rangle (\langle parameters \rangle): \langle
command \rangle$, where $label$ is the unique name of that action,
$parameters$ specifies the parameter list of the action, and $command$
specifies the statements to be executed when calling that action. Such actions
can be called remotely. In fact, we assume that every message must be of the
form $\langle label \rangle (\langle parameters \rangle)$ where $label$
specifies the action to be called in the receiving node and $parameters$
contains the parameters to be passed to that action call. All other messages
will be ignored by the nodes. Apart from being triggered by messages,
these actions may also be called locally by the nodes, which causes their
immediate execution. 
The second type of action has the form $ \langle
label\rangle: \langle guard \rangle \longrightarrow \langle command \rangle$,
where $label$ and $command$ are defined as above and $guard$ is a predicate
over local variables. We call an action whose guard is simply \textbf{true} a
\emph{timeout} action.

The \emph{system state} is an assignment of a value to every variable of each node and messages to each channel. 
An action in some node $p$ is \emph{enabled} in some system state if its guard evaluates to \textbf{true}, or if there is a message in $p.Ch$ requesting to call it.
In the latter case the corresponding message is processed (in which case it is removed from $p.Ch$).
An action is \emph{disabled} otherwise. 
Receiving and processing a message is considered as an atomic step.

A \emph{computation} is an infinite fair sequence of system states such that
for each state $s_i$, the next state $s_{i+1}$ is obtained by executing an
action that is enabled in $s_i$. This disallows the overlap of action
execution. That is, action execution is \emph{atomic}. We assume \emph{weakly
fair action execution} and \emph{fair message receipt}. Weakly fair action
execution means that if an action is enabled in all but finitely many states
of the computation, then this action is executed infinitely
often. Note that the timeout action of a node is executed infinitely
often. Fair message receipt means that if the computation contains a state
where there is a message in a channel of a node that
enables an action in that node, then that action is eventually executed
with the parameters of that message, i.e., the message is eventually
processed. Besides these fairness assumptions, we place no bounds on message
propagation delay or relative nodes execution speeds, i.e., we allow fully
asynchronous computations and non-FIFO message delivery.
A \emph{computation suffix} is a sequence of computation states past a particular state of this computation. 
In other words, the suffix of the computation is obtained by removing the initial state and finitely
many subsequent states. 
Note that a computation suffix is also a computation.

We consider protocols that do not manipulate the internals of node
references. Specifically, a protocol is \emph{compare-store-send} if the only
operations that it executes on node references is comparing them, storing
them in local memory and sending them in a message. That is, operations on
references such as addition, radix computation, hashing, etc. are not used. In
a compare-store-send protocol, if a node does not store a reference in its
local memory, the node may learn this reference only by receiving it in a
message. A compare-store-send protocol cannot introduce new references to the
system. It can only operate on the references that are already there.

The overlay network of a set of nodes is determined by their knowledge of each other. 
We say that there is a (directed) \emph{edge} from $a$ to $b$, denoted by $(a,b)$, if node $a$ stores a reference of $b$ in its local memory or has a message in $a.Ch$ carrying the reference of $b$. 
In the former case, the edge is called \emph{explicit} (drawn solid in figures), and in the latter case, the edge is called \emph{implicit} (drawn dashed). 
With $NG$ we denote the directed \emph{network (multi-)graph} given by the explicit and implicit edges.
$ENG$ is the subgraph of $NG$ induced by only the explicit edges. 
A \emph{weakly connected component} of a directed graph $G$ is a subgraph of $G$ of maximum size so that for any two nodes $u$ and $v$ in that subgraph there is a (not necessarily directed) path from $u$ to $v$. 
Two nodes that are not in the same weakly connected component are \emph{disconnected}.
We say a node $a$ is to the \emph{left} (\emph{right}, respectively) of a node $b$ if $id(a)<id(b)$ ($id(a)>id(b)$).
If there is an edge $(a,b)$ between the two, then $a$ is a \emph{left neighbor} (\emph{right neighbor}).
For three nodes $a,b,c$ with $id(a)<id(b), id(a)<id(c)$ (or $id(a)>id(b), id(a)>id(c)$, respectively), we say a node $b$ is \emph{closer} to $a$ than $c$, if $\vert id(a)-id(b) \vert < \vert id(a)-id(c) \vert$.
If it is clear from the context we sometimes refer to the identifier of a node by dropping the $id$ notation to , e.g., we write $a < b$ instead of $id(a)<id(b)$.

In this paper we are particularly concerned with search requests, i.e., \search{v,destID} messages that are routed along $ENG$ according to a given routing protocol, where $v$ is the sender of the message and $destID$ is the identifier of a node we are looking for.
Note that $destID$ does not necessarily belong to an existing node $w$, since we also want to model search requests to not existing nodes. 
If a \search{v,destID} message reaches a node $w$ with $id(w)=destID$, the search request \emph{succeeds}; if the message reaches some node $u$ with $id(u) \ne destID$ and cannot be forwarded anymore according to the given routing protocol, the search request \emph{fails}.
We assume that nodes themselves initiate \search{} requests at will.
Therefore, the \search{destID} action is never explicitly called.

We need some additional notation for our results of Section~\ref{sec:theBLPPAlgorithm}, in which we extend the protocol to handle nodes that want to leave the system.
A node $u$ has a variable $\emph{mode} \in \{ \text{leaving}, \text{staying} \}$ that is read-only. 
If this variable is set to \textbf{leaving}, the node is \emph{leaving}; the node is \emph{staying} if the variable is set to \textbf{staying}.
Note that staying nodes can dynamically decide at any arbitrary state if they want to leave the system by executing a corresponding \emph{leave action}. 
However, a leaving node cannot switch back to staying.
The ultimate goal of a leaving node is to depart from the system.
There is one special command that is important for the study of leaving nodes: \textbf{exit}. 
If a node executes \textbf{exit} it enters a designated \emph{exit state} and all remaining edges to or from that node are deleted. 
We call such a node \emph{gone}. 
A node that is not gone is called \emph{present}.
For a gone node all actions are disabled, in particular it will not execute the timeout action regularly.

\subsection{Problem Statement}

A protocol is \emph{self-stabilizing} if it satisfies the following two properties.
\begin{description}
\item[\emph{Convergence:}] starting from an arbitrary system state, the protocol is guaranteed to arrive at a legitimate state.
\item[\emph{Closure:}] starting from a legitimate state the protocol remains in legitimate states thereafter.
\end{description}
A self-stabilizing protocol is thus able to recover from transient faults regardless of their nature. 
Moreover, a self-stabilizing protocol does not have to be initialized as it eventually starts to behave correctly regardless of its initial state. 
In \emph{topological self-stabilization} we allow self-stabilizing protocols to perform changes to the overlay network, resp.~$NG$. 
A legitimate state may then include a particular graph topology or a family of graph topologies.

In this paper we want to build a self-stabilizing protocol for the \emph{linearization problem}, i.e., the nodes are sorted by identifiers and each node stores only two references: its closest successor and its closest predecessor.
From a global point of view, the nodes build a \emph{line graph} topology.
Of course, searching is easy once a legitimate state has been reached.
However, searching reliably during the stabilization phase is much more involved.
We say a (self-stabilizing) protocol satisfies \emph{monotonic searchability} according to some routing protocol $R$ if it holds for any pair of nodes $v,w$ that once a \search{v,id(w)} request (that is routed according to $R$) initiated at time $t$ succeeds, any \search{v,id(w)} request initiated at a time $t' > t$ will succeed.
We do not mention $R$ if it is clear from the context.
A protocol is said to satisfy \emph{non-trivial} monotonic searchability if it satisfies monotonic searchability and in every computation of the protocol there is a suffix such that for each pair of nodes $v,w$ for which there is a path from $v$ to $w$ in the target topology \search{v,id(w)} requests will succeed.

Furthermore, we give a self-stabilizing protocol that satisfies non-trivial monotonic searchability, solves the linearization problem and solves the \emph{Finite Departure Problem} of~\cite{departure1}. The following problem statement is adapted from~\cite{KoutsopoulosSS15}:
\begin{description}
\item[\emph{Finite Departure Problem} (\fdp)]: In case the \textbf{exit} command is available, eventually reach a system state in which (i) every staying node is awake, (ii) every leaving node is gone and (iii) for each weakly connected component of the initial network graph, the staying nodes in that component still form a weakly connected component.
\end{description}
Consequently, a leaving node $u$ should \emph{safely} execute \textbf{exit}, i.e., the removal of $u$ and its incident edges from $NG$ does not disconnect any present nodes and does not violate searchability.

\subsection{Related work}

The idea of self-stabilization in distributed computing was introduced in a classical paper by E.W. Dijkstra in 1974~\cite{Dijkstra74}, in which he looked at the problem of self-stabilization in a token ring. 
In order to recover certain network topologies from any weakly connected state, researchers started with simple line and ring networks (e.g.~\cite{ShakerR05,self-stabilizing-list,self-stabilizing-list2}.
Over the years more and more network topologies were considered, ranging from skip lists and skip graphs~\cite{corona,JRSST09}, to expanders~\cite{DolevT2013}, Delaunay graphs~\cite{JacobRSS2012}, hypertrees and double-headed radix trees~\cite{DolevK08, AspnesW07}, small-world graphs~\cite{KniesburgesKS12} and a Chord variant~\cite{rechord}. 
Also a universal algorithm for topological self-stabilization is known~\cite{DBLP:journals/tcs/BernsGP13}.

Close to our work is the notion of \emph{monotonic convergence} by Yamauchi and Tixeuil~\cite{YamauchiT10}. A self-stabilizing protocol is monotonically converging if every change done by a node $p$ makes the system approach a legitimate state and if every node changes its output only once. 
The authors investigate monotonically converging protocols for different classic distributed problems (e.g., leader election and vertex coloring) and focus on the amount of non-local information that is needed for them.

Our study of the \emph{Finite Departure Problem} is heavily inspired by~\cite{departure1}, in which the authors propose the aforementioned problem to study graceful departures of nodes in a self-stabilizing setting, i.e., nodes that want to leave a distributed system should decide when they can leave without affecting weak connectivity of the topology. 
They conclude that in general it is not possible to solve the \fdp. 
However, with the use of distributed oracles (which are specialized failure detectors~\cite{ChandraT96}) the authors propose a protocol that solves the problem and arranges the nodes in a line. 
Additionally, they can show that oracles are not needed if the problem is transformed into a non-decision variant. 
In~\cite{KoutsopoulosSS15} the idea is generalized to a protocol framework that solves the \fdp without being reliant on a certain topology and is thereby combinable with most existing overlay protocols.

\subsection{Our contribution}
To the best our knowledge, this paper presents the first attempt to have stricter requirements towards the self-stabilization process in topological self-stabilization.
We define and study \emph{monotonic searchability}, which captures a typical use case for overlay networks, i.e., searching other nodes.
More formally, we want to guarantee for a self-stabilizing topology that once a search message from node $u$ to another node $v$ is successfully delivered, all future search messages from $u$ to $v$ succeed as well.
We focus on studying non-trivial monotonic searchability for the list topology.
First, we show that in general it is impossible to provide non-trivial monotonic searchability from any initial system state, due to the presence of certain initial messages.
This justifies to study searchability only for so-called \emph{admissible system states} in which  these messages are not present anymore, as long as the protocol gurantees convergence to these states.
We give a self-stabilizing list protocol and an appropriate search protocol that achieve the desired goal and prove their correctness.
Moreover, we broaden the elaborateness of the problem statement, by allowing nodes to leave the line topology, i.e., solving the Finite Departure Problem in addition to the aforementioned problems. 
Also for this combination of problems we present suitable protocols and prove their correctness. 

\section{Preliminaries}
\label{sec:preliminaries}

Since gone nodes will never execute any action, we only consider initial states in which all nodes are present. 
We also restrict the initial state to contain only a finite number of messages that can trigger actions specified by our protocol, since other messages are ignored by the nodes. 
Finally, we do not allow the presence of references that do not belong to a node in the system.
From now on, an initial system state satisfies all of these constraints.

The following propositions are restatements of results in~\cite{corona} and imply further necessary conditions on initial system states.
\begin{enumerate}
\item If a compare-store-send program solves the linearization problem, each computation starts in a weakly connected initial state.
\item If a compare-store-send program solves the linearization problem, each computation starts in a state in which all references belong to present nodes.
\end{enumerate}

A \emph{message invariant} is a predicate of the following form:
If there is a message $m$ in the incoming channel of a node, then a predicate $P'$ must hold.
A protocol may specify one or more message invariants.
An arbitrary message $m$ in a system is called \emph{corrupted} if the existence of $m$ violates one of the message invariants.
A state $s$ is called \emph{admissible} if there are no corrupted messages in $s$.
We say a protocol \emph{admissible-message satisfies} a property if the following two conditions hold: (i) in computations in which every state is admissible, it satisfies the property, and (ii) starting from any initial state, there is a computation suffix in which every state is admissible.
A protocol \emph{unconditionally satisfies} a property if it satisfies this property starting from any state.

With this notion in mind, we can show that admissible-message satisfaction is necessary for non-trivial monotonic searchability for any routing algorithm $R$.

\begin{lemma}\label{lem:admissible_message_necessary_for_monotonic_searchability}
If a compare-store-send self-stabilizing protocol satisfies non-trivial monotonic searchability then this protocol must be admissible-message satisfying.
\end{lemma}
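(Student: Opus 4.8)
The plan is to unfold the definition of admissible-message satisfaction and establish its two clauses separately, for the property of non-trivial monotonic searchability and with respect to the protocol's own message invariants. Clause (i)---that the property holds in every computation all of whose states are admissible---should come for free: such computations form a subfamily of all computations of the protocol, so both monotonicity and the non-triviality clause (eventual success of every search along a target-topology path) are inherited directly from the hypothesis. Hence the real content is clause (ii): from any initial state the protocol must reach a suffix in which no message is corrupted.

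For (ii) I would argue by contraposition. Suppose some computation admits no admissible suffix, so that corrupted messages recur in infinitely many of its states; the aim is to contradict monotonic searchability by exhibiting, for a suitable pair $v,w$, a \search{v,id(w)} request that succeeds at some time $t$ followed by one that fails at a later time $t' > t$. The lever is that a corrupted message is precisely one whose presence is inconsistent with a correctly routed computation, so a corrupted search (or probe) message can sit in a channel in a position it could never have legitimately reached; because the protocol is compare-store-send it cannot inspect the references the message carries and therefore cannot distinguish it from an honest one, so upon processing it the message is forwarded by $R$ and may arrive at $w$, counting as a success of a \search{v,id(w)} request. Monotonicity then obligates every later such request to succeed as well. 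To produce the contradicting failure I would invoke fair message receipt: the corrupting message is eventually consumed and its effect is transient, and since corruption recurs rather than the computation having already converged to the line, there is a window in which $ENG$ does not yet carry a route from $v$ to $w$; a request initiated in that window fails. Working with a pair $v,w$ that is connected in the target topology---available by the non-triviality clause---ensures the only obstruction to success is this transient, corruption-induced configuration rather than a permanent absence of any path.

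The step I expect to be the main obstacle is making this argument uniform over an arbitrary compare-store-send protocol: I may assume nothing about the concrete routing rule $R$ or the particular message invariants, yet I must convert the bare recurrence of corruption supplied by $\lnot$(ii) into a genuine searchability anomaly. The delicate points are, first, tying the abstract notion of a corrupted message to an actual premature search success---this is exactly where the meaning of ``corrupted/admissible'' and the compare-store-send restriction must be used, since a corruption that could never influence any route would not endanger monotonicity---and, second, guaranteeing that a truly failing request can be scheduled after the corrupting message is consumed but before the topology stabilizes. I anticipate closing both gaps with an indistinguishability argument at $v$: two computations that agree on everything $v$ receives up to $t$ must agree on the outcome of its request at $t$, which lets me transplant the corruption-induced success into a computation that later exhibits the failure, so that a single computation violates monotonicity.
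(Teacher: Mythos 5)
Your decomposition starts in the right place (clause (i) is indeed inherited, so the burden is clause (ii)), but the contraposition you build on --- ``recurring corruption implies a searchability violation'' --- cannot be made to work, and the reason is one you yourself brush against. In this paper a message is \emph{corrupted} only relative to whatever message invariants the protocol itself declares; these are arbitrary predicates chosen by the protocol designer. Your characterization of a corrupted message as one ``whose presence is inconsistent with a correctly routed computation'' is not the paper's definition. A protocol may declare vacuous invariants (then every state is admissible, $\lnot$(ii) can never hold, and your contraposition is empty), or invariants whose violation is completely orthogonal to routing; in neither case does the recurrence of corruption hand you a prematurely successful \search{v,id(w)} request or a failure window. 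Your two concrete levers also fail individually: a corrupted message already sitting in a channel was never \emph{initiated} by $v$ at any time $t$, so even if it reaches $w$ it creates no monotonicity obligation under the paper's definition of monotonic searchability; and recurring corruption does not prevent $ENG$ from converging (the protocol is self-stabilizing regardless of what floats in the channels), so the window in which a later request must fail is not guaranteed to exist. The indistinguishability argument you hope will close these gaps is aimed at the wrong target: what has to be produced is a violation that works for \emph{every} compare-store-send protocol independently of which invariants it declares.

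The paper's proof supplies exactly this invariant-free content by a different route: it shows via an explicit adversary game that \emph{no} compare-store-send self-stabilizing protocol can unconditionally satisfy non-trivial monotonic searchability, and reduces the lemma to that impossibility (if the protocol is not admissible-message satisfying, its guarantee cannot be conditioned on admissibility, hence must hold unconditionally --- which the game refutes). The game uses a three-node instance with an explicit edge $(u,v)$ and an implicit edge $(v,w)$, i.e., a reference of $w$ residing in $v.Ch$. Two requests \search{u,id(w)} are initiated at $u$, first $a$, then $b$; the compare-store-send restriction together with the adversary's powers (halting delivery, imitating any awaited message, fixing $u$'s initial internal state) forces both to be forwarded to $v$. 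Since channels are not FIFO, the adversary delivers $b$ first, while $v$ still knows no node to its right, so $b$ fails; the adversary then releases the reference of $w$, waits until self-stabilization establishes the explicit edge $(v,w)$, and only then delivers $a$, which succeeds. Thus the earlier-initiated request succeeds and the later-initiated one fails --- a violation of monotonic searchability obtained for an arbitrary protocol without ever mentioning its invariants. To repair your write-up, you would need to replace the contraposition on $\lnot$(ii) by a direct adversarial construction of this kind.
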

The structure of the proof is as follows: we consider an arbitrary unconditionally satisfying protocol and show that it does not satisfy monotonic searchability by creating a bad instance for this protocol.
In particular, we exploit that our model does not ensure FIFO delivery of messages.

\begin{proof}
  Assume there is a compare-store-send self-stabilizing protocol that unconditionally satisfies non-trivial monotonic searchability.
  First of all, note that if it violates only the second condition of admissible-message satisfiability, then there are computations in which monotonic searchability is never satisfied, implying that it cannot satisfy non-trivial monotonic searchability.
  Thus, assume that the first condition is violated, i.e., the protocol satisfies the property in computations with arbitrary message, regardless of any invariants.
  Consider the network given in Figure~\ref{fig:no_corrupt}.
  \begin{figure}[h]
    \centering
 	\begin{tikzpicture}
		\node(u)[circle,draw=black] at (0,0) {$u$};
		\node(v)[circle,draw=black] at (1,0) {$v$};
		\node(w)[circle,draw=black] at (2,0) {$w$};
		\draw (u) -- (v);
		\draw[dashed,->] (v) to[out=50,in=-230] (w);
	\end{tikzpicture}
	\caption{Instance for this proof.}\label{fig:no_corrupt}
 \end{figure}
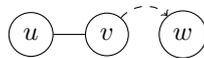
 
 The implicit edge $(v,w)$ is in $v.Ch$.
 We carry out the proof as a game between the protocol and an adversary: based on the decisions of the protocol,  the adversary may decide on the delivery speed of messages, and imitate additional messages at each node.
 The latter is possible since nodes can not distinguish between these messages and messages from an initial state that have not been received yet.
 Furthermore, the adversary may set the initial state of the nodes.
 
 At first, we issue a $search(u,w)$ request in $u$ that we denote by $a$ in the following.
 We argue that the adversary can force $u$ to forward $a$ to $v$.
 Therefore note the following:
 \begin{enumerate}
  \item As long as $u$ does not receive any further messages, $u$ does not know any other node, so $v$ is the only possible next hop for $a$.
  \item If $u$ tries to wait for a certain amount of time before sending $a$, the adversary simply halts the system for that time, i.e., no messages are delivered in that timeframe and the system state stays the same.
  \item If $u$ requires the receipt of another message in order to forward $a$, the adversary imitates this message at $u$.
  \item If $u$ relies on its internal state to forward $a$, the adversary changes the initial state of $u$ such that it does not forward any message, which contradicts the assumption that non-trivial monotonic searchability is satisfied.
  Therefore, $u$ must not rely on its state to forward $a$.
  \item There are no other conditions that $u$ can wait on.
 \end{enumerate}
 Therefore, $u$ will send out $a$ to $v$ eventually. At the point in time when $u$ does so, we issue a second $search(u,w)$ request in $u$.
 For similar reasons as stated above, $b$ must be sent to $v$ at some point in time as well.
 
 Since both messages are in $v.Ch$ and the adversary is allowed to decide message speeds, it lets $v$ receive $b$ first.
 Node $v$ has no explicit edge to $u$ and the adversary can enforce that the implicit edge $(v,w)$ will not be received by $v$ until $v$ handles $b$. Therefore $b$ must be answered with ``FAIL'' at some point in time (since the $b$ cannot be forwarded anymore) and $u$ will be informed about that.

 Next, the adversary causes the edge $(v,w)$ to arrive at $v$.
 Since the protocol must stabilize to the line, at some point in time, the edge $(v,w)$ will be established.
 Until then, the adversary withholds message $a$ in $v.Ch$.
 Afterwards, when $a$ arrives at $v$, it can be forwarded to $w$ and thus correctly served.
 
 Therefore, message $a$ succeeds, whereas message $b$ that was sent after message $a$ fails.
 This is a contradiction to the assumption that the protocol achieves non-trivial monotonic searchability. 
\end{proof}

Consequently, to prove non-trivial monotonic searchability for a protocol (according to a given routing protocol $R$) it is sufficient to show that: (i) the protocol has a computation suffix in which every state is admissible and (ii) the protocol guarantees non-trivial monotonic searchability according to $R$ in admissible states.

For the \fdp, it was shown in~\cite{departure1}, there is no distributed protocol within our model that can decide when it is safe for a node $u$ to leave the system and thereby solve the \fdp.
The authors circumvent this impossibility result with the help of oracles.
In general, an \emph{oracle} is a predicate that depends on the current system state and the node calling it. 
In the context of the \fdp, an oracle is supposed to advise a leaving node when it is safe to execute \textbf{exit}.
We use the oracle \nidec as introduced in~\cite{departure1} in order to solve the \fdp. 
\nidec evaluates to \textbf{true} for a node $u$ calling it, if no node $v \neq u$ has a reference to $u$ in its local memory or in a message in $v.Ch$ and if $u.Ch$ is empty. For an in depth discussion of oracles for the \fdp, we refer the reader to~\cite{departure1,KoutsopoulosSS15}.

\section{The \blp and the \srp protocols}
\label{sec:blpAlgorithm}

In this section, we present the \blp protocol and the \srp protocol. \blp solves the linearization problem and is admissible-message satisfying non-trivial monotonic searchability according to \srp.
Note that any protocol satisfying non-trivial monotonic searchability must be admissible-message satisfying as shown in Section~\ref{sec:preliminaries}.
This section is organized as follows: 
First, we describe \blp and \srp in detail (Subsection~\ref{subsec:blp_desription}).
Then, we prove that the \blp protocol solves the linearization problem (Subsection~\ref{sec:self_stabilization_proof}).
Last, we prove that the \blp protocol satisfies non-trivial monotonic searchability according to \srp (Subsection~\ref{sec:monotonic_searchability_proof}).
From now on we drop the "according to \srpwithoutxspace" clause, since we only consider searchability for \srp.

\subsection{Description of \blp and \srp}
\label{subsec:blp_desription}
The \blp Protocol builds upon the protocol introduced in~\cite{self-stabilizing-list} that solves the linearization problem.
For this protocol, every node only keeps a single left and right neighbor. 
If a node $u$ receives a reference of a node $v$ with $u<v$ ($u>v$, respectively), $u$ either saves $v$ as its new right (left) neighbor if $v$ is closer to $u$ than the current right (left) neighbor $w$ and delegates the reference of $w$ to $v$ or (in case $v$ is not closer), $v$ is not saved and delegated to $w$.
Here, \emph{delegation} means that the reference of a node is sent in a message to another node and not kept in the local memory.
A natural (local) search protocol for this topology is to always forward search requests to the neighbor closest to the desired target node, or to abort the search request in case no such neighbor exists.
Note that these easy and elegant protocols cannot guarantee monotonic searchability due to three simple facts: (i) due to delegation, it is possible that an explicit edge $(u,v)$ is replaced by an explicit edge $(u,w)$ and an implicit edge $(w,v)$, (ii) consequently, $u,v$ are not in the same weakly connected component in $ENG$ (even though they were before delegation) and (iii) searchability is defined for $ENG$.

The \blp protocol introduces the following changes in order to satisfy monotonic searchability:
Instead of having a single left and right neighbor, a node $u$ has sets of neighbors $Left$ and $Right$ (that it sorts implicitly according to id).
In the following, whenever we use the notation $Left(u)$/$Right(u)$, we refer to these sets of a node $u$.
The main principle that we use is that every node $w$ does not delegate any edge to a node $v$ stored in $Left(w)$ or $Right(w)$ directly.
Instead it first introduces (using \introduce{v,w}) this node to another node $u$, waits for an acknowledgement that the edge has been added to $Left(u)$ or $Right(u)$ (which is basically the \linearize{v} message) and then delegates the edge to a node closer to $v$ (using \tempdelegate{v}).
More specifically, whenever a node $u$ has multiple neighbors to one side, it does not delegate edges to the closest neighbor directly, but does the following.
W.l.o.g. assume that it has multiple neighbors $w_1,\ldots,w_\ell$ to the right with $id(w_i)<id(w_{i+1})$.
In the \timeout action $u$ introduces $w_i$ to $w_{i-1}$, with an \introduce{w_i,u} message.
Thereby, $w_{i-1}$ knows that it got the reference from $u$, saves the reference to $w_i$ directly, sends a \linearize{w_i} message back to $u$ and a \tempdelegate{u} to itself (the latter is only to preserve connectivity).
Node $u$ can now react to that \linearize{w_i} message, by deleting $w_i$ from its memory and sending the reference to the closest node to the left of $w_i$ in $Right$ (which is not necessarily $w_{i-1}$ anymore). 
Thereby, $u$ preserves a path of explicit edges between $u$ and $w_i$.
Additionally, $u$ sends its own reference to the closest neighbors with a $\introduce{u,\bot}$ message who turn this into a \tempdelegate{u} message.
In general, the \tempdelegate{u} action is used to delegate an implicit edge to a node $u$ into one direction (i.e., to the left or to the right) as long as there is a node between the current node and $u$ in $Left$ or $Right$.
Note that implicit edges are not used for search, thus we do not have to apply the principle of introducing first and delegating afterwards for this kind of edges.
However, we have to delegate in order to preserve connectivity and to stabilize to the line eventually.
%Whenever $u$ is introduced to a node $v$, it distinguishes between a $\introduce{v,\bot}$ and a $\introduce{v,x}$ message, where $x$ is another node. 
%In the first case, the message is a self-introduction of $v$, and can thereby be handled as in the original protocol (i.e., \tempdelegate{}). 
%In the latter case, $u$ saves the reference of $v$ and acknowledges this by sending a \linearize{v} to $x$ (which in its turn can now classically delegate $v$ if needed). 
%If $v$ is not the closest neighbor of $u$ in its corresponding neighbor set, $u$ will introduce $v$ to the closest neighbor according to its neighborhood set, thereby delegating $v$ away eventually.
Note that, even though a node has temporarily more references than necessary for the final line topology our protocol still eventually stabilizes to the line, as we will show later.
The pseudocode for all \blp actions is given in Listing~\ref{algo:blp}.
Note that a node refers to itself with the expression $self$.
Additionally, keep in mind that the timeout action is the only action that is not triggered as a result of another action.
Instead, is triggered regularly.

\begin{lstlisting}[mathescape=true,float=*,caption=\blp protocol,label=algo:blp]
$\timeout$
 for all $destID \in Waiting$
   send $forwardProbe(self, destID, \{self\}, self.seq)$ to $self$
 //Let $Left = \{v_1, v_2, \dots, v_k\}$ with $id(v_1) < id(v_2) < \dots < id(v_k)$ 
 for all $v_i \in Left$ with $1 \leq i < k$
   send $\introduce{v_i,self}$ to $v_{i+1}$
 //Let $Right = \{w_1, w_2, \dots, w_l\}$ with $id(w_1) < id(w_2) < \dots < id(w_l)$    
 for all $w_i \in Right$ with $1 < i \leq l$
   send $\introduce{w_i,self}$ to $w_{i-1}$
 send $\introduce{self,\bot}$ to $v_1$
 send $\introduce{self,\bot}$ to $w_1$

$\introduce{v,w}$ 
 if($id(v) < id(self)$)
   if{$w \neq \bot$}
     $Left \gets Left \cup \{v\}$
     send $\linearize{v}$ to $w$
     send $\tempdelegate{w}$ to $self$
   else //$w = \bot$
     send $\tempdelegate{v}$ to $self$
 else if($id(v) > id(self)$) 
   //Analogous to the previous case.
    
$\linearize{v}$
 send $\tempdelegate{v}$ to $self$
 if($id(v) < id(self)$)
   if($Left \neq \emptyset$)
     $x \gets argmax \{id(x') | x' \in Left \}$
     if($v \neq x$)
       $w \gets argmin \{id(w') | w' \in Left\ und\ id(w')>id(v)\}$
       $Left \gets Left \setminus \{v\}$
       send $\tempdelegate{v}$ to $w$
 else if($id(v) > id(self)$) 
   //Analogous to the previous case. 
 
$\tempdelegate{u}$
 if($id(u) < id(self)$)
   if($Left = \emptyset$)
     $Left \gets Left \cup \{u\}$ 
   else //$Left \neq \emptyset$
     $x \gets argmax \{id(x') | x' \in Left \}$
     if($id(x) < id(u)$)
       $Left \gets Left \cup \{u\}$
     else if($id(x) > id(u)$)
       send $\tempdelegate{u}$ to $x$
 else if{$id(u) > id(self)$} 
   //Analogous to the previous case. 

\end{lstlisting}

The \srp protocol works as follows:
Whenever the \initsearch{destID} action is called at a node $u$, $u$ creates a new \search{u,destID} message and starts to periodically initiate \forwardprobe{u,destID, \{u\}, self.seq} messages that it sends to itself. 
In the following, assume $id(u)<destID$ (the other case is analogous).
Each \forwardprobe{} message has a set of nodes, called $Next$ attached to it, which contains the nodes the message will visit in its future. 
It also has a counter $seq$ attached to it whose meaning we will explain later.
Whenever a \forwardprobe{u,destID, Next, seq} message is at a node $w$, $w$ removes itself from $Next$ and adds all its right neighbors $x$ with $id(x) \leq destID$ to $Next$. 
Then it forwards the \forwardprobe{u,destID, Next, seq} message to the node with minimal id in $Next$.
If a \forwardprobe{u,destID, Next, seq} message arrives at a node $v$ with $id(v)=destID$, it directly responds with a \psuccess{destID,seq, v} message to $u$.
However, if $Next$ is empty at a node $w$ with $id(w) \neq destID$ after $w$ has added the aforementioned right neighbors, the \forwardprobe{} message is answered with a \pfail{destID,seq} message.
In any case, as soon as $u$ receives the response, it acts accordingly: If the answer to a \forwardprobe{u,destID,Next,seq} message is a \pfail{destID,seq} message, it drops the corresponding \search{u,destID} message completely.
If the answer is \psuccess{destID,v}, \search{u,destID} messages waiting at $u$ are directly sent to $v$.

Note that if additional \search{u,destID} messages are created at $u$ while $u$ is still waiting for an answer to an earlier initiated \forwardprobe{u,destID}, these requests simply wait together with the previous request (realized by simple $WaitingFor[destID]$ field) and are aborted or sent as soon as the \pfail{destID} or \psuccess{destID,v} response arrives at $u$, (i.e., search requests to the same destination are sent out in batches if possible).
Furthermore, note that nodes do not memorize whether they have already sent \forwardprobe{} messages to a certain destination. Due to corrupt initial states, this knowledge could be wrong and nodes relying on this knowledge would wait forever.
Therefore, nodes periodically send \forwardprobe{} messages, instead of only once.
Note that because we make no assumptions on the message delivery speed and channels are not FIFO, it is possible that \pfail{} messages arrive at a node $u$ that are answers to \forwardprobe{} messages initiated long ago.
However, in the meantime, there might have been successful responses.
To deal with this, each node $u$ stores a sequence number counter $seq$.
Whenever \initsearch{destID} is executed by $u$ and there is no \search{u,destID} that waits for an answer to a \forwardprobe{u,destID, Next, seq} message, $u$ increments $u.seq$, stores the new $u.seq$ value in an entry for $v$ and always attaches the current sequence number ($u.seq$) to each \forwardprobe{} message $u$ sends.
Responses to probes (success and failure) sent by $u$ also contain this sequence number.
Whenever a response is sent back to $u$, $u$ checks whether the sequence number in this message is at least the sequence number stored for $destID$.
If not, it simply drops the message, since in that case, the answer belongs to a \forwardprobe{} message sent for an earlier batch of \search{u,destID} messages that have already been processed.
The complete pseudocode for \srp is given in Listing~\ref{algo:search}.

\begin{lstlisting}[mathescape=true,float,caption=\srp protocol,label=algo:search]
$\initsearch{destID}$
 create new message $m=\search{self,destID}$
 if($WaitingFor[destID]=\emptyset$)
   $WaitingFor[destID] \gets \{\}$
   $self.seq \gets self.seq + 1$
   $seq[destID] \gets self.seq$
 //Store the messages to $WaitingFor$ 
 $WaitingFor[destID] \gets WaitingFor[destID] \cup \{m\}$ 

$\forwardprobe{source,destID,Next,seq}$
 if($destID = id(self)$)
   if($Next \neq \emptyset$)
     for all $u \in Next$
       send $\tempdelegate{u}$ to $self$
   send $\psuccess{destID, seq, self}$ to $source$
   send $\tempdelegate{source}$ to $self$
 else //$destID \neq id(self)$
   if($destID > id(self)$)
     $Next \gets Next\setminus \{self\} \cup \{ w \in Right | id(w) \leq destID\}$
     if($Next = \emptyset$)
       send $\pfail{destID, seq}$ to $source$
       send $\tempdelegate{source}$ to $self$
     else //$Next \neq \emptyset$
       $u \gets argmin\{ id(u) | u \in Next\}$
       if($id(u) < id(self)$)
         send $\tempdelegate{u}$ to $self$
       else if($id(u) < id(argmin\{ id(v) | v \in Right\})$)
         $Right \gets Right \cup \{u\}$
       send $\forwardprobe{source,destID,Next,seq}$ to $u$
   else if($destID < id(self)$) 
     //Analogous to the previous case.
 
$\psuccess{destID,seq,dest}$
 if($seq \geq seq[destID]$)
   /* The message belongs to currently  
    * stored search requests to $dest$. */
   send all $m \in WaitingFor[destID]$ to $dest$
   $WaitingFor[destID] \gets \emptyset$
 send $\tempdelegate{dest}$ to $self$
 
$\pfail{destID,seq}$
 if($seq \geq seq[destID]$)
   /* The message belongs to currently  
    * stored search requests to $dest$. */
   $WaitingFor[destID] \gets \emptyset$
\end{lstlisting}

In order to not unnecessarily blow up the pseudocode, we intentionally left out a sanity check for each node, i.e., before executing each action, each node $u$ makes sure that $Left$ only contains nodes $v$ with $v < u$ and that $Right$ only contains nodes $v$ with $u < v$.
If this is not the case for some node $v$, $u$ rearranges the reference to $v$ accordingly.
This way, in every computation, the following lemma holds:

\begin{lemma}\label{lem:left_and_right_are_what_they_say}
  For every node $v$ it holds: For all $x \in Left$, $id(x) < id(v)$, and for all $y \in Right$, $id(v)<id(y)$.
\end{lemma}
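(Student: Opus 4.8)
The plan is to treat the statement as a closure invariant maintained by every action and to prove it by induction over the states of an arbitrary computation. Write $P(v)$ for the predicate ``every $x \in Left(v)$ has $id(x) < id(v)$ and every $y \in Right(v)$ has $id(v) < id(y)$'', so that the lemma is exactly $\forall v: P(v)$. I would combine two ingredients: first, the sanity check prepended to each action \emph{unconditionally} makes $P(v)$ hold for the executing node $v$ at the moment its action body starts; second, no action body ever inserts a reference on the wrong side. Since the invariant concerns only a single node's own sets and each action only ever modifies the executing node's $Left$/$Right$, it suffices to reason node-locally.

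For the first ingredient I would simply invoke the definition of the sanity check: before $u$ runs any action it moves every reference in $Left(u)$ with identifier above $id(u)$ to $Right(u)$, and symmetrically, so $P(u)$ holds by construction once the body begins. For the second ingredient I would scan Listings~\ref{algo:blp} and~\ref{algo:search} for the statements that \emph{add} an element to $Left$ or $Right$; deletions (as in \linearize{v}) and actions that touch neither set (\timeout, \psuccess{destID,seq,dest}, \pfail{destID,seq}) preserve $P$ trivially. The only additions are: in \introduce{v,w}, where $v$ enters $Left$ only inside the branch guarded by $id(v) < id(self)$ (and symmetrically for $Right$); in \tempdelegate{u}, where $u$ enters $Left$ only under $id(u) < id(self)$; and in \forwardprobe{source,destID,Next,seq}, where $u$ enters $Right$. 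In each case the governing guard pins the inserted identifier to the correct side, so $P$ is preserved across the body.

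The step needing the most care, though still routine, is \forwardprobe{}: the insertion $Right \gets Right \cup \{u\}$ is not guarded by a direct comparison of $id(u)$ with $id(self)$. Here I would argue that it lies in the case $destID > id(self)$, that $u$ is the minimum-identifier element of $Next$ after $self$ has been removed from $Next$ (so $id(u) \neq id(self)$), and that the insertion branch is entered only when $id(u) \geq id(self)$ (the complementary branch handles $id(u) < id(self)$ via a \tempdelegate{u}); together these force $id(u) > id(self)$, so the addition is correct-sided. It then remains to treat each ``analogous to the previous case'' branch by exchanging the roles of $Left$ and $Right$ and reversing the identifier comparisons, and to address the one genuine timing subtlety: the predicate may fail for a node $v$ in the initial state, but because $Left(v)$ and $Right(v)$ are modified only by $v$'s own actions, each of which begins with the sanity check that re-establishes $P(v)$, the predicate $P(v)$ holds in every state from $v$'s first action execution onward --- and, as the \timeout action of $v$ is always enabled, such an execution is guaranteed. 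This is the sense in which $\forall v: P(v)$ holds throughout every computation.
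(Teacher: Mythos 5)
Your proof is correct and takes essentially the same route as the paper, which in fact offers no explicit proof at all: it simply asserts that the lemma holds "this way," i.e., as a direct consequence of the sanity check prepended to every action. Your action-by-action scan (including the only delicate insertion, in \forwardprobe{}, where uniqueness of identifiers and the removal of $self$ from $Next$ force $id(u) > id(self)$) and your remark that $P(v)$ can fail only before $v$'s first action execution merely make explicit what the paper leaves implicit.
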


\subsection{\blp solves the linearization problem}\label{sec:self_stabilization_proof}
In this section, we prove the following theorem:
\begin{theorem}\label{thm:blp_solves_linearization}
 \blp is a self-stabilizing solution to the linearization problem. 
\end{theorem}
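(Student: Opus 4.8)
The plan is to establish the two defining properties of self-stabilization—\emph{closure} and \emph{convergence}—separately, working throughout over the network graph $NG$ induced by both explicit and implicit edges and relying on the initial-state restrictions from Section~\ref{sec:preliminaries} (in particular weak connectivity and the absence of references to non-present nodes). I would use Lemma~\ref{lem:left_and_right_are_what_they_say} freely, so that the contents of $Left$ and $Right$ may be treated as genuinely lying to the left and to the right of the owning node.

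For \emph{closure} I would verify that the sorted line is a fixed point of \blp. In the line each node has exactly one element in $Left$ and one in $Right$, so the inner loops of \timeout over consecutive pairs are empty and the only messages generated are the two $\introduce{self,\bot}$ messages to the immediate neighbors. Tracing these through the $w=\bot$ branch of \introduce and the resulting \tempdelegate{} shows that a node merely re-derives a reference it already stores as its closest neighbor; by the guards of \tempdelegate{} (which store $u$ only when it is strictly closer than the current extreme neighbor) nothing is added or removed. Hence no action changes the configuration. This part I expect to be routine.

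For \emph{convergence} I would proceed through a sequence of lemmas. First, a \emph{connectivity-preservation} lemma: weak connectivity of $NG$ is an invariant under every action. The delicate cases are those that discard an edge—the delegation in \linearize{}, which removes an explicit edge $(p,v)$, and the forwarding branch of \tempdelegate{}, which consumes an implicit edge. In the first case one checks that $v$ is always handed to an intermediate node $x$ lying strictly between $v$ and $p$ while $p$ retains its explicit edge to $x$, yielding a two-hop replacement path $p$–$x$–$v$; the second case is analogous with $x$ the extreme stored neighbor. Second, a \emph{stability-of-correct-edges} lemma: if a node $p$ stores its true closest present neighbor on one side, it never deletes it, since \linearize{} discards a reference $v$ only when a stored reference lies strictly between $v$ and $p$, and \tempdelegate{} never discards at all. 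The heart of the argument is then a progress measure: I would define a well-founded potential $\Phi$ summing, over all references (those in $Left$/$Right$ and those carried by \introduce{}, \linearize{}, and \tempdelegate{} messages), the rank-distance between the node currently holding a reference and the node it points to, and argue that $\Phi$ never increases and strictly decreases whenever the configuration is not yet the line. The decrease is driven by the introduce-first/delegate-later handshake: a node with two right neighbors $w_{i-1}<w_i$ introduces $w_i$ to the closer node $w_{i-1}$, after which the reference to $w_i$ comes to rest strictly closer to $w_i$, and symmetrically on the left. Weak fairness ensures the responsible \timeout fires infinitely often, and fair message receipt ensures every \tempdelegate{} either stores its reference or forwards it strictly inward; since the id set is finite, every delegated reference eventually settles. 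Combining this with the two invariants, the only resting configuration is the line, and as $\Phi$ cannot decrease forever a legitimate state is reached.

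The main obstacle I anticipate is the potential argument under asynchrony and non-FIFO delivery. Because a reference is first \emph{duplicated} (the originator keeps $v$ until the \linearize{} acknowledgement returns) and only later removed, $\Phi$ must be defined so that these transient duplicates and the accompanying \tempdelegate{} traffic do not cause spurious increases; one has to reason about the full lifetime of the handshake rather than individual steps. A related technical point is that \timeout also emits the \forwardprobe{} messages of \srp, which can temporarily insert references into $Right$; I would argue that such insertions only add references already present in the component (so they respect connectivity and contribute boundedly to $\Phi$) and are themselves subject to the same inward-settling behaviour, so that they do not obstruct convergence.
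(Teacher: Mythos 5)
There is a genuine gap, and it sits exactly at the point you flagged as an ``obstacle'': your potential function does not exist in the form your argument needs. You define $\Phi$ as the sum of rank-distances over \emph{all} references, including those carried by \introduce{}, \linearize{}, and \tempdelegate{} messages, and you claim $\Phi$ never increases. But the protocol's core handshake makes this false: whenever \timeout fires at a node $u$, it sends \introduce{w_i,self} to $w_{i-1}$ and \introduce{self,\bot} to its closest neighbors, each of which creates fresh message-borne copies of references (to $w_i$ and to $u$ itself) \emph{while $u$ retains its own copies}. So every \timeout execution strictly increases your $\Phi$, and since \timeout fires infinitely often this happens forever---even in the legitimate line state. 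Duplication is inherent to the introduce-first/delegate-later design, so no potential that counts all copies of references can be monotone; defining a quantity that is immune to this is not a technical detail to be deferred but the crux of the whole convergence proof. Your second claim, that $\Phi$ ``strictly decreases whenever the configuration is not yet the line,'' also cannot be repaired: in an asynchronous system no potential decreases at every step (decrease can only be guaranteed \emph{eventually}, via fairness), and moreover superfluous explicit edges can persist while every node already stores its correct closest neighbors, i.e., while any distance-based potential is already at its minimum.

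The paper resolves both issues differently. Its potential counts only the distances to the \emph{closest} stored neighbors, $nextRight(v)$ and $nextLeft(v)$ (with penalty $n$ for empty sets); this is genuinely non-increasing because \linearize{} only ever discards a reference that is \emph{not} the closest one, so in-flight duplicates are simply invisible to $\Phi$. Since $\Phi$ is bounded below it eventually stabilizes, and the paper then argues by \emph{contradiction} rather than by per-step decrease: if after stabilization the closest-neighbor graph were disconnected, weak connectivity of $NG$ supplies a cross-component (possibly implicit) edge, and Lemmas~\ref{lem:non_temporary_will_become_expl_or_temporary} and~\ref{lem:temporary_will_shorten_or_phi_will_decrease} show such an edge eventually yields ever shorter cross-component explicit/temporary edges, forcing a $\Phi$ decrease after finitely many shortenings---contradiction (Lemma~\ref{lem:closest_neighbor_graph_bidirected_and_strongly_connected}). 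This only establishes that $ENG$ becomes a \emph{supergraph} of the line; the removal of superfluous explicit edges, which your potential argument cannot deliver, is handled by a separate induction over the leftmost node with a superfluous or ``relevant'' edge (Lemma~\ref{lem:superfluous_edges_will_vanish}), and it is this lemma, not a fixed-point check, that also yields closure. Your connectivity-preservation and stability-of-correct-edges lemmas do match the paper's, but without a monotone quantity and without the superfluous-edge-removal step the proposed proof does not go through.
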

We prove the theorem in three steps: 
First, we show that starting from any initial state in which $NG$ is weakly connected, $NG$ will always be weakly connected.
Second, we show that starting from any initial state, there will be a state in which $ENG$ will be a supergraph of the line graph and that the explicit edges corresponding to the line will never be removed.
Third, we prove that all superfluous explicit edges will eventually vanish.

The first step is represented by the following lemma:
\begin{lemma}\label{lem:NG_remains_weakly_connected}
 If a computation of \blp starts from a state where $NG$ is weakly connected then in every state, $NG$ remains weakly connected.
\end{lemma}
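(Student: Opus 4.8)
The plan is to prove the statement as a closure property by induction over the computation. The initial state is weakly connected by assumption, so it suffices to show that every single action of \blp (including the search actions that are triggered through the \timeout action) maps a weakly connected state of $NG$ to a weakly connected state. Since one computation step consists of exactly one atomic action execution, the whole lemma reduces to a finite case distinction over the action types, where in each case I track precisely which directed edges of $NG$ the action deletes and which it creates.

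The tool I would use throughout is a local \emph{reference-conservation} argument: for every directed edge $(a,b)$ that an action removes, I argue that $a$ and $b$ are still joined by an undirected path in the resulting $NG$. Edges are lost in exactly two ways, namely when a node drops a reference from $Left$/$Right$ and when an incoming message carrying references is consumed (processing removes it from the channel). The design of \blp is tailored to compensate for both: whenever a node would lose an edge to some node $r$, it either (i) re-stores $r$ as an explicit edge, (ii) sends \tempdelegate{r} to itself, re-creating the implicit edge, or (iii) forwards $r$ to a node $x$ to which it provably keeps an edge, so that the path $a$--$x$--$r$ survives. Concretely, \timeout only adds implicit edges and deletes none, so it is trivial; in \introduce{v,w} the stored or forwarded copies of $v$ together with the \tempdelegate{} sent to $self$ cover every reference of the consumed message; in \linearize{v} the opening \tempdelegate{v} to $self$ preserves the edge to $v$ before $v$ is possibly removed from $Left$, and $v$ is additionally handed to the node $w\in Left$ that is still an explicit neighbor; and in \tempdelegate{u} the node either stores $u$ or, using Lemma~\ref{lem:left_and_right_are_what_they_say} to know that it lies between $self$ and $u$, forwards $u$ to the explicit neighbor $x$, keeping the path $self$--$x$--$u$. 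Each of these is a short, routine check.

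The hard part will be the \forwardprobe{} forwarding step, which is where connectivity is most easily lost: here the ``hub'' of the implicit edges carried by the probe migrates from $self$ to the chosen next hop $u$ (the node of minimum identifier in the updated $Next$), while the old edges out of $self$ are destroyed. The crux is to show that $u$, and hence everything the forwarded message now anchors, remains weakly connected to $self$ in every branch. I would split according to how $u$ compares with $self$ and with the closest right neighbor: if $u$ lies to the left of $self$, the \tempdelegate{u} sent to $self$ restores the edge $(self,u)$; if $u$ lies to the right and is closer than every current right neighbor, $u$ is added to $Right$ and the edge becomes explicit; in the remaining branch one must argue that $u$ coincides with, or is joined through, the closest explicit right neighbor that is simultaneously contained in the updated $Next$. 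Pinning down this last sub-case, and in particular verifying that it really covers all configurations reachable from a weakly connected state (rather than leaving room for a probe whose $Next$ still carries references ``overshooting'' $destID$), is the delicate point of the whole proof and the step I would spend the most effort on getting exactly right. By contrast, \psuccess{} and \pfail{} only re-anchor their $source$/$dest$ references through \tempdelegate{} sent to $self$ and are as routine as the core \blp actions.
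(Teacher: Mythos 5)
Your plan is, in outline, the same as the paper's proof: a per-action reference-conservation argument in which every reference that is dropped from $Left$/$Right$ or consumed from a channel is shown to be re-anchored so that a compensating undirected path survives. Your handling of \timeout, \introduce{}, \linearize{}, \tempdelegate{}, \psuccess{} and \pfail{} coincides with the paper's two observations (received references are stored or re-sent towards a surviving neighbor; only \linearize{} deletes explicit edges, and it delegates the deleted reference to a remaining neighbor), so those cases are fine.

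The problem is the \forwardprobe{} sub-case that you explicitly leave open: it is a genuine gap, and for the protocol as written it cannot be closed. Take the branch where the next hop $u = argmin\{id(x) \mid x \in Next\}$ (after the update of $Next$) satisfies $id(u) > id(self)$ and $id(u) \geq id(nextRight(self))$ but $u \notin Right(self)$: the protocol neither adds $u$ to $Right(self)$ nor sends \tempdelegate{u} to $self$; it merely forwards the probe. All implicit edges carried by the consumed message, namely $(self,source)$ and $(self,x)$ for $x \in Next$, are re-anchored at $u$, and no edge between $self$ and $u$ survives. Concretely, let $id(self)=5$, $Right(self)=\{a\}$ with $id(a)=10$, and let $self.Ch$ hold \forwardprobe{source,7,\{self,b\},seq} with $id(source)=1$, $id(b)=20$, where $source$ and $b$ have no other incident edges. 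This state is weakly connected, yet after $self$ executes the action the updated $Next$ is $\{b\}$, the probe is sent to $b$, and $NG$ splits into the components $\{self,a\}$ and $\{source,b\}$. Such an ``overshooting'' probe is only ruled out by the message invariants of Section~\ref{sec:monotonic_searchability_proof} (Invariant~3b would give explicit paths from $source$ to every node in $Next$, which restore connectivity), but this lemma is invoked from arbitrary initial states in the proof of Theorem~\ref{thm:blp_solves_linearization}, so admissibility is not available. You should also know that the paper's own proof does not close this case either: it disposes of all message receipts with the blanket claim that any forwarded reference goes to some $w \in Left(self)\cup\{self\}\cup Right(self)$, which is exactly what fails in this branch (it is also literally false when $id(u)<id(self)$, though there the accompanying \tempdelegate{u} to $self$ rescues the path). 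So your instinct about where the difficulty sits is correct, and sharper than the paper's proof; but as written your proposal establishes the lemma only for a protocol variant that, e.g., also sends \tempdelegate{u} to $self$ in that last branch, or under an additional admissibility hypothesis.
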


\begin{proof}
  First, note that in every action whenever a message with a reference to a node $v$ is received by a node $u$ then either $v$ is added to the set $Left(u)$ or $Right(u)$ or a new message is created with $v$ as a parameter and sent to a node $w \in Left(u) \cup \{u\} \cup Right(u)$.
  Thus, the implicit edge $(u,v)$ is replaced by a path $(u,w,v)$.
  
  Furthermore, the only action for that removes a reference to $v$ from one of the sets $Left(u)$ or $Right(u)$ is the \linearize{v} action.
  However, in \linearize{v}, if $v$ is removed from $Left(u)$ or $Right(u)$, $v$ is also introduced to a node $w$ in $Left(u)$ or $Right(u)$.
  Thus, the edge $(u,v)$ is replaced by a path $(u,w,v)$ in this case, too.
\end{proof}

For the second step of the proof of the theorem, we introduce the notation $nextLeft(u) := argmax \{id(v) | v \in Left(u)\}$ and $nextRight(u) := argmin \{id(v) | v \in Right(u)\}$.
Furthermore, let $length(u,v)$ for two nodes $u$ and $v$ denote the hop distance in the (ideal) line topology between $u$ and $v$.
We define $rv(v)$ for a node $v$ as $length(v,nextRight(v))$ if $Right(v) \neq \emptyset$ or as $n$ if $Right(v) = \emptyset$; we define $lv(v)$ analogously for $nextLeft(v)$.
%Accordingly, we define $lv(v)$ for a node $v$ as $length(v,nextLeft(v))$ if $Right(v) \neq \emptyset$ and as $n$ if $Right(v) = \emptyset$.
With this, we define a potential function $\Phi := \sum_{i=1}^{n-1}rv(v_i) + \sum_{i=2}^{n}lv(v_i)$ where $v_1 < v_2 < \dots < v_n$ are all nodes ordered by their id increasingly.
Notice that $\Phi$ is bounded from above by $2n(n-1)$ and from below by $2(n-1)$.
Also notice that according to the protocol, $nextLeft(v)$ ($nextRight(v)$) can only change if $v$ puts a node closer to $v$ than $nextLeft(v)$ ($nextRight(v)$) into $Left$ ($Right$).
Thus, $\Phi$ never increases.
We define the \emph{closest neighbor graph} as the graph $G_{NB}=(V,E_{NB})$ where $V$ is the set of all nodes and $(x,y) \in E_{NB}$ iff $y=nextRight(x) \lor y=nextLeft(x)$.
Furthermore, we say an edge is \emph{temporary} if it is an implicit edge due to a \tempdelegate{} message.
 All other types of implicit edges are called \emph{non-temporary}.
One can show the following:
\begin{lemma}\label{lem:closest_neighbor_graph_bidirected_and_strongly_connected}
 Assume there is a system state such that $\Phi$ does not decrease in any further step of the computation.
 Then $G_{NB}$ is bidirected and strongly connected.
\end{lemma}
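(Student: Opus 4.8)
The plan is to exploit that the hypothesis completely freezes the closest-neighbour pointers, and then to argue that every message the timeout keeps re-emitting must already be ``satisfied'', since any unsatisfied message would force an insertion that strictly decreases $\Phi$. First I would record the consequence of the hypothesis I use throughout: if $\Phi$ never decreases, then $nextLeft(v)$ and $nextRight(v)$ are the same in every state of the suffix, for every $v$. This is immediate from the remark preceding the lemma (a pointer changes only by inserting a strictly closer node), because such an insertion strictly decreases $rv(v)$ resp.\ $lv(v)$, hence $\Phi$. The reusable form of this is: in the suffix no \introduce{}, \linearize{}, \tempdelegate{} or \forwardprobe{} step may ever insert into some $Left(v)$ or $Right(v)$ a node strictly closer to $v$ than the current $nextLeft(v)$/$nextRight(v)$. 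In particular a \tempdelegate{} carrying $u$ that is handled at a node $x$ with $id(u)<id(x)$ is either consumed (when $nextLeft(x)=u$) or forwarded to $nextLeft(x)$; the cases $Left(x)=\emptyset$ and $id(nextLeft(x))<id(u)$ are impossible. Thus such a message travels strictly monotonically in id along $nextLeft$- (resp.\ $nextRight$-) pointers and can vanish only at a node that already has $u$ as its closest neighbour on that side.

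For bidirectedness I would take an edge of $G_{NB}$, w.l.o.g.\ $q=nextRight(p)$. In its timeout $p$ sends $\introduce{p,\bot}$ to $q$, which becomes a \tempdelegate{} for $p$ at $q$; processed infinitely often, the monotone-travel observation locates a node $z$ with $id(p)<id(z)\le id(q)$ and $nextLeft(z)=p$. Symmetrically, the timeout of $z$ delivers a \tempdelegate{} for $z$ to $p=nextLeft(z)$ (via the consecutive introduction of the second-closest left neighbour of $z$, or, if $p$ is its only left neighbour, via the $\introduce{z,\bot}$ sent to $v_1=p$). Handling this message at $p$ with $nextRight(p)=q$: if $id(z)<id(q)$ then $z$ would become a strictly closer $nextRight(p)$, which is impossible; hence $z=q$, so $nextLeft(q)=p$. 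The mirrored argument covers left edges, so every edge of $G_{NB}$ is bidirected.

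For strong connectivity I would first eliminate ``far'' explicit edges. If $y\in Right(x)$ with $y\neq nextRight(x)$, then in $x$'s timeout $y$ is introduced to the right neighbour of $x$ just left of $y$ and the returning \linearize{} removes $y$ from $Right(x)$; the holder of $y$ thus moves strictly closer to $y$ while $\Phi$ is preserved, unless $y$ becomes a new closest right neighbour somewhere, which decreases $\Phi$. As the holder id strictly increases and is bounded by $id(y)$, after finitely many steps $\Phi$ must decrease, contradicting the hypothesis; hence $|Left(v)|\le1$ and $|Right(v)|\le1$ everywhere, so $ENG=G_{NB}$. Since following $nextRight$ strictly increases the id, $G_{NB}$ is acyclic and, together with bidirectedness, has undirected maximum degree $2$; it is therefore a disjoint union of simple paths and is strongly connected iff it is a single spanning path. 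If it were disconnected, then by weak connectivity of $NG$ (Lemma~\ref{lem:NG_remains_weakly_connected}, under the standing assumption that computations start weakly connected) some message referencing a node $y$ would join two components. Forwarding it moves the reference monotonically toward $id(y)$ along the frozen pointers of its own component; as $y$ lies in another component the reference is never consumed as an existing closest neighbour, so it reaches a node beyond which the component has nothing between it and $y$, where $y$ is inserted as a new closest neighbour---again a $\Phi$-decrease. Hence $G_{NB}$ is a single spanning bidirected path, i.e.\ strongly connected.

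I expect the pointer-freezing and bidirectedness steps to be essentially mechanical consequences of ``closer insertion $\Rightarrow$ $\Phi$ decreases''. The delicate part will be the connectivity direction: one must show that \emph{no} edge---neither a far explicit edge nor any of the various message-borne implicit edges (\tempdelegate{}, \introduce{}, \linearize{}, and the references carried inside \forwardprobe{})---can persist across a putative boundary between two path-components. Making the slogan ``a stray reference is eventually absorbed as a new closest neighbour, forcing a forbidden $\Phi$-decrease'' precise and exhaustive over all message types is where the real work lies.
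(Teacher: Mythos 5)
Your overall strategy coincides with the paper's: use the non-decreasing-$\Phi$ hypothesis to freeze all $nextLeft$/$nextRight$ pointers, prove bidirectedness first (the paper's Lemma~\ref{lem:closest_neighbor_graph_bidirected}), conclude that components of $G_{NB}$ are lines (Corollary~\ref{cor:only_topology_is_line}), and then use weak connectivity of $NG$ to argue that any cross-component reference must eventually be inserted as a new closest neighbour, decreasing $\Phi$ --- a contradiction. Your bidirectedness argument is complete and in fact more careful than the paper's own: you explicitly handle the case in which the delegated reference of $p$ is consumed at an intermediate node $z$ with $nextLeft(z)=p$, $p<z\le q$, and then close that case with the symmetric step that sends a reference of $z$ to $p$, forcing $z=q$; the paper's proof of Lemma~\ref{lem:closest_neighbor_graph_bidirected} silently assumes the delegation chain never gets consumed before a contradiction arises.

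The gap is in the connectivity half, and you essentially name it yourself: making the absorption slogan ``precise and exhaustive over all message types'' is not deferred detail work but the bulk of the paper's proof (the eight-case Lemma~\ref{lem:non_temporary_will_become_expl_or_temporary} together with Lemma~\ref{lem:temporary_will_shorten_or_phi_will_decrease}), and it is nontrivial precisely because your slogan is false for some message types. A cross-component reference carried inside a \forwardprobe{source,destID,Next,seq} message does not travel toward the node it references: the probe moves toward $destID$, while the references to $source$ and to the members of $Next$ only ride along and are converted into \tempdelegate{} edges when the probe terminates (success, failure, or a leftward hop). So for these messages you must first prove that every probe terminates --- which requires its own monotonicity argument (a probe can hop leftward at most once, afterwards it only moves right within $Right$-sets, etc.) --- before any absorption reasoning applies; \psuccess{}, \introduce{v,w} and \linearize{} messages likewise need separate reductions rather than the \tempdelegate{} argument. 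Separately, your intermediate claim that $|Left(v)|\le 1$ and $|Right(v)|\le 1$ hold ``everywhere'', so that $ENG=G_{NB}$, is both unproven and false as stated: under frozen $\Phi$ a far explicit edge $(x,y)$ can terminate its migration by being absorbed at a node $w$ that already has $nextRight(w)=y$, which produces no $\Phi$-decrease and hence no contradiction, and the introduce/linearize mechanism keeps creating transient far edges throughout the suffix. What is true (and suffices) is only that a \emph{cross-component} explicit edge can never be absorbed and therefore eventually forces a $\Phi$-decrease; this is exactly what the paper's Lemma~\ref{lem:temporary_will_shorten_or_phi_will_decrease} establishes via the strictly-shrinking-length argument, without ever needing $ENG=G_{NB}$.
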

We prove this lemma step-by-step, starting with the following lemma:
\begin{lemma}\label{lem:closest_neighbor_graph_bidirected}
 Assume a system state such that $\Phi$ does not decrease in any further step of the computation.
 Then $G_{NB}$ is bidirected.
\end{lemma}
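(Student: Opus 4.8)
The plan is to argue by contradiction, after first recording that the hypothesis freezes the whole graph $G_{NB}$. Indeed, as already observed, $nextLeft(v)$ and $nextRight(v)$ change only when a strictly closer neighbour is inserted, and since the \linearize{v} action never removes the closest stored neighbour on either side, no term $rv(v_i)$ or $lv(v_i)$ can ever increase; constancy of $\Phi$ from the assumed state on therefore forces every term, and hence $nextLeft(\cdot)$ and $nextRight(\cdot)$ and thus $G_{NB}$, to be constant throughout the suffix. Because bidirectedness of $G_{NB}$ is invariant under swapping the two orientations, it suffices to prove that $y=nextRight(x)$ implies $x=nextLeft(y)$ (the left case being symmetric). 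Assuming this fails, I pick among all edges of $E_{NB}$ whose reverse is absent one, $(x,y)$ with $y=nextRight(x)$, that minimizes the gap $id(y)-id(x)$.

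The single tool I need is a ``no closer insertion'' principle: in the frozen suffix, no action that is actually executed may insert into $Left(u)$ (resp.\ $Right(u)$) a node strictly closer to $u$ than the current $nextLeft(u)$ (resp.\ $nextRight(u)$), nor any node at all when that side is empty, because such an insertion strictly decreases $lv(u)$ (resp.\ $rv(u)$) and hence $\Phi$. I activate it through the one guaranteed event: by weak fairness, \timeout fires infinitely often at $x$, and there $x$ sends \introduce{x,\bot} to its closest right neighbour $nextRight(x)=y$; by fair message receipt $y$ eventually processes it and, since $id(x)<id(y)$ and the second argument is $\bot$, posts a \tempdelegate{x} message to itself.

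I then follow this \tempdelegate{x}. At $y$, the empty-$Left(y)$ case and the case $id(nextLeft(y))<id(x)$ would both insert $x$ as a strictly closer left neighbour and are therefore excluded; the case $nextLeft(y)=x$ is exactly the desired conclusion; so the only surviving case is $x<nextLeft(y)<y$, in which the message is forwarded to $nextLeft(y)$. The same trichotomy recurs at every hop, so the message can only travel strictly leftwards along $nextLeft$-pointers through nodes of identifier larger than $id(x)$, never being inserted, and (identifiers being finite) it is eventually consumed at a node $c$ with $x<c\le y$ and $nextLeft(c)=x$. If $c=y$ we are done. Otherwise $c<y$; moreover $c\notin Right(x)$ (else $nextRight(x)$ would be $c$ or closer, contradicting $y=nextRight(x)$), so the edge $(c,x)\in E_{NB}$ has no reverse and gap $id(c)-id(x)<id(y)-id(x)$, contradicting minimality.

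The step I expect to be the crux is precisely the one the minimal-gap choice is designed to neutralize: a \tempdelegate{} message need not come to rest at the intended endpoint $y$ but may be relayed to intermediate nodes, and analysing that relay target directly is awkward; choosing the smallest-gap counterexample turns any such relay into a strictly smaller counterexample instead. Two further points require care: justifying that $G_{NB}$ is genuinely frozen (so that $y=nextRight(x)$ still holds when the timeout fires and the messages arrive), which is why I argue each potential term is individually non-increasing; and checking that the symmetric left-edge case really is symmetric, which it is because each node introduces itself to its closest neighbour on each side via the \introduce{p,\bot} messages.
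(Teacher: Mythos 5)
Your proof runs on the same engine as the paper's own: in a state from which $\Phi$ never decreases, $nextLeft(\cdot)$ and $nextRight(\cdot)$ (and hence $G_{NB}$) are frozen; \timeout at $x$ sends \introduce{x,\bot} to $y=nextRight(x)$; $y$ converts it into a \tempdelegate{x}, which then walks left along $nextLeft$-pointers; and every branch of the \tempdelegate{} code that would insert $x$ somewhere is excluded because such an insertion would strictly decrease $\Phi$. Where you genuinely go beyond the paper is at the end of the walk. The paper asserts that the message is delegated ``until it reaches a node $z$'' with $Left(z)=\emptyset$ or $nextLeft(z)<x<z$ and closes the proof there; it silently skips the third terminal possibility, namely that the message is \emph{consumed} at an intermediate node $c$ with $nextLeft(c)=x$ and $x<c<y$, in which case nothing is inserted and $\Phi$ does not move. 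Your minimal-gap choice is aimed exactly at this case and converts it into a smaller violation of bidirectedness, so on this point your argument is more careful and more complete than the paper's.

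There is, however, one concrete misstep to repair: an orientation mismatch in the minimality argument. You first reduce the lemma to the implication ``$y=nextRight(x)$ implies $x=nextLeft(y)$'' and then take the gap-minimal counterexample among reverse-absent \emph{right} edges. But the smaller counterexample your walk produces, $(c,x)$ with $nextLeft(c)=x$ and $(x,c)\notin E_{NB}$, is a reverse-absent \emph{left} edge, so as written it does not contradict minimality over the class you minimized over. The fix is short: drop the up-front reduction, take the gap-minimal counterexample over \emph{all} reverse-absent edges of $E_{NB}$ of either orientation, and invoke the left--right symmetry of the protocol only as a w.l.o.g.\ to assume that this globally minimal counterexample is a right edge; then $(c,x)$ contradicts global minimality. (Alternatively, iterate: apply the mirrored walk to $(c,x)$ and descend on the gap, which is a positive integer.) One caveat you share with the paper's own ``w.l.o.g.'': the symmetry requires that in \timeout every node introduces itself to its \emph{closest} neighbour on both sides, which is what the paper's prose says, but the pseudocode literally sends \introduce{self,\bot} to $v_1$, the \emph{farthest} left neighbour --- presumably a typo for $v_k$, and worth flagging rather than relying on tacitly.
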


\begin{proof}
 Assume for contradiction there exists an edge $(x,y) \in E_{NB}$ such that $(y,x) \notin E_{NB}$ and w.l.o.g. assume $x < y$.
 This implies $nextRight(x)=y$ and $x\neq nextLeft(y)$.
 Since $\Phi$ does not change any more, $y$ will remain $nextRight(x)$ and eventually by the fair action execution assumption, \timeout will be executed in $x$ and $x$ will send an \introduce{x,\bot} to $y$, which, by the fair message receipt assumption, will be eventually delivered to $y$.
 This implicit edge will turn into a temporary edge $(y,x)$.
 Note that if $Left(y) = \emptyset$ or $nextLeft(y) < x$, then, according to the protocol and because $x < y$, $nextLeft(y)$ will be replaced by $x$ causing $\Phi$ to decrease, which contradicts to the initial assumption.
 Therefore, $Left(y) \neq \emptyset$ and $x < nextLeft(y) < y$ must hold.
 According to the protocol, $(y,x)$ will be delegated (first to $nextLeft(y)$, then possibly further) until it reaches at a node $z$ with $z = \emptyset$ $nextLeft(z) < x < z$.
 Here similar arguments as above yield a contradiction. 
 Thus, $G_{NB}$ must be bidirected.
\end{proof}
The definition of a closest neighbor graph and Lemma~\ref{lem:left_and_right_are_what_they_say} imply the following:
\begin{corollary}\label{cor:only_topology_is_line}
 If $G_{NB}$ is bidirected and disconnected, every connected component forms a line.
\end{corollary}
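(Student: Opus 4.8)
The plan is to exploit the fact that, by the very definition of $G_{NB}$ together with Lemma~\ref{lem:left_and_right_are_what_they_say}, each node can be incident to only two specific edges. Concretely, the outgoing edges of a node $v$ in $G_{NB}$ go exactly to $nextLeft(v)$ and $nextRight(v)$ (whenever these exist), and Lemma~\ref{lem:left_and_right_are_what_they_say} guarantees $id(nextLeft(v)) < id(v) < id(nextRight(v))$. Hence $v$ has at most one outgoing edge to a node of smaller id and at most one outgoing edge to a node of larger id. First I would translate this into a degree bound for the undirected graph underlying the bidirected $G_{NB}$: since an edge $(u,v)$ is present iff $(v,u)$ is, any edge incident to $v$ must also be emitted by $v$ itself, so $v$ is (undirectedly) adjacent only to $nextLeft(v)$ and $nextRight(v)$. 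Consequently every node has undirected degree at most two.

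A connected graph in which every vertex has degree at most two is either a simple path or a simple cycle, so it remains to rule out cycles within a component. Here I would take the node $x$ of maximum id in a hypothetical cycle: both of its cycle-neighbors have smaller id than $x$, and therefore each of the two corresponding edges emitted by $x$ would have to point to $nextLeft(x)$ by Lemma~\ref{lem:left_and_right_are_what_they_say}. Since $nextLeft(x)$ is a single node, $x$ cannot have two distinct smaller neighbors, a contradiction. Thus no component contains a cycle, and every component is a simple path.

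Finally I would argue that such a path is necessarily sorted by id, i.e.\ that it is a line. Every internal node of the path has exactly two neighbors; by the same counting as above it cannot have two smaller or two larger neighbors, so it has precisely one neighbor of smaller id (its $nextLeft$) and one of larger id (its $nextRight$). Traversing the path from one endpoint therefore yields a strictly monotone sequence of ids, with consecutive nodes being mutual closest neighbors; this is exactly the line topology restricted to the component.

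The step I expect to be most delicate is the degree-bound argument: one has to use bidirectedness carefully to be sure that no edge incident to $v$ is overlooked---in particular that an edge arriving at $v$ from a node $u$ with $nextLeft(u)=v$ or $nextRight(u)=v$ does not introduce a third neighbor, which is precisely where bidirectedness collapses such an edge back onto $nextRight(v)$ or $nextLeft(v)$.
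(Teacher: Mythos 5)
Your proof is correct and is precisely the elaboration of what the paper asserts: the paper gives no explicit proof, stating only that the corollary follows from the definition of $G_{NB}$ together with Lemma~\ref{lem:left_and_right_are_what_they_say}, and your degree-bound, cycle-exclusion, and monotonicity arguments use exactly those two ingredients. No gaps; the disconnectedness hypothesis is indeed never needed, as your argument shows the conclusion holds componentwise regardless.
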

To show that $G_{NB}$ is also strongly connected, we need two additional lemmata.
We start with the following:
 \begin{lemma}\label{lem:non_temporary_will_become_expl_or_temporary}
  Assume that in a state of the computation of \blp $G_{NB}$ is bidirected and disconnected.
  If there is a non-temporary edge $(w,v)$ with $w \in C_1, v \notin C_1$ for a connected component $C_1$, then eventually either there will be an explicit or a temporary edge $(x,y)$ with $x \in C_1$ and $y \notin C_1$ or $\Phi$ will decrease.
 \end{lemma}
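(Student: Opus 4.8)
The plan is to assume that $\Phi$ never decreases in the remaining computation (otherwise the second disjunct of the conclusion already holds) and, under this assumption, to exhibit a temporary or explicit edge $(x,y)$ with $x\in C_1$ and $y\notin C_1$. Since $\Phi$ never decreases, no node ever acquires a new closest neighbor, so $G_{NB}$ and its partition into components — each a line by Corollary~\ref{cor:only_topology_is_line} — stay fixed. Writing $C_1=\{c_1<\dots<c_k\}$, the fact that $C_1$ is a component of the bidirected graph $G_{NB}$ forces $Left(c_1)=\emptyset$ and $Right(c_k)=\emptyset$ for its extreme nodes; this boundary property is what I will eventually exploit.

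The non-temporary edge $(w,v)$ corresponds to exactly one non-\tempdelegate{} message $m$ in $w.Ch$ carrying the reference of $v$, and by fair message receipt $m$ is eventually processed. I would then split into cases on the type of $m$. If $m$ is \linearize{v}, or \psuccess{destID,seq,v}, or an \introduce{v',v} in which $v$ occurs as the \emph{second} parameter, or a \forwardprobe{source,destID,Next,seq} with $destID=id(w)$, then the action's first effect on the reference of $v$ is to send a \tempdelegate{v} to $w$ itself, immediately creating the temporary crossing edge $(w,v)$. If $m$ is an \introduce{v,c} with $v$ as the \emph{first} parameter, then $v$ is either inserted into $Left(w)$ or $Right(w)$ (an explicit crossing edge, and a decrease of $\Phi$ if $v$ is closer than the current closest neighbor on that side) or converted into a \tempdelegate{v} to $w$. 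In all of these cases the conclusion holds after a single step.

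The genuinely hard case is $m=\forwardprobe{source,destID,Next,seq}$ with $destID\neq id(w)$ and $v\in Next\cup\{source\}$, where the action merely forwards the probe. I would track the probe together with the outside reference it carries, establishing two facts. First, under the no-decrease assumption the probe never executes its explicit-insertion branch, since that branch only fires when the inserted node becomes a new closest neighbor, which would decrease $\Phi$. Second, a reference leaves $Next$ only at the node equal to it, and $source$ is never dropped before the terminal step; hence, as long as the probe is at a node of $C_1$ while $v\notin C_1$, the reference of $v$ stays in the message and a crossing non-temporary edge persists. I would then argue, using weak connectivity (Lemma~\ref{lem:NG_remains_weakly_connected}), that forwarding either emits a \tempdelegate{} at the current $C_1$-node pointing outside $C_1$ (the desired crossing edge), terminates with a \tempdelegate{source} at a $C_1$-node (again the desired edge, taking $v=source$), or passes the probe along inside $C_1$. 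Confined to the finite line $C_1$, the probe must reach the boundary node $c_k$ (respectively $c_1$); there $Right(c_k)=\emptyset$, so forwarding the still-present outside reference $v$ can only be realized either as a \tempdelegate{v} at $c_k$ — a crossing temporary edge — or as an insertion of $v$ into the empty set $Right(c_k)$, which makes $v$ the new $nextRight(c_k)$ and decreases $\Phi$, contradicting the assumption.

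I expect this forwarding step to be the main obstacle. The difficulty is that a corrupt initial $Next$ may hold references that are neither neighbors of the current node nor part of any sensible route, and such junk could in principle let the probe ``jump over'' $nextRight(w)$ and escape $C_1$ via a pure forward without depositing any temporary or explicit edge. The delicate point — which I believe requires invoking weak connectivity together with a dedicated well-founded measure on the pair (position of the probe, multiset $Next$) — is to rule out that such escapes silently destroy every crossing edge: I would show that each forwarding hop either keeps a crossing non-temporary edge alive at a node of $C_1$, produces a crossing temporary edge, or forces a closest-neighbor change and hence a decrease of $\Phi$. Once this monotone-progress statement is pinned down, finiteness of $C_1$ closes the argument.
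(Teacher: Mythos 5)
Your handling of the non-forwarding cases (\linearize{v}, \psuccess{destID,seq,v}, \introduce{} with the outside reference in either parameter position, and a probe arriving at a node whose id equals $destID$) is correct and coincides with the paper's case analysis. The gap sits exactly where you predicted it, and your proposed repair does not close it: the ``monotone-progress'' statement you hope to pin down --- every forwarding hop either keeps a crossing non-temporary edge alive at a node of $C_1$, creates a crossing temporary edge, or forces $\Phi$ to decrease --- is false. Concretely, suppose the probe sits at $x \in C_1$ with $id(x) < destID < id(nextRight(x))$, and the corrupt $Next$ still contains a leftover reference $u \notin C_1$ with $id(u) > id(nextRight(x))$. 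Then the update adds nothing to $Next$ (no right neighbor of $x$ has id at most $destID$), $u$ is selected as the minimum, no \tempdelegate{} is sent (since $id(u) > id(x)$), no insertion happens (since $u$ is not closer than $nextRight(x)$, so $\Phi$ is unchanged), and the probe --- carrying \emph{all} of its outside references, including $source$ --- is forwarded to $u \notin C_1$. After this one hop, $C_1$ may have no crossing edge in either direction. Appealing to Lemma~\ref{lem:NG_remains_weakly_connected} cannot rescue this, because this very hop is the step of the protocol that violates the path-replacement argument used in that lemma's proof: the message is sent to a node outside $Left(x) \cup \{x\} \cup Right(x)$, so nothing ties $x$ to the departing bundle of references. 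A four-node instance realizes the escape: $C_1 = \{1,2\}$, $C_2 = \{3,4\}$ as two closest-neighbor lines, and a message \forwardprobe{3, 1.5, \{1,4\}, seq} in $1.Ch$; node $1$ forwards the probe to $4$ without trace, the probe then wanders inside $C_2$ and terminates there, $\Phi$ never decreases, and the lemma's conclusion never holds. So the step you flagged does not merely ``require a dedicated well-founded measure''; as stated it fails outright.

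Two further points of comparison. First, your boundary argument presupposes the probe reaches $c_k$, but the probe moves right only while $destID$ permits; if $destID < id(c_k)$ it never gets there, so confinement-plus-boundary could not be the whole story even absent the escape problem (your observation that insertion into the empty set $Right(c_k)$ decreases $\Phi$ is fine, but it is only reached in special runs). Second, the paper's own proof deals with the forwarding case by asserting that, when $\Phi$ does not decrease, the forwarding target must equal the current node's closest right neighbor (``due to the way $Next$ was updated, $minRight(w) = u$ must hold''), i.e., the probe provably never leaves $C_1$. That assertion is justified only when $id(nextRight(w)) \leq destID$, in which case $nextRight(w) \in Next$ and the minimum cannot exceed it; in the overshoot situation above it is simply wrong. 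So the paper's proof silently dismisses precisely the scenario you identified. In short, your diagnosis of the difficulty is sharper than the paper's write-up, but your proposal leaves the decisive step open, and the resolution you sketch cannot work in the form you state it.
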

 
 \begin{proof}
W.l.o.g., assume $w<v$.
First of all, note that according to the protocol, if the graph $G_{NB}$ changes, $\Phi$ must decrease.
Since in that case we are done, in the following we assume that $G_{NB}$ will never change.
Furthermore, by Corollary~\ref{cor:only_topology_is_line}, the connected components of $G_{NB}$ form a line.
We now make a case distinction over all possible types of $(w,v)$:
 \begin{enumerate}
  \item $(w,v)$ is an implicit edge from a \forwardprobe{m} message in which $v = source$ or $v \in Next$ and $id(w)=destID$. 
	Then once the message is received, $(w,v)$ will be turned into a temporary edge and the claim follows.
   \item $(w,v)$ is an implicit edge from a \forwardprobe{m} message in which $v=source$ and $destID > id(w)$.    
	Consider the state in which this message is received and the corresponding action is executed.
	Then $Next$ is updated according to the protocol.
	If $Next$ is empty after this operation, a temporary edge $(w,v)$ is established and the claim holds.	
	Otherwise, let $u:= argmin\{id(u)|u \in Next\}$ after the update.
	Note that if $u > w$, we have two sub-cases: Either $minRight(w) > u$ or $minRight(w) \leq u$.
	In the former case, $u$ will be added to $Right(w)$, causing $\Phi$ to decrease, and the claim holds.
	In the latter case, due to the way $Next$ was updated, $minRight(w) = u$ must hold.
	Applying the previous arguments recursively yields that the message will, at some point in time, reach at a node $x \in C_1$ where either $destID = id(x)$ or $Next = \emptyset$ after the update.
	In this case, a temporary edge $(x,v)$ will be established.
	\\
	Now, consider the case $u < w$.
	Again, we have two sub-cases: Either $u \notin C_1$ or $u \in C_1$.
	In the former case, since the protocol establishes the temporary edge $(w,u)$, the claim follows.
	In the latter case, the message will be forwarded to $u \in C_1$.
	According to the protocol, for $u' := argmin\{id(u')|u' \in Next\}$ after the update of $Next$, it holds $u' > u$. Thus, this case reduces to the other case above.
  \item $(w,v)$ is an implicit edge from a \forwardprobe{m} message in which $v=source$ and $destID < id(w)$.    
	This case is analogous to the previous one.
  \item $(w,v)$ is an implicit edge from a \forwardprobe{m} message in which $v\in Next$ and $destID > id(w)$.    	
	Note that in this case if a \forwardprobe{m} message is delegated from a node $x$ to a node $y < x$, then a temporary edge $(x,y)$ is also established.
	Then either $y \notin C_1$ directly proving the claim, or $y \in C_1$.
	Observe that each \forwardprobe{m} message can only be delegated from a node $x$ to a node $y < x$ once.
	Thus, either starting from the first or the second step, whenever a \forwardprobe{m} message is delegated from a node $x$ to a node $y$, then $y > x$.
	Furthermore, note that the protocol assures $y \in Right(x)$, i.e., $y \in C_1$ as well.
	The only case when a \forwardprobe{m} message is no longer delegated is if $Next$ is empty (in which there is nothing left to prove), or when $destID = id(x)$ for a node $x$.
	In the latter case, for each node remaining in $Next$, a temporary edge is created.
  \item $(w,v)$ is an implicit edge from a \forwardprobe{m} message in which $v\in Next$ and $destID > id(w)$.
	This case is analogous to the previous one.
  \item $(w,v)$ is an implicit edge from a \psuccess{} (in which $v$ is $Dest$) message and a temporary edge $(w,v)$ will be established.
  \item $(w,v)$ is an implicit edge from an \introduce{} message.
	Note that according to the protocol, all edges in an \introduce{} message are added either as explicit edges or as temporary edges.
  \item $(w,v)$ is an implicit edge from a \linearize{} message and $(w,v)$ will be turned into a temporary edge.
 \end{enumerate} 
\end{proof}
 \begin{lemma}\label{lem:temporary_will_shorten_or_phi_will_decrease}
  Assume that in a state of the computation of \blp $G_{NB}$ is bidirected and disconnected.
  If there is an explicit or a temporary edge $(w,v)$ with $w \in C_1$ and $v \notin C_1$ for a connected component $C_1$, then eventually there will be an explicit or temporary edge $(x,y)$ with $x \in C_1, y \notin C_1$ and $length(x,y)<length(w,v)$, or $\Phi$ will decrease.
 \end{lemma}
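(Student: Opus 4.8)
The plan is to reduce everything to tracking a single \tempdelegate{v} message as it is delegated toward $v$. Assume w.l.o.g.\ that $v < w$ (the case $v > w$ is symmetric, with $Left$, $nextLeft$, $lv$ replaced by $Right$, $nextRight$, $rv$). Recall the observation used in the proof of Lemma~\ref{lem:non_temporary_will_become_expl_or_temporary}: any change to $G_{NB}$ forces $\Phi$ to decrease. Hence throughout the argument I may assume that $G_{NB}$, and in particular the node $nextLeft(w)$, stays fixed, since otherwise $\Phi$ decreases and we are done. Because $G_{NB}$ is bidirected (Lemma~\ref{lem:closest_neighbor_graph_bidirected}), $nextLeft(w)$ lies in the same component as $w$, i.e.\ $nextLeft(w) \in C_1$; as $v \notin C_1$ this gives $v \neq nextLeft(w)$.

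I would first treat the case that $(w,v)$ is a \emph{temporary} edge, i.e.\ a \tempdelegate{v} message resides in $w.Ch$. When $w$ executes this action, I distinguish by the position of $v$: if $Left(w) = \emptyset$ or $nextLeft(w) < v$, then $v$ is inserted into $Left(w)$ and becomes the new, closer $nextLeft(w)$, so $lv(w)$ and hence $\Phi$ strictly decrease and we are done. Otherwise $v < nextLeft(w) < w$, and the action delegates the message to $x := nextLeft(w)$. This creates a temporary edge $(x,v)$ with $x \in C_1$, $v \notin C_1$, and, since $v < x < w$ in id order, $length(x,v) < length(w,v)$ --- exactly the desired shorter crossing edge.

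For the case that $(w,v)$ is an \emph{explicit} edge, I would reduce to the temporary case. Here $v \in Left(w)$, and since $nextLeft(w) \in C_1$ but $v \notin C_1$ we have $v < nextLeft(w) < w$; in particular $v$ is not the maximum element of $Left(w)$. By weak fairness, \timeout eventually fires at $w$ and sends \introduce{v,w} to the successor $v'$ of $v$ in $Left(w)$; the receiving node answers with a \linearize{v} message to $w$; and the very first line of the \linearize{v} action makes $w$ send \tempdelegate{v} to itself. This produces a temporary edge $(w,v)$ of the same length. Removing $v$ from $Left(w)$ in \linearize{v} does not change $nextLeft(w)$ (as $v$ was not the maximum), so the temporary case above applies verbatim and yields the claim. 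The remaining edges spawned during the handshake --- for instance the explicit edge $(v',v)$ and the temporary edge $(v',w)$ --- are irrelevant: each either decreases $\Phi$ or leaves $G_{NB}$ unchanged, and the argument only feeds on the self-addressed \tempdelegate{v} at $w$.

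The step I expect to be the main obstacle is precisely this explicit case. The subtlety is that an explicit neighbor is never relocated directly; it is moved only through the introduce-then-linearize handshake, and one has to notice that it is the seemingly redundant self-send of \tempdelegate{v} inside \linearize{v} that performs the reduction to the temporary case. A secondary point that needs care is arguing that, under the standing assumption that $\Phi$ does not decrease, $nextLeft(w)$ remains one fixed node of $C_1$ even though $Left(w)$ may gain or lose non-closest elements in the meantime, so that the delegation step reliably targets a node of $C_1$ strictly between $v$ and $w$.
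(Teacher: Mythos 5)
Your proof is correct, and its overall skeleton matches the paper's: the standing assumption that $\Phi$ never decreases (hence $G_{NB}$, and in particular $nextLeft(w)$, is frozen), a case split on temporary versus explicit, and delegation toward the closest neighbor. Your temporary-edge case is exactly the paper's: either $v$ becomes the new closest neighbor of $w$ and $\Phi$ drops, or the \tempdelegate{v} message is forwarded to $nextLeft(w) \in C_1$, giving the strictly shorter crossing edge. Where you genuinely diverge is the explicit-edge case. The paper (in your mirrored orientation) settles it in one step: \timeout at $w$ sends \introduce{v,w} to the neighbor $v'$ of $v$ in $Left(w)$, and that message --- which turns into an explicit edge $(v',v)$ upon receipt --- is already taken as the required shorter edge ending at $v$. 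You instead follow the handshake all the way around, \introduce{v,w} to $v'$, then \linearize{v} back to $w$, then the self-addressed \tempdelegate{v}, thereby reducing the explicit case to the temporary case at $w$ itself. Your detour costs a full round trip plus the (correctly supplied) argument that $nextLeft(w)$ stays one fixed node of $C_1$ in the meantime, but it buys precision on a point the paper's one-liner glosses over: the shorter edge you ultimately produce ends at $nextLeft(w)$, which bidirectedness guarantees to lie in $C_1$, whereas the paper's edge $(v',v)$ comes with no guarantee that $v' \in C_1$ (if $v' \notin C_1$, one must fall back on the explicit edge $(w,v')$, which is also shorter and also crosses components --- a case the paper leaves implicit). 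Both routes are sound; yours is more self-contained on exactly the step the paper compresses.
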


\begin{proof}
  W.l.o.g., assume $w<v$.
  First, assume $(w,v)$ is an explicit edge. 
  If $v = nextRight(w)$, we have a contradiction to the assumption $w \in C_1$ and $v \notin C_1$.
  Thus $w < nextRight(w) < v$ must hold. 
  In this case, in \timeout a new edge $(x,v)$ with $w < x < v$ will be introduced and the claim will hold.
  Second, assume that $(w,v)$ is an implicit edge from a \tempdelegate{} message. 
  Then either $v < nextRight(w)$ and $(w,v)$ turns into an explicit edge and $v$ becomes $nextRight(w)$, causing $\Phi$ to decrease, or a \tempdelegate{v} message is sent to $nextRight(w)$ resulting in a shorter edge $(nextRight(w),v)$.
  This completes the proof of the second claim.
\end{proof}

We are now ready to prove \textbf{Lemma~\ref{lem:closest_neighbor_graph_bidirected_and_strongly_connected}}:
\begin{proof}
  Assume there is an initial state in which $\Phi$ does not decrease anymore.
  Furthermore, assume that the closest neighbor graph $G_{NB}$ is disconnected.
  Firstly, Lemma~\ref{lem:closest_neighbor_graph_bidirected} guarantees that $G_{NB}$ is bidirected.
  Furthermore, by Lemma~\ref{lem:NG_remains_weakly_connected}, there must be at least one (implicit or explicit) edge $(w,v)$ between a connected component $C_1$ and another connected component.
  Together with Lemma~\ref{lem:non_temporary_will_become_expl_or_temporary} this implies that at some point there must be a temporary or explicit edge $(x,y)$ with $x \in C_1$ and $y \notin C_1$.
  However, then Lemma~\ref{lem:temporary_will_shorten_or_phi_will_decrease} can be applied.
  Since there is only a finite number of times that there can be a shorter edge, at some state, $\Phi$ must decrease, yielding a contradiction.
  Thus $G_{NB}$ must be weakly connected.
  Note that Lemma~\ref{lem:closest_neighbor_graph_bidirected} implies that $G_{NB}$ is also strongly connected, yielding the claim of Lemma~\ref{lem:closest_neighbor_graph_bidirected_and_strongly_connected}.
\end{proof}
Note that since $\Phi$ can never increase and since $\Phi$ is bounded from below, $\Phi$ can only decrease for a finite number of states.
After that, the conditions of Lemma~\ref{lem:closest_neighbor_graph_bidirected_and_strongly_connected} are fulfilled.
This lemma and Corollary~\ref{cor:only_topology_is_line} imply the following corollary:
\begin{corollary}\label{cor:explicit_edge_graph_will_become_supergraph}
 For any computation of \blp, there is a state in which the graph formed by the explicit edges is a supergraph of the line topology.
\end{corollary}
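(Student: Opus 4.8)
The plan is to combine the monotonicity of $\Phi$ with the two structural results already proved for $G_{NB}$, and then to read off the conclusion about $ENG$. First I would argue that $\Phi$ eventually stops decreasing. Since $rv(\cdot)$ and $lv(\cdot)$ are defined through hop distances in the ideal line (or through $n$), $\Phi$ takes only integer values; it is bounded from below by $2(n-1)$; and, as already observed, it never increases. Hence $\Phi$ can strictly decrease only finitely often, so every computation of \blp has a suffix throughout which $\Phi$ stays constant. I would fix the first state $s$ of such a suffix.

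From $s$ onward the hypothesis of Lemma~\ref{lem:closest_neighbor_graph_bidirected_and_strongly_connected} is satisfied, because $\Phi$ does not decrease in any further step. Therefore $G_{NB}$ is bidirected and strongly connected in $s$. The next step is to translate this into a statement about the explicit edges. By the definitions of $nextRight$ and $nextLeft$, we have $nextRight(x) \in Right(x)$ and $nextLeft(x) \in Left(x)$, both of which are stored in the local memory of $x$; hence every edge of $G_{NB}$ is an explicit edge. It thus suffices to show that $G_{NB}$, viewed as an undirected graph, is exactly the line $v_1 - v_2 - \cdots - v_n$. Here Corollary~\ref{cor:only_topology_is_line} enters: its reasoning---each node has a unique closest left and right neighbor, and by Lemma~\ref{lem:left_and_right_are_what_they_say} all edges respect the id order---forces every weakly connected component of a bidirected $G_{NB}$ to form a line on the nodes it contains. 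Since $G_{NB}$ is strongly connected, there is only a single such component, namely all of $V$, so $nextRight(v_i) = v_{i+1}$ and $nextLeft(v_{i+1}) = v_i$ for every $i$. Consequently all $2(n-1)$ line edges are present as explicit edges in $s$, i.e. $ENG$ is a supergraph of the line topology, which is exactly the claim.

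The only point that needs care is this last step. Corollary~\ref{cor:only_topology_is_line} is literally phrased for the \emph{disconnected} case, so I must observe that the argument behind it---relying solely on bidirectedness and on each node possessing a single $nextRight$ and $nextLeft$---applies verbatim when $G_{NB}$ is strongly connected, and then forces the unique component to be the full line rather than merely a collection of disjoint line fragments. (Equivalently, one may argue directly: a connected, bidirected graph in which every vertex has at most one edge to its right and one to its left has maximum undirected degree two and, since all edges respect the id order, cannot contain a cycle, so it is a Hamiltonian path on $v_1 < \dots < v_n$.) Everything else is a routine consequence of the integrality and monotonicity of $\Phi$ together with the two cited results, so I expect no further obstacle.
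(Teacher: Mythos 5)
Your proposal is correct and follows essentially the same route as the paper, which derives the corollary directly from the finite decrease of $\Phi$ (it never increases and is bounded below) together with Lemma~\ref{lem:closest_neighbor_graph_bidirected_and_strongly_connected} and Corollary~\ref{cor:only_topology_is_line}. Your extra care in noting that Corollary~\ref{cor:only_topology_is_line} is literally phrased for the disconnected case, and that its argument (bounded degree plus id-ordered edges) applies verbatim to force the single strongly connected component to be the full line of explicit edges, is a legitimate filling-in of a step the paper leaves implicit.
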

For the third step of the proof of the theorem, we have the following lemma:
\begin{lemma}\label{lem:superfluous_edges_will_vanish}         
 If a computation of \blp contains a state in which $ENG$ is a supergraph of the line topology, then there will be a suffix in which $ENG$ is the line topology and no new explicit edges will ever be created again.
\end{lemma}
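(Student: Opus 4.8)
The plan is to show two things: (i) once $ENG$ is a supergraph of the line, every superfluous explicit edge is eventually removed, and (ii) no superfluous explicit edge is ever created from that point on, so that the process terminates at exactly the line topology. By Corollary~\ref{cor:explicit_edge_graph_will_become_supergraph} we already have a state from which $ENG$ contains all the line edges; by the closure argument implicit in Lemma~\ref{lem:NG_remains_weakly_connected} and the fact that $\Phi$ is non-increasing and bounded, I may also assume we are in a computation suffix where $\Phi$ has stabilized, so that $nextLeft$ and $nextRight$ never change and the closest-neighbor edges (which are precisely the line edges) are permanently present as explicit edges. The goal is then to argue that every \emph{other} explicit edge disappears and none reappears.

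\textbf{First I would} pin down what a superfluous explicit edge is: an explicit edge $(u,v)$ with $v \in Right(u)$ (w.l.o.g.\ $u<v$) where $v \neq nextRight(u)$, i.e.\ there is some node strictly between $u$ and $v$ that is also a right neighbor of $u$. The key mechanism is the interplay of \timeout, \introduce, and \linearize: in \timeout, for every $w_i \in Right(u)$ with $i>1$, node $u$ sends $\introduce{w_i,u}$ to $w_{i-1}$. The receiving node adds $w_i$ to its own $Right$ set (as $w_{i-1}<w_i<$ its current neighbors, or wherever it belongs) and sends $\linearize{w_i}$ back to $u$. Upon processing $\linearize{w_i}$, node $u$ removes $w_i$ from $Right(u)$ provided $w_i$ is not the \emph{farthest} right neighbor, and re-delegates it (via \tempdelegate) toward a closer node. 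The crucial observation I would make precise is that this guarantees every non-farthest superfluous right neighbor is eventually deleted from $Right(u)$, while connectivity is preserved because $w_i$ is first installed at a closer node before being dropped. I would then handle the farthest neighbor separately: once all closer superfluous neighbors are gone, the previously-farthest one becomes non-farthest relative to $nextRight(u)$ (which is permanently present), so it too gets introduced and linearized away. Iterating, $Right(u)$ shrinks to exactly $\{nextRight(u)\}$.

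\textbf{Next I would} argue no new superfluous explicit edge is created in the suffix. The only actions that \emph{add} to $Left$/$Right$ are \introduce{} (the $w \neq \bot$ branch), \tempdelegate{}, and the \srp action \forwardprobe{} (the $Right \gets Right \cup \{u\}$ branch). In the stabilized suffix, $\Phi$ cannot decrease, so by the reasoning already used in Lemmas~\ref{lem:closest_neighbor_graph_bidirected}--\ref{lem:temporary_will_shorten_or_phi_will_decrease}, any action that would install a node closer than the current $nextLeft$/$nextRight$ would strictly decrease $\Phi$, which is impossible. Hence every newly added explicit edge $(u,x)$ has $x$ no closer than the current closest neighbor; such an edge is either a duplicate of an existing neighbor or a farther one, and the linearization mechanism of the previous paragraph will again expel it. I would package this as: the set of explicit edges closer than $nextRight(u)$ is empty and stays empty, and all farther explicit edges are transient, being repeatedly pushed inward until they coincide with $nextRight(u)$.

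\textbf{The main obstacle} I anticipate is the bookkeeping of the \emph{farthest} neighbor and the asynchrony of message delivery. The \linearize{} action deliberately refuses to delete the argmax element of $Right(u)$, so a naive argument stalls on that one edge; the real content is showing that as inner superfluous edges vanish, the identity of the argmax shifts and eventually every superfluous edge loses its ``farthest'' status and becomes deletable — all while \introduce/\linearize round-trips and stray \tempdelegate{} messages keep arriving in non-FIFO fashion. I would control this with a secondary potential or a careful induction on $|Right(u)|$ (and symmetrically $|Left(u)|$), using weak fairness to guarantee \timeout fires and fair message receipt to guarantee each \linearize{} round-trip completes, so that each superfluous edge is removed in finite time and, by the $\Phi$-stability, never replaced.
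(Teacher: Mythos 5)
Your high-level plan (clear out existing superfluous explicit edges, then argue none are created anymore) is in the spirit of the paper's proof, but there are two genuine gaps in the execution. First, you misread \linearize{}: the protocol spares the \emph{closest} neighbor, not the farthest one. In the pseudocode the $argmax$ over $Left(u)$ is the left neighbor with the largest id, i.e.\ $nextLeft(u)$; the analogous test for $Right(u)$ is an $argmin$, i.e.\ $nextRight(u)$. Consequently your ``main obstacle'' is a phantom: every right neighbor other than $nextRight(u)$ is deletable as soon as its \introduce{}/\linearize{} round-trip completes, and no special case for the farthest edge is needed. Worse, the resolution you sketch is incoherent even under your own reading: deleting \emph{inner} superfluous edges never changes which remaining edge is the farthest, so the farthest superfluous neighbor would never ``lose its farthest status,'' and your argument would stall on that one edge forever.

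Second, you do not actually prove the part of the statement asserting that no new explicit edges are ever created again. Your case analysis correctly shows that in the suffix \tempdelegate{} and \forwardprobe{} never add explicit edges (nothing can be closer than the line neighbor), but it concedes that \introduce{v,w} with $w \neq \bot$ adds one unconditionally --- and such messages are produced by the clearing mechanism itself: before $u$ can delete a superfluous right neighbor $v$, it sends \introduce{v,u} to some $w \in Right(u)$ with $u < w < v$, and $w$ then installs $v$ as a new explicit (generally superfluous) edge in $Right(w)$. So superfluous edges are not ``expelled''; they migrate to nodes strictly to the right, $\vert Right(w) \vert$ can grow at those nodes, and a per-node induction on $\vert Right(u) \vert$ (or the claim that a removed edge is ``never replaced'') does not give termination. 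The missing idea --- the heart of the paper's proof --- is a global ordering argument: a pending \introduce{v,w} message (a ``right-relevant'' edge) at a node $u$ can only originate from a node strictly to the \emph{left} of $u$ that itself holds a superfluous right edge. Hence the \emph{leftmost} node having a superfluous or right-relevant right edge, once it has cleared them (which fairness guarantees), can never acquire one again, and this front moves strictly rightward until it vanishes; a symmetric right-to-left sweep handles left neighbors. That induction over node positions, rather than a per-node potential, is what yields a suffix in which $ENG$ is exactly the line and no explicit edge is ever created again.
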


\begin{proof}
    For the proof, we introduce the following notatation: We say an implicit edge $(u,v)$ is \emph{right-relevant} if $u < v$ and the implicit edge $(u,v)$ is due to a $\introduce{v,w}$ message in $u.Ch$ for $w \neq \bot$.
    Accordingliy, we say an edge $(u,v)$ is \emph{left-relevant} if $v < u$ and the implicig edge $(u,v)$ is due to a $\introduce{v,w}$ message in $u.Ch$ for $w \neq \bot$.
    Additionally, we call an explicit edge $(u,v)$ \emph{superfluous} if $v \neq nextRight(u) \land v \neq nextLeft(u)$.

  Consider the state in which the graph formed by the explicit edges is a supergraph of the line topology.
  First of all, notice that according to the protocol, an explicit edge  that belongs to the line topology will never be removed (because this would require a node $u$ to get acquainted with a node $v$ that is closer than $minLeft(u)$ or $minRight(u)$ which is not possible).
  In addition, notice that according to the protocol, in every state (right-/left-)relevant edges are the only implicit edges that can be turned into an explicit edge any more.
  Notice that a right-relevant edge $(u,v)$ can only be created by a node $w < u$ with a superfluous explicit edge to $v$.
  Thus, for every node $u$ it holds: if there is no node $w < u$ with a relevant or superfluous edge $(w,u)$, then there will never be a relevant or superfluous edge $(x,u)$ with $x < u$ again.
  
  Consider the leftmost node $u$ that either has at least one right-relevant edge or at least one superfluous right neighbor.
  Note that once all right-relevant edges have been received by $u$, then no node $x \leq u$ will ever add a superfluous right neighbor again.
  Furthermore, notice that right-relevant edges will be turned into explicit edges upon receipt.
  Now, for every superfluous right neighbor $v$ of $u$, $u$ will send an \introduce{v,u} to some node $w \in Right(u)$.
  Each of these will eventually be received and, according to the protocol, be answered with a \linearize{v} message at $u$.
  This will cause $u$ to delegate $v$ to a node $x > u$.
  After the last superfluous edge has been delegated, no node $x \leq u$ will ever have a superfluous right neighbor again.
  
  Continuing this approach, we can show that all superfluous right neighbors will eventually vanish.
  Using analogous arguments, we can also show that all superfluous left neighbors will eventually vanish.
  Thus, the lemma follows.
\end{proof}
Note that Corollary~\ref{cor:explicit_edge_graph_will_become_supergraph} and Lemma~\ref{lem:superfluous_edges_will_vanish} imply that \blp converges to the list.
Moreover, Lemma~\ref{lem:superfluous_edges_will_vanish} yields the closure property.
This finishes the proof of Theorem~\ref{thm:blp_solves_linearization}.

\subsection{\blp satisfies non-trivial monotonic searchability}\label{sec:monotonic_searchability_proof}
In this subsection we prove the following theorem:
\begin{theorem}\label{thm:blp_guarantees_monotonic_searchability}
 \blp admissible-message satisfies non-trivial monotonic searchability according to \srp.
\end{theorem}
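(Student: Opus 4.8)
By the two sufficient conditions stated immediately after Lemma~\ref{lem:admissible_message_necessary_for_monotonic_searchability}, the plan is to split the proof into showing (i) that from any initial state there is a computation suffix in which every state is admissible, and (ii) that in computations consisting only of admissible states, \blp guarantees non-trivial monotonic searchability according to \srp. Since \blp already solves the linearization problem (Theorem~\ref{thm:blp_solves_linearization}) and keeps $NG$ weakly connected (Lemma~\ref{lem:NG_remains_weakly_connected}), both parts may freely use convergence to a single connected line.

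For (i), I would first make the message invariants defining admissibility explicit. The natural choice forces every search-related message to be consistent with a genuine search and the actual topology: a \psuccess{destID,seq,dest} message may exist only if $id(dest)=destID$, and the fields $source$, $destID$, $Next$, $seq$ of every \forwardprobe{source,destID,Next,seq}, \psuccess{} and \pfail{} message must match the way the protocol produces them. I would then argue two points. First, the invariants are closed under all \blp/\srp actions: a \psuccess{} is emitted in \forwardprobe{} only at a node whose id equals $destID$, a forwarded probe leaves $source$ and $destID$ untouched and augments $Next$ only with genuine neighbors, and the topology actions \introduce{}, \linearize{}, \tempdelegate{} likewise never fabricate a violating message — so no action ever creates a corrupted message. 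Second, by our model only finitely many corrupted messages are present initially, and by fair message receipt each is eventually processed and removed without spawning new ones. Once the last of them is consumed, every reachable state is admissible, yielding the required suffix.

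For (ii) the heart is a monotonicity property of explicit reachability. Note first that a \search{u,destID} request is served exactly when its \forwardprobe{} succeeds: the returning \psuccess{destID,seq,v} carries the reference of the target (correct by admissibility) and the waiting \search messages are delivered to it directly. Hence monotonic searchability reduces to: once a probe from $u$ for $destID$ succeeds, every later probe from $u$ for $destID$ succeeds. I would prove this via two lemmas. The main lemma asserts that in admissible states explicit reachability is monotone: if at some time there is a directed path of explicit edges with strictly increasing ids from $u$ toward $destID$ ending at the node $v$ with $id(v)=destID$, then such a path exists in every later state (the left case is symmetric). This is precisely what the ``introduce first, delegate afterwards'' discipline buys us — the only action deleting an explicit edge is \linearize{v}, and by construction the bypassing explicit edge to $v$ is created when the \introduce{v,w} message is processed, strictly before $v$ is removed, so reachability is preserved at every instant; this is the explicit-edge, id-monotone analogue of Lemma~\ref{lem:NG_remains_weakly_connected}. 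The second lemma characterizes probe success: using admissibility to rule out spurious \psuccess{}/\pfail{} responses and the sequence-number test $seq\ge seq[destID]$ to discard stale ones, a probe initiated at $u$ succeeds \emph{iff} $v$ is explicit-reachable from $u$. For the ``if'' direction I would run an induction along the probe's frontier $Next$, whose ids are strictly increasing: whenever the probe visits a node that still reaches $v$, it inserts into $Next$ a reachable witness of larger id (the next path node), which therefore stays in $Next$ until visited, so the frontier never loses a reachable witness and the probe reaches $v$. For the ``only if'' direction I would combine success with the main lemma: although the path is assembled across different times, each used edge's reachability, once observed, persists, so the contributions accumulate and $v$ is explicit-reachable from $u$ by the time the probe completes.

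Combining the lemmas gives monotonic searchability: if a first request succeeds, its serving probe succeeded, so by the characterization and the main lemma $v$ is explicit-reachable from $u$ from that moment on; hence every later probe succeeds by the ``if'' direction, and therefore every later \search{u,destID} is delivered. For non-triviality I would invoke Theorem~\ref{thm:blp_solves_linearization} and Lemma~\ref{lem:NG_remains_weakly_connected}: the system converges to a single connected line in which every node is explicit-reachable from every other, so in that suffix every \search{v,id(w)} with a path in the target topology succeeds. The main obstacle I anticipate is the interplay between the monotone-reachability lemma and asynchrony in the characterization lemma: because a probe reads each node's neighbor set at a different time while superfluous edges are concurrently delegated, I must show that the frontier $Next$ provably retains a reachable witness \emph{throughout} the traversal rather than merely at the start, which forces me to combine the strict id-monotonicity of the frontier with the instant-by-instant preservation of explicit reachability rather than reasoning about a single global snapshot.
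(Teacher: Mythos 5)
Your part~(ii) follows essentially the same route as the paper: monotone growth of explicit reachability (the paper's Lemma~\ref{lem:R_grows_monotonically}), an induction along the probe frontier $Next$ with strictly increasing ids (the paper's Lemmas~\ref{lem:forwardprobe} and~\ref{lem:w_in_R_leads_to_message_success}), and a final case analysis on whether a \pfail{} with a current sequence number arrives or no \psuccess{} ever arrives. However, there are two genuine gaps. First, your invariant set is too weak: you only constrain search-related messages (\forwardprobe{}, \psuccess{}, \pfail{}), but the monotone-reachability lemma you rely on is false under such a definition of admissibility. Your justification---``the bypassing explicit edge to $v$ is created when the \introduce{v,w} message is processed, strictly before $v$ is removed''---implicitly assumes that every \linearize{v} message being processed was genuinely produced by a prior \introduce{} execution. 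In an arbitrary state that your invariants would deem admissible, a spurious \linearize{v} message in $w.Ch$ causes $w$ to delete the explicit edge $(w,v)$ with no bypass path present, destroying reachability. This is exactly why the paper's admissibility definition includes invariants on \introduce{} and \linearize{} messages (its Invariants~1 and~2), which state that a pending \linearize{v} at $w$ is witnessed by some $u\in Right(w)$ with $v\in R(u)$; Lemma~\ref{lem:R_grows_monotonically} is proved from that witness, not from the protocol's intent.

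Second, your argument for the existence of an admissible suffix---``only finitely many corrupted messages are present initially, and each is eventually processed and removed without spawning new ones''---does not hold. Corrupted messages do spawn new corrupted messages: a corrupted \forwardprobe{} is forwarded as another (still corrupted) \forwardprobe{}, and a corrupted \introduce{v,w} triggers a \linearize{v} at $w$ that may itself violate the invariants. Moreover, the closure claim ``no action ever creates a corrupted message'' is only valid in states where the invariants already hold (this is what the paper's Lemmas~\ref{lem:once_first_two_invariants_hold_then_always}--\ref{lem:once_first_five_invariants_hold_then_always} establish); it cannot be invoked while corrupted messages are still circulating. The paper's existence proof (Lemma~\ref{lem:admissible_state_always_exists}) is structurally different from yours: it first invokes convergence to the line (Theorem~\ref{thm:blp_solves_linearization}) so that eventually no node has two neighbors on one side and hence no \introduce{v,w} with $w\neq\bot$ (and consequently no \linearize{}) message is generated at all, making Invariants~1--2 hold vacuously once old messages are flushed; it then argues that each invariant-violating \forwardprobe{} can spawn only finitely many violating descendants because probes travel monotonically through finitely many nodes, and finally flushes the response and \search{} messages layer by layer. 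Without this convergence-based flushing (or a substitute for it), your admissible suffix is not established.
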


We start with some preliminaries.
First we define $R(v)$ as the set of all nodes $x$ with $id(v) < id(x)$ for which there is a directed path from $v$ to $x$ consisting solely of explicit edges $(y,z)$ with $id(y) < id(z)$.
Furthermore, we define $R(v,ID) := \{x \in R(v) | id(x) \leq ID\}$.
In addition, we define $L(v)$ as the set of all nodes $x$ with $id(x) < id(v)$ for which there is a directed path from $v$ to $x$ consisting solely of explicit edges $(y,z)$ with $id(z) < id(y)$.
For a set $U$, $R(U) := U \cup \bigcup_{u \in U}R(u)$ and $R(U,ID) := \{x \in R(U) | id(x) \leq ID\}$.
Accordingly, $L(U) := U \cup \bigcup_{u \in U}L(u)$ and $L(U,ID) := \{x \in L(U) | id(x) \geq ID\}$.

Moreover, we define a state as admissible if the following message invariants hold:
\begin{enumerate}
 \item If there is an \introduce{v,w} message with $w \neq \bot$ in $u.Ch$, then $v \neq w$, and $u \in R(w)$ (or $u \in L(w)$).
 \item If there is a \linearize{v} message in $w.Ch$, then there is a node $u \neq v$ with $u \in Right(w)$ and $v \in R(u)$ if $w < v$ (or $u \in Left(w)$ and $v \in L(u)$ if $v < w$).
 \item If there is a \forwardprobe{source,destID,Next,seq} message in $u.Ch$, then
    \begin{enumerate}
      \item $id(source) < destID$ and $\forall x \in Next: id(x) \geq id(u)$ and $u = argmin_u\{id(u) | u \in Next\}$ 
      (alternatively $destID < id(source)$ and $\forall x \in Next: id(x) \leq id(u)$ and $u = argmax_u\{id(u) | u \in Next\}$).
      \item $id(source) < destID$ and $R(next) \subseteq R(source)$ (or $destID < id(source)$ and $u \in L(source)$).
      \item if $v$ exists such that $id(v) = destID$ and $id(source) < destID$ and $v \notin R(Next,destID)$ (or $id(source) < destID$ and $v \notin L(Next,destID)$) then for every admissible state with $source.seq[destID] < seq$, $v \notin R(source,destID)$ ($v \notin L(source,destID)$).
    \end{enumerate}
 \item If there is a \psuccess{destID, seq, dest} message in $u.Ch$, then $id(dest) = destID$ and $dest \in R(u)$ if $destID > id(u)$ (or $dest \in L(u)$ if $destID < id(u)$).
 \item If there is a \pfail{destID, seq} message in $u.Ch$, then either there is no node with id $destID$, or for every admissible state with $u.seq[destID] < seq$, $v \notin R(u)$ (and $v \notin L(u)$), where $v$ such that $id(v)= destID$.
 \item If there is a \search{v, destID} message in $u.Ch$, then $id(u) = destID$ and $u \in R(v)$ if $id(v) < destID$ (or $u \in L(v)$ if $destID < id(v)$).
\end{enumerate}

\begin{lemma}\label{lem:once_admissible_always_admissible}
 If in a computation of \blp, there is an admissible state, then all subsequent states are admissible.
\end{lemma}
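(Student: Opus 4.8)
The plan is to prove the statement by induction over the states of the computation. The base case is the hypothesized admissible state, and for the inductive step it suffices to show that if a state $s$ satisfies all six message invariants, then the state $s'$ obtained from $s$ by executing any single enabled \blp- or \srp-action again satisfies all six. Executing an action atomically (i) possibly removes the triggering message from a channel, (ii) possibly creates new messages, and (iii) possibly modifies the sets $Left$ and $Right$ of the executing node, so I would organize the whole argument around these three effects. Effect (i) is harmless: each invariant is conditioned on the \emph{presence} of a message, so deleting a message only discharges an obligation and can never cause a violation. Hence the real work is to check that every newly created message satisfies the invariant for its type, and that the set modifications of (iii) do not invalidate the invariants of messages that remain in the channels.

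For the set modifications I would first record the monotonicity behaviour of the reachability sets. Every action other than \linearize{} only ever \emph{adds} references to $Left$ or $Right$, by Lemma~\ref{lem:left_and_right_are_what_they_say} always on the correct side, and such additions can only enlarge $R(x)$ and $L(x)$ for every node $x$. Consequently the ``positive'' invariants 1, 2, 4 and 6, which all assert a membership of the form $u \in R(w)$ or $v \in R(u)$ (symmetrically for $L$), are preserved by every non-\linearize{} action. The only action that removes an explicit edge is \linearize{}: when a \linearize{v} message is processed at a node $self$ with $v < self$, the edge $(self,v)$ is deleted. Here the crucial point, and the reason invariant 2 is phrased the way it is, is that invariant 2 applied to this very message in the pre-state guarantees a node $u \neq v$ with $u \in Left(self)$ and $v \in L(u)$. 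Thus even after deleting $(self,v)$ there is still an explicit left-path $self \to u \leadsto v$, so $L(self)$ (and by the symmetric argument $R(self)$) is unchanged; the deleted edge is genuinely redundant for reachability and no positive invariant of a surviving message is harmed.

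It then remains to verify, action by action, that each created message is sound. For the \blp-actions this is a direct check against the updated sets: the \introduce{} messages emitted in \timeout satisfy invariant 1 because $self$ reaches the addressee on the correct side; the \linearize{v} message sent in the \introduce{} action satisfies invariant 2 because $v$ was just stored and introduced together with the sender; and every \tempdelegate{} message creates only an implicit edge, which the invariants do not constrain. The main obstacle is the \srp-layer, specifically the sequence-number invariants 3c and 5, which are negative reachability statements quantified over all admissible states whose stored counter has not yet caught up with the message's $seq$. For a \pfail{destID,seq} created in the \forwardprobe{} action I would argue that the message is emitted exactly when $Next$ becomes empty after the update, so by invariant 3a the probe has explored all explicit right-neighbours up to $destID$ reachable from $source$, and combining $R(Next) \subseteq R(source)$ (invariant 3b) with the non-reachability recorded for the consumed probe (invariant 3c) gives $v \notin R(source,destID)$ in every admissible state with $source.seq[destID] < seq$; since a destination with identifier $destID$ can lie in $R(source)$ only within $R(source,destID)$, this yields invariant 5. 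The delicate part is that $seq$ values are incremented only in \initsearch{} and never decrease, so the condition $source.seq[destID] < seq$ cuts out an initial segment of time, and the controlled growth of $R(source)$ during this segment cannot introduce $v$ without contradicting the invariant 3c that held for the \forwardprobe{} message producing the failure. The checks for \psuccess{} (invariant 4) and for the surviving \forwardprobe{}/\search{} messages (invariants 3 and 6) follow the same template and become routine once the reachability-monotonicity and the redundancy of the deletion performed by \linearize{} have been established.
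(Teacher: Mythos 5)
Your proposal is correct and takes essentially the same approach as the paper: an induction over states whose two pillars are (a) monotonicity of the reachability sets $R(\cdot)$/$L(\cdot)$, which rests exactly on the redundancy that Invariant~2 guarantees when a \linearize{} action deletes an explicit edge, and (b) a per-action check that every newly created message satisfies the invariant of its type, derived from the invariants of the consumed message (including the $seq$-monotonicity argument for Invariants~3c and~5). The paper merely packages this content as a chain of lemmas (preservation of Invariants~1--2, then $R$-monotonicity, then Invariants~1--3, 1--5, and finally all six), whereas you organize the same argument by the three effects of an atomic action.
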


In order to prove Lemma~\ref{lem:once_admissible_always_admissible}, we need the following lemmata:
\begin{lemma}\label{lem:once_first_two_invariants_hold_then_always}
 If in a computation of \blp, the first two invariants hold, then in all subsequent states the first two invariants will hold.
\end{lemma}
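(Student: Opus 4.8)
The plan is to prove Lemma~\ref{lem:once_first_two_invariants_hold_then_always} by a closure argument: assuming the first two message invariants hold in some state, I show that every action of \blp and \srp that could create, modify, or consume an \introduce{} or \linearize{} message preserves both invariants. Since invariants are predicates about messages currently in channels, it suffices to examine, action by action, (a) which qualifying messages each action \emph{produces}, verifying the newly created message satisfies the relevant invariant, and (b) that consuming a message and updating $Left$/$Right$ does not break the invariant attached to any message that remains in the system. The key structural fact I would lean on is monotonicity of the reachability sets: because explicit line edges are never removed (established in the proof of Lemma~\ref{lem:superfluous_edges_will_vanish}) and because every action replaces an implicit edge $(u,v)$ by a path $(u,w,v)$ with $w \in Left(u)\cup\{u\}\cup Right(u)$ (Lemma~\ref{lem:NG_remains_weakly_connected}), the sets $R(\cdot)$ and $L(\cdot)$ only grow along the explicit subgraph, so containments like $u \in R(w)$ are stable under the protocol's edge manipulations.

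Concretely, I would organize the verification around the origin of each qualifying message. For invariant~1 (\introduce{v,w} with $w\neq\bot$): such a message is emitted in \timeout, where for $w_i\in Right(u)$ the protocol sends \introduce{w_i,u} to $w_{i-1}$; here the sender is the ``$w$'' parameter, the recipient $w_{i-1}$ lies in $Right(u)$ hence in $R(u)$, and $w_i\neq u=w$, so both $v\neq w$ and $u\in R(w)$ hold (the left case is symmetric, and the \introduce{self,\bot} messages are excluded since $w=\bot$). I must also check that the only action \emph{removing} references from $Left$/$Right$, namely \linearize{}, cannot invalidate a pending invariant-1 message: since that action re-introduces $v$ to a node in the same set, reachability of $w$ from the recipient is preserved. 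For invariant~2 (\linearize{v} in $w.Ch$): this message is created only inside the \introduce{} action, precisely when $v$ has just been added to $Left(w)$ or $Right(w)$ and $w$ was the node from which the introducer received $v$; I would verify that the required witness node $u$ (a neighbor with $v\in R(u)$ or $v\in L(u)$) exists by construction, again using that reachability sets only grow.

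The subtle point is that the guard of \linearize{} performs its own delegation (sending \tempdelegate{v} to the appropriate $w$) and deletion from the set, so I would confirm that after \linearize{} fires, any \emph{other} \linearize{} or \introduce{} message still in the channels retains its witness; this is where I would invoke Lemma~\ref{lem:left_and_right_are_what_they_say} to ensure sidedness is respected and the $R/L$-path the invariant asserts is not the very edge being deleted. I expect the main obstacle to be handling the interaction between concurrently pending messages: removing a reference to satisfy one action could, a priori, destroy the explicit-path witness that another message's invariant depends on. The resolution is that \blp never deletes an explicit edge belonging to the line and always replaces a deleted non-line reference by a re-introduction that keeps the endpoint reachable, so no witness path is ever fully severed; making this ``no witness destroyed'' claim precise, by arguing that every deletion is compensated by a simultaneously created edge on the same side, is the crux of the argument. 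The remaining cases are routine once this invariant-preservation-under-deletion principle is in hand.
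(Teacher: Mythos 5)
Your overall plan---a per-action closure argument checking every creation and consumption of \introduce{} and \linearize{} messages---is exactly the paper's decomposition, and your treatment of invariant~1 is correct. The gap is in what you yourself identify as the crux: why deleting a reference $y$ from $Right(w)$ (or $Left(w)$) cannot destroy the explicit-path witness required by some other pending message. Your proposed resolution---that ``every deletion is compensated by a simultaneously created edge on the same side,'' a ``re-introduction that keeps the endpoint reachable''---does not work: what \linearize{y} creates upon deleting $y$ is a \tempdelegate{y} message, i.e., an \emph{implicit} edge, and the sets $R(\cdot)$ and $L(\cdot)$ in the invariants are defined over \emph{explicit} edges only, so this newly created edge contributes no witness at all. (Likewise, Lemma~\ref{lem:NG_remains_weakly_connected}, which you cite, only preserves weak connectivity of $NG$ \emph{including} implicit edges, and ``line edges are never deleted'' is insufficient because witness paths may run through non-line explicit edges.) The paper's argument is the one you are missing: $y$ is removed from $Right(w)$ only when $w$ consumes a \linearize{y} message, and invariant~2 applied to \emph{that very message} guarantees a pre-existing node $u \neq y$ with $u \in Right(w)$ and $y \in R(u)$; hence $y \in R(w)$ survives the deletion with no compensation needed. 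This self-sustaining use of the invariant being proved is the key idea.

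Two further consequences of this gap. First, the ``key structural fact'' you lean on---that $R(\cdot)$ and $L(\cdot)$ only grow---is not available as an independent ingredient: in the paper this is Lemma~\ref{lem:R_grows_monotonically}, which is proved \emph{after} the present lemma and whose proof is precisely the invariant-2 witness argument above, so invoking it here without that argument is circular. Second, in your invariant-2 creation case the witness does not merely exist ``by construction'': you need invariant~1 of the consumed \introduce{v,w} message, which gives $u' \in R(w)$ for the node $u'$ executing the action, hence a node $u \in Right(w)$ with $u' \in R(u)$ or $u' = u$; only then does adding $v$ to $Right(u')$ yield the required $v \in R(u)$. (Note also that it is the executing node $u'$, not $w$, that adds $v$ to its neighbor set.) Without these two repairs the proof does not go through.
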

\begin{proof}
 Assume there is a state $s_1$ in which the first two invariants hold and such that in the (direct) subsequent state $s_2$ one of the first two invariants does not hold.
 Obviously, this can only be due to one of the following three reasons:
 First, a new \introduce{v,w} message with $w \neq \bot$ was sent to a node $u$ with $u \notin R(w)$ (and $u \notin L(w)$) in $s_1$.
 Second, a new \linearize{v} message was sent to a node $w$ in $s_1$, but there is no node $u \neq v$ with $u \in Right(w)$ and $v \in R(u)$ (or $u \in Left(w)$ and $v \in L(u)$).
 Third, a node $y$ was removed from a set $Right(w)$ (or $Left(w)$).
 We show that all three cases cannot happen.
 
 For the first case, notice that according to the protocol, the only occasion when an \introduce{v,w} message with $w \neq \bot$ is sent is in the \timeout action of a node $w$.
 Here, it is only sent to nodes in $Right(w)$ (or $Left(w)$) and only with a first parameter $v \neq w$.
 
 For the second case, notice that according to the protocol, the only occasion when a \linearize{v} message is sent to a node $w$ is in an \introduce{v,w} action at a node $u'$.
 This must have been triggered by an \introduce{v,w} message with $w \neq \bot$.
 Thus, before the action was executed, by the first invariant, $u' \in R(w)$ (or $u' \in L(w)$) and $v \neq w$ were both fulfilled.
 This implies that there must be a node $u \in Right(w)$, i.e., $w < u$ such that $u' \in R(u)$ or $u' = u$ (or a node $u \in Left(w)$, i.e., $u < w$, such that $u' \in L(u)$ or $u' = u$).
 During the execution of the action, $v$ was added to $Right(u')$ (or $Left(u')$), which implies $v \in R(u)$ (or $v \in L(u)$).
 
 For the third case, note that a node $y$ is only removed from $Right(w)$ (or $Left(w)$) if the \linearize{y} action has been executed in $w$ between $s_1$ and $s_2$.
 However, by the second invariant, there must be a node $u \neq y$ with $u \in Right(w)$ and $y \in R(u)$ (or $u \in Left(w)$ and $y in L(u)$).
 Thus, after the removal, $y \in R(w)$ still holds.
 
 Therefore, in all three cases the first two invariants cannot be violated and have to hold in $s_2$, too.
\end{proof}

\begin{lemma}\label{lem:R_grows_monotonically}
  If there is a state in which the first two invariants hold, and $x \in R(v)$ ($x \in L(v)$), then in every subsequent step, $x \in R(v)$ ($x \in L(v)$).
\end{lemma}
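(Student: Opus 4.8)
The plan is to prove the statement by induction over the steps of the computation, reducing everything to a single-step claim; the $L(v)$ case is completely symmetric, so I would only treat $R(v)$. Since the first two invariants are assumed to hold in the starting state, Lemma~\ref{lem:once_first_two_invariants_hold_then_always} guarantees they hold in every subsequent state, so I may invoke the second invariant at each step. It thus suffices to show: if the first two invariants hold in a state $s_1$ and $x \in R(v)$ in $s_1$, then $x \in R(v)$ in the direct successor state $s_2$.

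First I would fix a witnessing path $P = (v = u_0, u_1, \dots, u_k = x)$ of explicit edges with strictly increasing identifiers, which exists by definition of $R(v)$; by Lemma~\ref{lem:left_and_right_are_what_they_say} each such edge is a $Right$-edge. Next I would observe that across one atomic step only the \linearize{} action can delete an explicit $Right$-edge, and it deletes at most one: scanning the pseudocode, the \timeout action, the \textsc{Introduce} and \textsc{TempDelegate} actions, and all actions of \srp only ever insert references into $Left$/$Right$, whereas \linearize{u'} executed at a node $w$ removes at most the single reference $u'$ from $Right(w)$, and only when $u' \neq nextRight(w)$ (this is the same observation already used in the proof of Lemma~\ref{lem:once_first_two_invariants_hold_then_always}). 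Hence if no edge of $P$ is deleted in $s_1 \to s_2$, then $P$ persists and we are done; the only interesting case is that some edge $(u_i, u_{i+1})$ of $P$ is deleted because \linearize{u_{i+1}} is executed at $u_i$.

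In that case I would apply the second invariant to the \linearize{u_{i+1}} message, which was present in $u_i.Ch$ in $s_1$: it yields a node $u' \neq u_{i+1}$ with $u' \in Right(u_i)$ and $u_{i+1} \in R(u')$ in $s_1$. The explicit edge $(u_i, u')$ survives the step because $u'$ is not the deleted reference. The step I expect to be the \emph{real obstacle} is arguing that the rightward path witnessing $u_{i+1} \in R(u')$ also survives into $s_2$, i.e.\ that it does not traverse the just-deleted edge $(u_i, u_{i+1})$. This follows purely from identifier ordering: $u' \in Right(u_i)$ gives $id(u_i) < id(u')$, and every node on a rightward path out of $u'$ has identifier $\geq id(u') > id(u_i)$, so no edge of that path can originate at $u_i$; in particular it cannot be the deleted edge. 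Therefore in $s_2$ there is still an explicit rightward path $u_i \to u' \rightsquigarrow u_{i+1}$, so $u_{i+1} \in R(u_i)$ in $s_2$.

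Finally I would splice the pieces together: since $(u_i, u_{i+1})$ was the only deleted edge, the prefix $(v, \dots, u_i)$ and the suffix $(u_{i+1}, \dots, x)$ of $P$ both persist in $s_2$, and concatenating them with the new segment $u_i \to u' \rightsquigarrow u_{i+1}$ yields a directed path from $v$ to $x$ whose identifiers are everywhere strictly increasing (hence a simple path using only explicit $Right$-edges). Thus $x \in R(v)$ in $s_2$. Induction over all subsequent steps then gives the claim for $R(v)$, and the argument for $L(v)$ is obtained by mirroring $Right$/$nextRight$ with $Left$/$nextLeft$ and reversing the identifier inequalities.
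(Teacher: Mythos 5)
Your proof is correct and follows essentially the same route as the paper's: both observe that \linearize{} is the only action that can delete an explicit $Right$-edge, and both invoke the second invariant to obtain a replacement node $u' \in Right(u_i)$ with $u_{i+1} \in R(u')$ so that reachability is repaired locally. You merely make explicit two details the paper leaves implicit---the induction and path-splicing bookkeeping, and the identifier-ordering argument showing the detour path cannot traverse the just-deleted edge---which fills in the argument rather than changing it.
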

\begin{proof}
We only consider the case $x \in R(v)$, as $x \in L(v)$ is completely analogous.

Obviously, adding additional edges does not remove elements from $R(v)$.
The only action that delegates away an explicit edge $(y,z)$ stored in $Right(y)$ for some nodes $y,z$ (and hence could remove nodes from $R(v)$) is the \linearize{} action if $y < z$.
Therefore, consider an arbitrary \linearize{z} action executed by $y$.
Note that since we assumed that the first two invariants hold, right before \linearize{z} is executed, it has to hold that there is a node $u \neq z$ with $u \in Right(y)$ and $z \in R(u)$, by the second invariant.
Consequently, after $z$ is removed from $Right(y)$, $z \in R(y)$ still holds.
\end{proof}

\begin{lemma}\label{lem:once_first_three_invariants_hold_then_always}
 If in a computation of \blp, the first three invariants hold, then in all subsequent states the first three invariants will hold.
\end{lemma}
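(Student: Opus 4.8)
The plan is to mirror the proof of Lemma~\ref{lem:once_first_two_invariants_hold_then_always}. By that lemma, once the first two invariants hold they continue to hold in every subsequent state; hence it suffices to assume a state $s_1$ satisfying all three invariants whose direct successor $s_2$ violates only the third invariant, and to derive a contradiction. A violation of the third invariant in $s_2$ can arise for exactly two reasons: either (i) a new \forwardprobe{source,destID,Next,seq} message is placed into some channel during the transition $s_1 \to s_2$ and this message fails one of the conditions 3(a)--3(c); or (ii) an existing \forwardprobe{} message that satisfied 3(a)--3(c) in $s_1$ fails one of these conditions in $s_2$ because the executed action changed some node's $Left$/$Right$ set (and thereby some reachability set $R(\cdot)$). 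I would rule out both. Note that new \forwardprobe{} messages can only be generated in \timeout or in the forwarding branch of \forwardprobe{}, so case (i) splits into these two subcases.

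For case (ii), I first observe that 3(a) constrains only the message content together with the identity of the channel owner, neither of which is affected by a state change, so 3(a) cannot break. For 3(b) and 3(c) the key tool is Lemma~\ref{lem:R_grows_monotonically}: since the first two invariants hold throughout, every set $R(\cdot)$ can only grow. For 3(b), if $R(Next) \subseteq R(source)$ held in $s_1$, then any node that becomes newly reachable from $Next$ in $s_2$ is reached through an explicit edge emanating from a node already lying in $R(Next) \subseteq R(source)$, hence it is reachable from $source$ as well, and the containment is preserved. For 3(c), growth of $R$ makes the premise $v \notin R(Next,destID)$ strictly harder to satisfy, so if it holds in $s_2$ it already held in $s_1$; moreover the conclusion quantifies over admissible states with $source.seq[destID] < seq$, and since the counter $source.seq[destID]$ never decreases and already equals the message's $seq$ at the batch that created the message, all such states lie strictly before the message exists and their snapshots do not change. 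Thus the conclusion is immutable and 3(c) survives.

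For case (i) I distinguish the two generating actions. In \timeout a node $u$ sends \forwardprobe{u,destID,\{u\},u.seq} to itself: here 3(a) is immediate because $Next$ is the singleton $\{u\}$; 3(b) holds trivially since the probe has not left $source$ yet; and for 3(c) the premise $v \notin R(\{u\},destID)$ coincides (using $id(v)=destID\neq id(u)$) with $v \notin R(u,destID)$ in $s_2$, whence Lemma~\ref{lem:R_grows_monotonically} extends non-reachability backwards to every earlier admissible state, in particular to all states with $u.seq[destID] < seq$. In the forwarding branch the message is sent to $u := argmin\{id(u') : u' \in Next\}$ after the update $Next \gets (Next \setminus \{self\}) \cup \{w \in Right(self) : id(w) \leq destID\}$; condition 3(a) holds by the choice of $u$ as the minimum, and 3(b) holds because every newly added $w$ lies in $Right(self)$, so that $R(\{w\}) \subseteq R(self) \subseteq R(Next) \subseteq R(source)$ using the invariant for the incoming message together with $self \in Next$.

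The main obstacle I anticipate is the bookkeeping for 3(c) in the forwarding branch. The crucial step is to show that the update of $Next$ does not change the relevant reachability set: because all members of $Next$ have an id at least $id(self)$, and the omitted right neighbors (those with id $> destID$) can only reach nodes of even larger id, one obtains $R(\text{new }Next, destID) = R(\text{old }Next, destID) \setminus \{self\}$. Since $id(self) < destID = id(v)$ forces $v \neq self$, the premise $v \notin R(\text{new }Next,destID)$ is therefore equivalent to $v \notin R(\text{old }Next,destID)$; and because $seq$ is copied verbatim into the forwarded message, the conclusion required for the new message is exactly the one guaranteed by 3(c) for the incoming message. Carrying this identity through cleanly, including the symmetric left-search case which I would only state as analogous, is the part that requires the most care.
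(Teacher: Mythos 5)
Your proposal is correct and follows essentially the same route as the paper's own proof: it reduces the claim to Invariant~3 via Lemma~\ref{lem:once_first_two_invariants_hold_then_always}, disposes of existing messages using Lemma~\ref{lem:R_grows_monotonically} together with the monotonicity of $source.seq[destID]$, and then handles newly created \forwardprobe{} messages by the same \timeout-versus-forwarding case split, with matching arguments for 3(a)--(c) (your identity $R(\text{new }Next,destID) = R(\text{old }Next,destID)\setminus\{self\}$ is just a cleaner phrasing of the paper's observation that $x$ is the only node in which the two sets differ). Your more explicit treatment of how state changes affect messages already in transit fills in what the paper dismisses with ``one can easily show,'' but the underlying argument is identical.
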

\begin{proof}
  Assume there is a state $s_1$ in which the first three invariants hold and such that in the (direct) subsequent state $s_2$ one of the first three invariants does not hold.
  Note that by Lemma~\ref{lem:once_first_two_invariants_hold_then_always} the first two invariants cannot be violated in $s_2$.
  Furthermore, by Lemma~\ref{lem:R_grows_monotonically} and the fact that $u.seq[id]$ is monotonically increasing (according to the protocol), one can easily show that the only reason why Invariant 3 can be invalidated is that a new \forwardprobe{} message is sent.
  In the following, we will only consider the case, where $id(source) < destID$, as the other case is completely analogous.
  
  	Assume a node $x$ sends a \forwardprobe{source,destID,Next,seq} message to a node $y$.
	This may happen in two cases: Either in the \timeout action of a node $x$, or when $x$ receives another \forwardprobe{source,destID,Next',seq} message and executes the corresponding action.
	In the first case, $Next = \{x\}$ and it is easy to see that claim a) and b) of the third invariant are fulfilled.
	In the second case, both $\forall z \in Next': id(z) \geq id(y)$ and $y = argmin_u\{id(u) | u \in Next'\}$ hold, since (by the third invariant) (i) $\forall z \in Next': id(z) \geq id(x)$, and $\forall z \in Right(x): id(z) \geq id(x)$ (by Lemma~\ref{lem:left_and_right_are_what_they_say}), (ii) only nodes from $Right(x)$ are added to $Next$, (iii) $x$ was $argmin_u\{id(u) | u \in Next\}$ and is not added to $Next'$, and (iv) $y$ is selected as the minimum node from $Next'$. 
	By the third invariant, $x \in R(source)$, which implies $Right(x) \subseteq R(source)$.
	Now, since $R(Next') \subseteq R(source)$ by the third invariant and $Next = Next' \setminus \{x\} \cup Right(x)$, $R(Next) \subseteq R(source)$.
	Thus Invariant~3b) holds afterwards.
	
	For the third claim of the third invariant, we again distinguish between the message being sent in \timeout or in the \forwardprobe{source,dest,Next',seq} action.
	In the former case, notice that $R(Next,destID) = R(source,destID)$.
	Assume there has been an admissible state in which $source.seq[destID] < seq$ and $v \in R(source,destID)$ hold.
	Since $source.seq[destID]$ is monotonically increasing, this must have been a previous state.
	By Lemma~\ref{lem:R_grows_monotonically}, $v \in R(source,destID) = R(Next,destID)$ must still hold, yielding a contradiction.
	In the latter case, assume $v \in R(Next',destID)$ (otherwise, Invariant~3c) trivially holds).
	Notice that due to Invariant~3b), $x \in R(source)$.
	Since the only node that is in $R(Next',destID)$ but not in $R(Next,destID)$ is $x$, $v\in R(Next,destID)$ follows.
	
  Thus, the first three invariants still hold in $s_2$.
\end{proof}

\begin{lemma}\label{lem:once_first_five_invariants_hold_then_always}
 If in a computation of \blp, the first five invariants hold, then in all subsequent states the first five invariants will hold.
\end{lemma}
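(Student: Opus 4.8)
The plan is to follow the template of Lemmata~\ref{lem:once_first_two_invariants_hold_then_always} and~\ref{lem:once_first_three_invariants_hold_then_always}: assume there is a state $s_1$ in which the first five invariants hold while in the direct successor state $s_2$ one of them is violated, and derive a contradiction. By Lemma~\ref{lem:once_first_three_invariants_hold_then_always} the violated invariant cannot be among the first three, so it must be the fourth or the fifth. Since a \psuccess{} message is created exclusively in the \forwardprobe{} action (branch $destID = id(self)$) and a \pfail{} message exclusively in the \forwardprobe{} action (branch where $Next$ becomes empty after the update), it suffices to verify two things: that messages already present in $s_1$ keep satisfying invariants~4 and~5 across the step, and that every newly created such message satisfies its invariant. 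The first part I expect to be immediate: invariant~4 demands $dest \in R(u)$ (resp.\ $L(u)$), which is preserved because $R$ and $L$ only grow by Lemma~\ref{lem:R_grows_monotonically}; invariant~5 is a statement quantified over all admissible states with $u.seq[destID] < seq$, and as this quantified condition refers to the whole computation and $u.seq[destID]$ is monotonically non-decreasing, its truth value is unaffected by a single step. Hence the whole argument reduces to the two creation cases, which I would treat assuming $destID > id(source)$ (the case $destID < id(source)$ being symmetric, exactly as in the proof of Lemma~\ref{lem:once_first_three_invariants_hold_then_always}).

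For the fourth invariant, consider the \psuccess{destID, seq, self} message created when a \forwardprobe{source, destID, Next, seq} message reaches a node $self$ with $id(self) = destID$; it is sent to $u := source$. Clearly $id(dest) = id(self) = destID$. To obtain $dest \in R(source)$ I would invoke invariant~3 for the triggering \forwardprobe{} message: part~(a) gives $self = argmin\{id(u') \mid u' \in Next\}$, hence $self \in Next$, and part~(b) gives $R(Next) \subseteq R(source)$. Since $Next \subseteq R(Next)$ by the definition of $R(\cdot)$ on sets, it follows that $dest = self \in R(source)$, which is precisely invariant~4.

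For the fifth invariant, consider the \pfail{destID, seq} message created in the \forwardprobe{source, destID, Next, seq} action at a node $self$ when $Next$ becomes empty after $Next \gets (Next \setminus \{self\}) \cup \{w \in Right \mid id(w) \leq destID\}$. The first step is to decode what this emptiness means: by part~(a) of the third invariant the receiver $self$ lies in $Next$ and is its minimum, so $Next \setminus \{self\} = \emptyset$ forces $Next = \{self\}$ beforehand, and emptiness of the added set means $self$ has no right neighbor of id at most $destID$. Because every node in $R(self)$ is reached along id-increasing explicit edges whose very first hop already exceeds $destID$, this yields $R(self) \cap \{x \mid id(x) \leq destID\} = \emptyset$; together with $id(self) \neq destID$ this gives $v \notin R(\{self\}, destID) = R(Next, destID)$ for the unique $v$ with $id(v) = destID$, if such a node exists. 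I would then feed this into part~(c) of the third invariant for the triggering \forwardprobe{} message, which concludes exactly that $v \notin R(source, destID)$ in every admissible state with $source.seq[destID] < seq$. Finally, since $id(v) = destID$ makes $v \in R(source)$ equivalent to $v \in R(source, destID)$, this is invariant~5 (and $v \notin L(source)$ holds trivially, since all ids in $L(source)$ are strictly below $id(source) < destID$); if no node of id $destID$ exists, invariant~5 holds through its first disjunct.

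The main obstacle I anticipate is not any individual creation case but the careful handling of the \emph{quantified, future-referencing} conditions in the third and fifth invariants. These speak about all admissible states of the computation whose $seq$-value is still small enough, so one must argue that the monotone growth of $R(source)$ can never reinsert the target $v$ during the window in which $source.seq[destID]$ remains below $seq$; this is exactly the role played by the sequence-number bookkeeping. Linking the purely local fact that $Next$ became empty at creation time to this global non-reachability statement---through part~(c) of the third invariant, whose maintenance was itself the delicate part of Lemma~\ref{lem:once_first_three_invariants_hold_then_always}---is where I expect the real work to lie.
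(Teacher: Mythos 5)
Your proposal is correct and follows essentially the same route as the paper's proof: reduce to newly created \psuccess{} and \pfail{} messages via Lemma~\ref{lem:once_first_three_invariants_hold_then_always}, Lemma~\ref{lem:R_grows_monotonically} and the monotonicity of $u.seq[destID]$, then derive Invariant~4 from parts~(a) and~(b) of the third invariant and Invariant~5 from part~(c). Your write-up is in fact somewhat more explicit than the paper's (e.g., spelling out $Next \subseteq R(Next)$ and the trivial $L$-side of Invariant~5), but the decomposition and the key steps are identical.
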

\begin{proof}
  Assume there is a state $s_1$ in which the first five invariants hold and such that in the (direct) subsequent state $s_2$ one of the first five invariants does not hold.
  Note that by Lemma~\ref{lem:once_first_three_invariants_hold_then_always} none of the first three invariants can be violated in $s_2$.
	Furthermore, by Lemma~\ref{lem:R_grows_monotonically} and the fact that according to the protocol, $u.seq[id]$ is monotonically increasing, one can check that the only reason for why Invariant 4, or 5 can be invalidated is that a new \psuccess{}, or \pfail{} message is sent.
	In the following, we will only consider the case, $id(u) < destID$, as the other cases are completely analogous.

	First, we consider \psuccess{} messages. Hence, assume that a node $x$ sends a \psuccess{destID,seq,dest} message to a node $u$.
	According to the protocol, this may only be in a \forwardprobe{} action, when a \forwardprobe{source,destID,Next,seq} message has arrived at $x$ with $id(x) = destID$ and $u = source$.
	By b) of the third invariant, $dest \in R(u)$.
	
	For the \pfail{} messages, assume a node $x$ sends a \pfail{destID, seq} message to a node $u$.
	According to the protocol, this may only be in a \forwardprobe{} action, when a \forwardprobe{source,destID,Next,seq} message has arrived at $x$ with $id(x) \neq destID$, $u = source$ and $Next = \{x\}$ and there is no $y$ in $Right(x)$ with $id(y) \leq destID$.
	If no node with id $destID$ exists, we are done.
	Otherwise, we have that $v \notin R(Next,w)$.
	By c) of the third invariant, this implies the claim.
  
	Therefore, the first five invariants have to hold in $s_2$, too.
\end{proof}
Using these lemmata, we can prove \textbf{Lemma~\ref{lem:once_admissible_always_admissible}}:
\begin{proof}
 Assume there is an admissible state $s_1$ such that in the (direct) subsequent state $s_2$ is not admissible.
 Let $s_1$ be the first such state.
	Note that by Lemma~\ref{lem:once_first_five_invariants_hold_then_always}, none of the first five invariants can be violated in $s_2$.
	Furthermore, by Lemma~\ref{lem:R_grows_monotonically} one can check that the only reason for why Invariant~6 can be invalidated is that a new \search{} message, is sent.
	In the following, we will only consider the case, $id(u) < destID$, as the other case is completely analogous.

	Assume a node $x$ sends a \search{v, destID} message to a node $u$.
	According to the protocol, $x = v$, and $v$ must have received a \psuccess{destID,seq,u}, for which, by Invariant~4, $id(u) = destID$, and $u \in R(v)$ must hold, i.e., the sixth invariant holds.
	
	Therefore, all invariants have to hold in $s_2$, too.
\end{proof}

\begin{lemma}\label{lem:admissible_state_always_exists}
 In every computation of \blp there is an admissible state.
\end{lemma}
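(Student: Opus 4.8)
The plan is to prove convergence to admissibility by flushing out all \emph{corrupted} messages (those violating one of the six invariants) in the order in which the invariants depend on one another, using the fact that the closure lemmas already tell us that processing a message which itself respects the relevant invariants never produces a corrupted message. Two structural observations drive the argument. First, the \timeout action is the only place where \introduce{v,w} messages (with $w\neq\bot$) and \forwardprobe{} messages are generated, and these are non-corrupted \emph{unconditionally}: a timeout \introduce{} is always sent to a neighbor already stored in the sender's $Right$/$Left$ set, which therefore lies in the sender's $R(\cdot)$/$L(\cdot)$ set through a length-one explicit edge, so Invariant~1 holds; and a timeout \forwardprobe{} starts with $Next$ equal to the singleton of the source, so Invariants~3a and~3b are immediate (Invariant~3c following from the monotonicity of $R$ guaranteed by Lemma~\ref{lem:R_grows_monotonically} once Invariants~1 and~2 hold). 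Second, inspecting the pseudocode, the relation ``message of type $X$ can spawn a message of type $Y$'' is acyclic except for \forwardprobe{}: an \introduce{} can only spawn a \linearize{}; a \linearize{} spawns no invariant-relevant message; a \forwardprobe{} spawns \forwardprobe{}, \psuccess{} or \pfail{} messages; a \psuccess{} spawns a \search{}; and \pfail{} and \search{} spawn nothing invariant-relevant.

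Combining these observations with the closure lemmas, I would clean up the invariants layer by layer. Since every corrupted \introduce{} message must be present in the initial state (timeout ones are clean and nothing else creates them), fair message receipt removes the finitely many of them in finite time; thereafter, by the proof of Lemma~\ref{lem:once_first_two_invariants_hold_then_always}, every newly created \linearize{} is clean, so the finitely many corrupted \linearize{} messages also disappear, and Invariants~1 and~2 hold permanently from some state $s_2$ onward. From $s_2$, $R$ grows monotonically, hence timeout \forwardprobe{} messages are clean and, by Lemma~\ref{lem:once_first_three_invariants_hold_then_always}, processing a clean \forwardprobe{} yields only clean \forwardprobe{}, \psuccess{} and \pfail{} messages; so once the corrupted \forwardprobe{} messages have been disposed of, Invariant~3 holds permanently. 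Repeating the same reasoning with Lemma~\ref{lem:once_first_five_invariants_hold_then_always} for the \psuccess{}/\pfail{} layer and with Lemma~\ref{lem:once_admissible_always_admissible} for the \search{} layer, we reach a state in which all six invariants hold simultaneously, i.e.\ an admissible state.

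The step that needs genuine work is showing that the \forwardprobe{} layer actually terminates, because it is the only message type that reproduces itself, so a single corrupted \forwardprobe{} could a priori travel forever and keep spawning descendants. The key points are that each \forwardprobe{} is forwarded to a \emph{single} next hop (so it traces one path, not a tree), and that the id of the node holding the message strictly increases along this path after the first step: whenever the message is processed at the node $u$ which is the minimal-id element of the carried set $Next$ (and after one hop the recipient always is this minimum, by construction of the forwarding rule), $u$ deletes itself from $Next$ and inserts only right neighbors, all of larger id, so the new minimum of $Next$ is strictly larger than $id(u)$; the left case is symmetric. As ids come from a finite set and are moreover bounded by $destID$, every \forwardprobe{} journey is finite and ends in a \psuccess{}, a \pfail{}, or an empty $Next$. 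Consequently only finitely many corrupted \forwardprobe{} messages (the initial ones together with the finitely many spawned before $s_2$) ever exist, and fair message receipt removes them all.

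I expect this finiteness argument for \forwardprobe{}, together with the bookkeeping needed to verify that no layer ever reinjects corruption into an already-cleaned lower layer (which follows from the acyclicity of the spawning relation above), to be the main obstacle; the remaining layers are routine once the ordering is fixed. Finally, the existence of one admissible state combined with Lemma~\ref{lem:once_admissible_always_admissible} yields that the whole suffix from that state on is admissible, which is exactly what the subsequent results require.
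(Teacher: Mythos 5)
There is a genuine gap, and it sits exactly at the base of your layering. The message invariants are \emph{state} predicates: an \introduce{v,w} message in $u.Ch$ must satisfy $u \in R(w)$ in every state in which it sits in the channel, not merely at the moment it is created. Your claim that \timeout-generated \introduce{} messages are ``non-corrupted unconditionally'' only establishes cleanliness at creation time. While Invariants~1 and~2 do not yet hold, the sets $R(\cdot)$ and $L(\cdot)$ can \emph{shrink}: processing a corrupted \linearize{y} message at a node $w$ removes $y$ from $Right(w)$ (or $Left(w)$) without the alternative explicit path that Invariant~2 would normally guarantee, so a clean in-transit \introduce{v,w} message addressed to $y$ (legitimately created by \timeout when $y$ was still a non-closest neighbor of $w$) becomes corrupted in transit. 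That corrupted \introduce{} then spawns a corrupted \linearize{}, whose processing can corrupt further in-transit \introduce{} messages created by later \timeout executions, and so on. Corruption therefore flows not only forward along your acyclic spawning relation (\introduce{} to \linearize{}) but also backwards, from \linearize{} \emph{processing} to in-transit \introduce{} messages. Consequently your premises ``every corrupted \introduce{} message is initial'' and ``only finitely many corrupted \linearize{} messages exist'' are unjustified, and the argument is circular: preservation of cleanliness in transit is exactly Lemma~\ref{lem:R_grows_monotonically}, whose hypothesis is that the first two invariants already hold.

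The paper closes this hole with an ingredient your proof never invokes: Theorem~\ref{thm:blp_solves_linearization}. By that theorem there is a state $s_1$ after which every node has at most one node in $Right$ and at most one in $Left$ forever; since \introduce{v,w} messages with $w \neq \bot$, $v \neq w$ are only generated in \timeout by a node with at least two neighbors on one side, no such message is ever created after $s_1$. Fair message receipt then flushes the existing ones; after that no \linearize{} message is ever created either, and those flush as well, so Invariants~1 and~2 hold \emph{vacuously} --- there simply are no such messages left. Only from that point does the paper run the layered flushing you describe, and your treatment of those later layers (the finiteness of each \forwardprobe{} trajectory via strictly increasing ids, then the \psuccess{}/\pfail{} layer, then \search{}) does match the paper's proof. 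It is only the first layer where your argument does not go through, and the convergence theorem is the missing idea.
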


\begin{proof}
According to Theorem~\ref{thm:blp_solves_linearization}, there is a state $s_1$ in which and in every subsequent state, every node $x$ has at most one node in $Right(x)$ and at most one nide in $Left(x)$.
Note that according to the protocol, any \introduce{v,w} message with $v \neq w$ is only sent from a node $w$ with more than one in $Right(w)$ or $Left(x)$.
Thus, by the fair message receipt assumption, there will be a state $s_2$ after $s_1$, in which all such messages have been received.
Further note that any \linearize{v} message is only sent from a node $u$ if $u$ received an \introduce{v,w} message, which cannot be the case in $s_2$.
Thus, by the fair message receipt assumption, there will be a state $s_3$ after $s_2$, in which all \linearize{} message have been received.
This implies that the first two invariants hold in $s_3$.
By Lemma~\ref{lem:once_first_two_invariants_hold_then_always}, they will do so in every subsequent state.

Next we show that starting from $s_3$, every \forwardprobe{source,destID,Next,seq} violating the third invariant will have vanished at some point in time.
In the following we only consider such messages with $id(source) < destID$ (the other case is analogous).
First, notice that any \forwardprobe{} message initiated in a \timeout action by a node $x$ cannot violate the third invariant.
This is obvious for a) and b).
For c), notice that if $v$ with $id(v) = destID$ exists and $v \notin R(Next,w)$ and there is an admissible state with $x.seq[destID] < seq$ and $v \in R(x)$, then according to the protocol this state must have been an earlier state and Lemma~\ref{lem:R_grows_monotonically} implies that $v \in R(x)$ in the current state, yielding a contradiction.

Second, note that any existing \forwardprobe{} message $m$ can cause at most one other \forwardprobe{} message $m'$ to be created when it is received by a node $x$.
If this $m$ does not violate the third invariant then since the first two invariants hold, $m'$ will also not violate the third invariant (for reasons similar to those in the proof of Lemma~\ref{lem:once_first_three_invariants_hold_then_always}).
Thus, we will show that every \forwardprobe{} message that violates the third invariant can only cause a finite number of \forwardprobe{} messages that violate the third invariant (which will eventually be received and thus disappear).
First of all, note that every \forwardprobe{} message $m$ violating Invariant~3a) cannot cause a \forwardprobe{} message $m'$ violating Invariant~3a) according to the protocol.
Thus, after all initial \forwardprobe{} messages have been received, Invariant~3a) holds for every \forwardprobe{} message.
Now, observe that any such \forwardprobe{} message which is received by a node $x$ can only initiate a new \forwardprobe{} message to a node $y$ with $id(y) > id(x)$, according to the protocol.
Since there is only a finite number of nodes, this implies that all \forwardprobe{} message violating Invariant~3 will eventually disappear.

Now, consider the state $s_4$ in which all of the first three invariants hold.
Note that by Lemma~\ref{lem:once_first_three_invariants_hold_then_always}, they hold for all subsequent states, too.
Notice that any \psuccess{} or \pfail{} message in $u.Ch$ for a node $u$ cannot cause $u$ to send a \psuccess{} or \pfail{} message.
The only only action in which a new \psuccess{} or \pfail{} message is sent is in the \forwardprobe{} action of a node.
Such an action requires the receipt of a \forwardprobe{source,destID,Next,seq} message $m$ for which, by definition of $s_4$, the third invariant holds.
Note that according to the protocol $m$ can only cause a \psuccess{destID,seq,dest} message $m'$ that is sent to sent to a node $x$, if $id(u) = destID$ (i.e., $dest = u$) and $x = source$.
By Invariant~3b), $u \in R(source)$, implying $dest \in R(x)$, i.e., the fourth invariant holds regarding $m'$.
A \pfail{destID,seq,dest} message $m'$ to a node $x$ can only be caused by $m$ if $id(u) < destID$ and $Next \setminus \{u\} \cup \{w \in Right| id(w) \leq destID\} = \emptyset$, implying that $v \notin R(Next,destID)$ for a node $v$ with $id(v) = destID$.
By Invariant~3c), for every admissible state with $source.seq[destID] < seq$, $v \notin R(source,destID)$, i.e., the fifth invariant holds regarding $m'$.
All in all, there is a state $s_5$ such that all \psuccess{} or \pfail{} messages that were in the incoming channel of any node in $s_4$ have been received and consequently, for all \psuccess{} and \pfail{} messages the fourth and fifth invariant will hold.
By Lemma~\ref{lem:once_first_five_invariants_hold_then_always}, they hold for all subsequent states, too.

Consider this state $s_5$.
Notice that \search{v,destID} message can only be sent to a node $u$ from a \psuccess{destID,seq,u} action in $v$, which requires the receipt of a \psuccess{destID,seq,u} message for which, by definition of $s_5$, the fourth invariant holds.
This implies, $destID = id(u)$ and $u \in R(v)$, yielding Invariant~6 for the new message.
Thus, in the state $s_6$ after all \search{} messages that were in the incoming channel of any node in $s_5$ have been received, all invariants hold, i.e., $s_6$ is an admissible state.
\end{proof}

Lemma~\ref{lem:once_admissible_always_admissible} and Lemma~\ref{lem:admissible_state_always_exists} imply the following Corollary~\ref{cor:admissible_suffix}.

\begin{corollary}
\label{cor:admissible_suffix}
    In every computation of \blp, there exists a suffix in which every state is admissible. 
\end{corollary}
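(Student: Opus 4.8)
The plan is to combine the two immediately preceding lemmas directly, since Corollary~\ref{cor:admissible_suffix} is precisely their conjunction applied to a single computation. First I would invoke Lemma~\ref{lem:admissible_state_always_exists}, which guarantees that an arbitrary computation of \blp contains \emph{some} admissible state; call the earliest such state $s$. The existence of $s$ is the substantive input here, and it has already been established, so at this stage there is nothing further to argue about its existence.

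Next I would apply Lemma~\ref{lem:once_admissible_always_admissible} to the state $s$. That lemma asserts that once an admissible state occurs, every subsequent state in the computation is admissible as well. Hence every state from $s$ onward is admissible. The only remaining point is to phrase this as a \emph{suffix}: by the definition of a computation suffix given in the model (a computation suffix is obtained by discarding the initial state and finitely many subsequent states, and is itself a computation), the sequence of states beginning at $s$ is a legitimate computation suffix. Every state in this suffix is admissible by the argument above, which is exactly the statement of the corollary.

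I do not expect any genuine obstacle in this proof, as it is a purely logical composition of Lemma~\ref{lem:admissible_state_always_exists} (existence of an admissible state) and Lemma~\ref{lem:once_admissible_always_admissible} (closure of admissibility under subsequent steps). The only subtlety worth stating explicitly is that $s$ need not be reachable after only a fixed number of steps uniform over all computations; the argument is therefore per-computation, which is all the corollary demands (``in every computation $\dots$ there exists a suffix''). Consequently the proof should be a two-sentence combination of the cited lemmas together with the observation that the states past $s$ form a valid computation suffix.
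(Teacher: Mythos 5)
Your proposal is correct and follows exactly the paper's own argument: the paper derives Corollary~\ref{cor:admissible_suffix} precisely as the combination of Lemma~\ref{lem:admissible_state_always_exists} (existence of an admissible state) and Lemma~\ref{lem:once_admissible_always_admissible} (all subsequent states remain admissible). Your additional remark that the states from $s$ onward form a valid computation suffix is a harmless spelling-out of what the paper leaves implicit.
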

For the rest of this subsection, we assume that every computation starts in an admissible state, since we want to show monotonic searchability must hold starting from admissible states only.
Furthermore, w.l.o.g., we only consider \search{u,destID} messages with $id(u) < destID$.

Before we can prove Theorem~\ref{thm:blp_guarantees_monotonic_searchability}, we need an additional result:
\begin{lemma}\label{lem:w_in_R_leads_to_message_success}
 For every message $m = \forwardprobe{v,destID, Next, seq} \in u.Ch$ with $id(u) < destID$, it holds that if there is a node $w$ with $id(w)=destID$ and $w \in R(u)$, then there will be a state with $m' = \forwardprobe{v,destID, Next', seq} \in w.Ch$.
\end{lemma}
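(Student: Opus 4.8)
The plan is to track the single chain of \forwardprobe{} messages spawned by $m$ (recall each \forwardprobe{} message causes at most one successor) and show that it marches monotonically rightward until it lands in $w.Ch$. The central invariant I would maintain is that whenever a message of this chain sits in some channel $u_i.Ch$ with attached set $Next_i$, it holds that $w \in R(Next_i)$. The base case is immediate: since we work in admissible states, Invariant~3a) gives $u = argmin\{id(x) \mid x \in Next\}$, so $u \in Next$, and the hypothesis $w \in R(u)$ yields $w \in R(Next)$.

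For the inductive step I would analyze the processing of the message at a holder $u_i$ with $id(u_i) < destID$. By the protocol the new set is $Next_{i+1} = (Next_i \setminus \{u_i\}) \cup \{x \in Right(u_i) : id(x) \leq destID\}$. Taking a directed increasing explicit path $u_i = x_0, x_1, \ldots, x_k = w$ witnessing $w \in R(u_i)$ (with $k \geq 1$, since $id(u_i) < destID = id(w)$), the first hop $x_1$ lies in $Right(u_i)$ with $id(x_1) \leq id(w) = destID$, hence $x_1 \in Next_{i+1}$; as $w \in R(x_1)$ this gives $w \in R(Next_{i+1})$. If instead $w$ is reachable from some other $z \in Next_i \setminus \{u_i\}$, then $z$ survives into $Next_{i+1}$ and the same conclusion holds. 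Lemma~\ref{lem:R_grows_monotonically} then guarantees that $w \in R(Next_{i+1})$ persists while the message waits in the next channel, despite asynchronous delays and concurrent edge changes.

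From this invariant I would extract two consequences at each hop. First, $w \in R(Next_{i+1})$ forces $Next_{i+1} \neq \emptyset$, so the message is never answered with \pfail{} and is always forwarded. Second, any node from which $w$ is reachable via an increasing explicit path (or $w$ itself) has id at most $destID$, so $Next_{i+1}$ contains a node of id $\leq destID$ and therefore $u_{i+1} := argmin(Next_{i+1})$ satisfies $id(u_{i+1}) \leq destID$. At the same time, because $u_i = argmin(Next_i)$ has the unique smallest id in $Next_i$ and every element newly added from $Right(u_i)$ exceeds $id(u_i)$ by Lemma~\ref{lem:left_and_right_are_what_they_say}, every element of $Next_{i+1}$ has id strictly larger than $id(u_i)$; hence $id(u_{i+1}) > id(u_i)$.

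Putting these together, the sequence of holders $u_0, u_1, u_2, \ldots$ has strictly increasing ids bounded above by $destID$. Since there are only finitely many nodes and fair message receipt guarantees each message in the chain is eventually processed and forwarded, this chain must terminate, and it can only stop when the holder has id exactly $destID$, i.e. at $w$; the step in which the message is forwarded into $w.Ch$ is precisely the claimed state with $m' = \forwardprobe{v, destID, Next', seq} \in w.Ch$ (the parameters $v$, $destID$ and $seq$ are untouched by forwarding). The main obstacle, and where a naive argument breaks, is the temptation to carry the stronger invariant $w \in R(u_i)$ for the \emph{specific} holder: this fails because the message is forwarded to $argmin(Next_i)$, which need not be the node from which $w$ is reachable. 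The fix is to weaken the invariant to $w \in R(Next_i)$ for the whole set, which is exactly strong enough to rule out failure and to cap the holder id at $destID$, while strict id-growth per hop supplies termination.
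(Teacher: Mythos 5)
Your proposal is correct and takes essentially the same route as the paper: the paper factors your inductive step into a separate helper lemma (Lemma~\ref{lem:forwardprobe}), which maintains exactly your set-based invariant (in the form $w \in R(Next,destID)$) together with the strict increase of the holder's id per hop, and then applies it recursively under fair message receipt just as you do. Your inlined treatment of the base case via Invariant~3a), the persistence argument via Lemma~\ref{lem:R_grows_monotonically}, and the termination bound $id(u_{i+1}) \leq destID$ are the same argument, merely spelled out in one piece rather than split into two lemmas.
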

In order to prove Lemma~\ref{lem:w_in_R_leads_to_message_success}, we need the following additional lemma:
\begin{lemma}\label{lem:forwardprobe}
Assume for a \forwardprobe{v,destID, Next, seq} message $m \in x.Ch$, there is a $u \in R(Next,destID)$.
Then either $u = x$ or there will be a state in which a \forwardprobe{v,destID, Next', seq} message is  in $y.Ch$ for some node $y$ with $id(y) > id(x)$ and $u \in R(Next',destID)$.
\end{lemma}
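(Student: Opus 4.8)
The plan is to trace a single execution of the \forwardprobe{} action at $x$ and show it either already sits at $u$ or pushes a descendant probe strictly to the right while keeping $u$ reachable. First I would use Invariant~3a to record the structural facts about the message: since $m \in x.Ch$, we have $x = argmin\{id(z) \mid z \in Next\}$, hence $x \in Next$ and every $z \in Next$ satisfies $id(z) \geq id(x)$. Because $R(z) \subseteq \{$nodes of id $> id(z)\}$ for each $z$, the minimum id occurring in $R(Next)$ equals $id(x)$; thus $R(Next,destID) \neq \emptyset$ (it contains $u$) forces $id(x) \leq destID$. I would then dispose of the boundary case $id(x)=destID$: here $R(Next,destID)$ collapses to $\{x\}$ (ids are unique and $R$ only moves rightward), so $u=x$ and the first alternative of the lemma holds.

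In the main case $id(x) < destID$ with $u \neq x$, I would let the protocol process $m$, which replaces $Next$ by $Next' := (Next \setminus \{x\}) \cup \{w \in Right(x) \mid id(w) \leq destID\}$, and then argue $u \in R(Next',destID)$ by a case distinction on a witness $z \in Next$ with $u \in R(z) \cup \{z\}$ and $id(u) \leq destID$. If $z \neq x$, then $z$ survives into $Next'$, so $u \in R(z)\cup\{z\} \subseteq R(Next',destID)$. If $z = x$ (i.e. $u \in R(x)$), I would take the first edge $(x,w)$ on an id-increasing explicit path from $x$ to $u$; then $w \in Right(x)$ and $id(x) < id(w) \leq id(u) \leq destID$, so $w \in Next'$ and $u \in R(w)\cup\{w\} \subseteq R(Next',destID)$. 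Either way $u$ remains reachable from $Next'$.

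To close, I would observe that $u \in R(Next',destID)$ makes $Next' \neq \emptyset$, so the \forwardprobe{} branch (not the \pfail{} branch) is taken and the probe is forwarded to $y := argmin\{id(z) \mid z \in Next'\}$ as $m' = \forwardprobe{v,destID,Next',seq}$, which by fair message receipt eventually lands in $y.Ch$. Every element of $Next'$ has id strictly above $id(x)$ (the remnants of $Next \setminus \{x\}$ by the argmin property together with uniqueness of ids, and the freshly added right neighbors by Lemma~\ref{lem:left_and_right_are_what_they_say}), hence $id(y) > id(x)$, giving the second alternative. I would also note that since we work in admissible states the first two invariants hold throughout, so Lemma~\ref{lem:R_grows_monotonically} guarantees the hypothesis $u \in R(Next,destID)$ is still valid at the moment $m$ is actually processed, even if edges were added in the meantime.

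The main obstacle I expect is the sub-case $u \in R(x)$: one must justify that the first hop $w$ of the rightward path to $u$ really satisfies $id(w) \leq destID$ (so that it is not discarded by the $id(w) \leq destID$ filter) — this rests precisely on the monotone-increasing structure of explicit rightward paths built into the definition of $R(\cdot)$. The remaining bookkeeping (non-emptiness of $Next'$, the $id(y) > id(x)$ bound, and the preservation of reachability across edge growth via Lemma~\ref{lem:R_grows_monotonically}) is routine once this reachability-preservation step is in place.
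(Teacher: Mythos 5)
Your proposal is correct and follows essentially the same route as the paper's proof: process $m$ at $x$, form $Next' = (Next\setminus\{x\}) \cup \{w \in Right(x) \mid id(w)\le destID\}$, use Invariant~3a and Lemma~\ref{lem:left_and_right_are_what_they_say} to get $id(y)>id(x)$ for the forwarding target, and show reachability of $u$ is preserved. The only difference is that the paper simply asserts the containment $R(Next,destID)\setminus\{x\} \subseteq R(Next',destID)$, whereas you prove it explicitly via the witness case analysis (including the check that the first hop of an id-increasing path to $u$ passes the $id(w)\le destID$ filter) and additionally handle the boundary case $id(x)=destID$ and the timing issue via Lemma~\ref{lem:R_grows_monotonically} — details the paper leaves implicit.
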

\begin{proof}
	Note that when $m$ is received by $x$, a new message with $Next' = Next\setminus\{x\} \cup Right(x)$ will be sent.
	According to the third invariant, for all nodes $z$ in $Next$, $id(z) \geq id(x)$ holds, and $x$ is the node with minimum id among all nodes in $Next$.
	By Lemma~\ref{lem:left_and_right_are_what_they_say}, the same holds for the nodes $z$ in $Right(x)$.
	Thus, $x$ is the node with minimum id among all ones in $R(Next,w)$ and for the node $y$ to which a new \forwardprobe{v,destID,Next',seq} message is sent it holds that $id(y) > id(x)$.
	Furthermore, $R(Next(x),destID)\setminus \{x\} \subseteq R(Next',destID)$.
	Thus, also $u \in R(Next',destID)$ and the claim follows.
\end{proof}
Using this, we can prove \textbf{Lemma~\ref{lem:w_in_R_leads_to_message_success}}:
\begin{proof}
	Note that when $m$ arrives as $u$, $Next$ will be changed such that $R(Next,w) = R(u,w)$.
	If $w \in R(u)$, then $w \in R(Next,w)$ afterwards.
	Thus, by applying Lemma~\ref{lem:forwardprobe} recursively, we have that eventually  a \forwardprobe{v,destID, Next', seq} is in $w.Ch$, which will be received according to the fair message receipt assumption.	
\end{proof}

We are now ready to prove Theorem~\ref{thm:blp_guarantees_monotonic_searchability}:
\begin{proof}
  Let $m, m'$ be two \search{u,destID} messages initiated in $u$ in admissible states with $m$ being initiated before $m'$ and assume that $m$ is delivered successfully, but $m'$ is not.
	Let $v$ be such that $id(v) = destID$.
  %We show that $m$ will be answered successfully as well.
	Note that if $m'$ is added to the set $WaitingFor[destID]$ when $m$ is already in the set, then the protocol will handle both messages identical, i.e., if $m$ is successfully delivered to $v$ due to an \psuccess{} message, $m'$ is as well.
	Therefore, $m'$ is added to $WaitingFor[destID]$ when $m \notin WaitingFor[destID]$, which implies $u.seq[destID]$ has increased since the successful delivery of $m$ (according to the protocol).
	Since we assume that $m'$ is not delivered successfully, either a \pfail{dest,seq} message eventually arrives at $u$ with $seq \geq u.s[destID]$, or no \psuccess{destID,seq,dest} with $seq \geq u.s[destID]$, $dest = destID$ will ever arrive at $u$. 
	We consider both cases individually.
	In the first case, by the fifth invariant, $v \notin R(u)$ has to hold even though $m$ was already successfully delivered.
	By the sixth invariant, when $m$ was delivered, $v \in R(u)$, which is why this is a contradiction to Lemma~\ref{lem:R_grows_monotonically}.
	In the second case, note that \forwardprobe{u,destID,\{u\},seq} messages are regularly initiated by $u$ with $seq \geq u.s[destID]$ (since $u.seq$ is monotonically increasing).
	Again, due to the successful delivery of $m$, by the sixth invariant and Lemma~\ref{lem:R_grows_monotonically}, $v \in R(u)$ when $m'$ was initiated, and therefore, by Lemma~\ref{lem:w_in_R_leads_to_message_success}, a \forwardprobe{u, destID, Next', seq} message with $seq \geq u.s[destID]$ will eventually be in $v.Ch$, which will be answered with a \psuccess{destID, seq, v} message, causing $m'$ to be sent to $v$.	
	By the fair message receipt assumption, this contradicts the assumption that $m'$ is not successfully delivered.
\end{proof}

\section{The \blpp and the \srpp protocols}
\label{sec:theBLPPAlgorithm}
For the \blp protocol in Section~\ref{sec:blpAlgorithm} we implicitly assumed a static node set, i.e., nodes are not allowed to leave or join the network. In this section we want investigate monotonic searchability in terms of the \emph{Finite Departure Problem} (\fdp) of~\cite{departure1}.
Naturally, a leaving node does not execute \initsearch{}, since it aims at leaving the system. Additionally, a leaving node that is the destination of a \forwardprobe{} message, will deliberately answer with \pfail{}.
Consequently, monotonic searchability can only be maintained for pairs of staying nodes.

We note that the \fdp deliberately ignores that new nodes can join the network. 
However, this abstraction is justified in a self-stabilizing setting, since from an algorithmic point of view for some node $u$ a new node joining the network is the same as getting a message from a node that it has never been in contact with.

In this section, we present the \blpp and the \srpp protocols.
In the following sections, we further show that \blpp solves the \fdp (Section~\ref{sec:FDP_solution_proof}) and also the linearization problem (Section~\ref{sec:blpp_linearization_proof}), and extend the proofs of Section~\ref{sec:monotonic_searchability_proof} to show that \blpp also satisfies non-trivial monotonic searchability according to \srpp (Section~\ref{sec:blpp_monotonic_searchability_proof}).

\subsection{Description of \blpp and \srpp}

For two staying nodes that interact with each other, \blpp is analogous to \blp.
Therefore, we only specify the changes in case a node itself is leaving or receives a message from a leaving node.
A leaving node distinguishes between two different kinds of neighbors: those that it already had before switching to the leaving mode (which are $Left$ and $Right$ from \blp) and those which it received while being leaving (\templeft and \tempright). 
Searchability is only preserved for nodes in the former two sets.

For the \forwardprobe{}, \introduce{}, \linearize{} and \tempdelegate{} actions, a leaving node $u$ will always save nodes in \templeft and \tempright in cases where a staying node saves them in $Left$ and $Right$. 
In its \timeout action, a leaving node $u$ either introduces all its neighbors to each other and executes \textbf{exit} if \nidec is true or it sends a \revandlinREQ{} message to all neighbors.
With this \revandlinREQ{dir} message $u$ requests all neighbors to stop holding its reference. 
As it was shown in~\cite{departure1}, leaving nodes should never send their own reference for a successful departure protocol. 
Therefore, a \revandlinREQ{dir} message only contains a value $dir \in \{left,right\}$ that indicates whether a left or right neighbor should be removed, i.e., $u$ sends a \revandlinREQ{left} message to all its neighbors to the right and and a \revandlinREQ{right} message to all its neighbors to the left.
If a  node $v$ receives a \revandlinREQ{dir} message, there are two possible scenarios. 
If $v$ is staying, it sends a \revandlinACK{v,uniqueValue} message to all neighbors in the given direction, which contains its own reference and for each neighbor a uniquely created value (i.e., in our case a local counter or the $id$ of a node would be sufficient).
This values is also saved as satellite data by $v$ at the corresponding node reference in the neighbor set.
If $v$ is leaving, it behaves like a staying node if the $dir$ is right; otherwise it ignores the request. 
Thereby, leaving nodes with a higher id are given a higher priority for exiting the system.
Once a leaving node $u$ receives a \revandlinACK{v,uniqueValue} message, it responds with \revandlin{nodeList, uniqueValue} message that contains the received unique value (for identification purposes) and also all its neighbors that are on the opposite of the node in the message (i.e., if the received node is to the right of $u$, $u$ sends all left neighbors and vice-versa). 
A \revandlinACK{v,uniqueValue} message is ignored by a staying node, meaning that it is transformed into a \tempdelegate{v} to itself.
Finally, the \revandlin{nodeList, uniqueValue} message is received by $v$ and $v$ checks if it has a neighbor with the given unique value. 
If this is the case, $v$ either finishes the reversal process by deleting the reference to $u$ and saving the newly received neighbors (if $v$ is staying or getting the \revandlin{nodeList, uniqueValue} message from a right neighbor) or $v$ ignores the message by simply saving all nodes in \templeft (if $v$ is leaving and  getting the \revandlin{nodeList, uniqueValue} message from a left neighbor).  
In case the unique value does not match, the \revandlin{nodeList, uniqueValue} message is not a response to a former \revandlinACK{v,uniqueValue} message and all received nodes are processed by \tempdelegate{} messages to $v$ itself.

The \srpp protocol is very similar to the \srp protocol. 
As already mentioned, leaving nodes will neither execute \initsearch{}, nor will they send out a \psuccess{} message. 
In fact the only action that is different in multiple places is the \forwardprobe{} action, since we have to make sure that references are not saved in $Left$ and $Right$ but in \templeft and \tempright.

Similar to \blp, \blpp performs a sanity check for \templeft, \tempright, $Left$ and $Right$ before each action. 
The same is done for the $nodeList$ received in a \revandlin{} message. 
However, in the last case a failing sanity check (i.e., the nodes in $nodeList$ are from two different sides of the current node) directly implies that the message is corrupt and it is safe to process the nodes with \tempdelegate{}.
The pseudocode for \blpp and \srpp is presented in Algorithms~\ref{algo:fdp} and~\ref{algo:search2}.

\begin{lstlisting}[mathescape=true,caption=\blpp protocol,label=algo:fdp]
$\timeout$
 if($self.mode = staying$) // See Algorithm$~\ref{algo:blp}$.
 else
   if($ \mathcal{NIDEC}$)
     for all $v \in Left \cup Right \cup \templeft \cup \tempright$ 
       for all $w \in Left \cup Right \cup \templeft \cup \tempright$
         send $v.\introduce{w,\bot}$ to $v$
         send $\introduce{v,\bot}$ to $w$
     $\textbf{exit}$
   else    
     for all $v \in Left \cup \templeft$ 
       send $\revandlinREQ{right}$ to $v$
     for all $w \in Right \cup \tempright$ 
       send $\revandlinREQ{left}$ to $w$

$\introduce{v,w}$
 if($id(v) < id(self)$)
   if($self.mode = staying$) // See Algorithm$~\ref{algo:blp}$.
   else       
     if($v \notin Left$} 
       $\templeft \gets \templeft \cup \{v\}$   
     if{$w \neq \bot \land w \notin Left$} 
       $\templeft \gets \templeft \cup \{w\}$ 
 else if($id(v) > id(self)$) //Analogous to the previous case.

$\linearize{v}$
 if($id(v) < id(self)$)
   if($self.mode = staying$) 
     // See Algorithm$~\ref{algo:blp}$.
   else 	
     $\templeft \gets \templeft \cup \{v\}$ 
 else if($id(v) > id(self)$) //Analogous to the previous case.
   
$\tempdelegate{u}$
 if($id(u) < id(self)$)
   if($Left = \emptyset$)
     if($self.mode = staying$) 
       $Left \gets Left \cup \{u\}$
     else 
       $\templeft \gets \templeft \cup \{u\}$
   else  
     $x \gets argmax \{id(x') | x' \in Left \}$
     if($id(x) < id(u)$)
       if($self.mode = staying$) 
         $Left \gets Left \cup \{u\}$
       else 
         $\templeft \gets \templeft \cup \{u\}$
     else 
       send $\tempdelegate{u}$ to $x$
 else if($id(u) > id(self)$) //Analogous to the previous case.   
\end{lstlisting}
\begin{lstlisting}[mathescape=true,caption=\blpp protocol (continued)] 
$\revandlinREQ{dir}$
 if{$dir = right$}
   for all $v \in Right \cup \tempright$ 
     if($uniqueValues[v] = \bot$) // i.e., $v$ does not exist in uniqueValues.
       /* Assume that generateUniqueValue() creates a unique value.
       $uniqueValues[v] = self.generateUniqueValue()$
     send $\revandlinACK{self, uniqueValues[v]}$ to $v$
 else if($dir = left \land self.mode = staying$) 
   // Analogous to the previous case.  

$\revandlinACK{v,uniqueValue}$
 if($id(v) < id(self)$)
   if($self.mode = leaving$)
     $\templeft \gets \templeft \cup \{v\}$
     send $\revandlin{Right, uniqueValue}$ to $v$
   else
     send $\tempdelegate{v}$ to $self$
 else if ($id(v) > id(self)$) 
   // Analogous to the previous case. 

$\revandlin{nodeList,uniqueValue}$
 if($\exists v \in Left \cup Right \cup \templeft \cup \tempright$ with $uniqueValues[v] = uniqueValue$)
   if($self.mode = staying$)
     if($id(v) < id(self)$)
       $Left \gets Left \cup nodeList$ 
       $Left \gets Left \setminus \{v\}$
       send $\introduce{self,\bot}$ to $v$
     else if($id(v) > id(self)$) 
       //Analogous to the previous case.
   else //$self.mode = leaving$
     if($id(v) < id(self)$)
       $\templeft \gets \templeft \cup nodeList$ 
     else //$id(v) > id(self)$
       if($v \in Right$)
         $Right \gets Right \cup nodeList$
         $Right \gets Right \setminus \{v\}$
       else
         $\tempright \gets \tempright \cup nodeList$
         $\tempright \gets \tempright \setminus \{v\}$
       send $\introduce{self,\bot}$ to $v$
 else
   for all $u \in nodeList$
     send $\tempdelegate{u}$ to $self$
\end{lstlisting}

\newpage
\begin{lstlisting}[mathescape=true,caption=\srpp protocol,label=algo:search2]
$\initsearch{destID}$
 if($self.mode = staying$) 
   //See Algorithm~$\ref{algo:search}$.
 else
   // do nothing.  

$\forwardprobe{source,destID,Next,seq}$
 if($destID = id(self)$)
   if($self.mode = staying$) 
     //See Algorithm~$\ref{algo:search}$.
   else
     send $\pfail{destID,seq}$ to $source$
     for all $u \in Next$ 
       send $\tempdelegate{u}$ to $self$
     send $\tempdelegate{source}$ to $self$
 else
   if($destID > id(self)$)
     $Next \gets Next\setminus \{self\} \cup \{ w \in Right | id(w) \leq destID\}$
     if($Next = \emptyset$)
       send $\pfail{destID,seq}$ to $source$
       send $\tempdelegate{source}$ to $self$
     else
       $u \gets argmin\{ id(u) | u \in Next\}$
       if($id(u) < id(self)$)
         send $\tempdelegate{u}$ to $self$
       else if ($id(u) < id(argmin\{ id(v) | v \in Right\})$)
         if{$self.mode = staying$}
           $Right \gets Right \cup \{u\}$
         else
           $\tempright \gets \tempright \cup \{u\}$
       send $\forwardprobe{source,destID,Next,seq}$ to $u$
   if($destID < id(self)$) 
     //Analogous to the previous case.

$\psuccess{destID,seq,dest}$
 if($self.mode = staying$) 
   //See Algorithm$~\ref{algo:search}$.
 else
   send $\tempdelegate{dest}$ to $self$

$\pfail{destID,seq}$
 if($self.mode = staying$) 
   // See Algorithm$~\ref{algo:search}$.
\end{lstlisting}

In the following sections we will show that (i) \blpp is a self-stabilizing solution to the \fdp, (ii) \blpp is a self-stabilizing solution to the linearization problem and (iii) \blpp admissible-message satisfies non-trivial monotonic searchability according to \srpp.

\subsection{\blpp solves the \fdp}
\label{sec:FDP_solution_proof}
This section is dedicated to prove the following theorem.

\begin{theorem}\label{thm:blpp_solves_fdp}
 \blpp is a self-stabilizing solution to the \fdp.
\end{theorem}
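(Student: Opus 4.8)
The plan is to establish the two ingredients of self-stabilization separately: closure of the set of legitimate \fdp-states (every staying node awake, every leaving node gone, and the staying nodes of each initial weakly connected component still forming one component) and convergence to such a state from an arbitrary initial state. Closure is essentially immediate: a staying node never executes \textbf{exit} (this command appears only in the leaving branch of the \timeout action), so property (i) is preserved, and once no leaving node is present the remaining nodes behave exactly as under \blp, for which weak connectivity is preserved by Lemma~\ref{lem:NG_remains_weakly_connected}. The real work is convergence, which I would split into a safety part (connectivity among present nodes is never destroyed, yielding property (iii)) and a liveness part (every leaving node eventually becomes gone, yielding property (ii)).

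For safety I would extend Lemma~\ref{lem:NG_remains_weakly_connected} to the new actions of \blpp. As in that lemma, one checks that \revandlinREQ{dir}, \revandlinACK{v,uv} and \revandlin{nodeList,uv} never drop an edge without leaving a replacement path: when a node deletes its reference to a leaving node $u$ during \revandlin{}, it simultaneously stores the neighbors of $u$ that $u$ forwarded, reconnecting it to the rest of the component. The only genuinely new event is \textbf{exit}, and here the correctness of the oracle \nidec is crucial. A node $u$ executes \textbf{exit} only when no other present node holds a reference to $u$ and $u.Ch$ is empty, so at that moment the only edges incident to $u$ are its outgoing ones. Since the \timeout action first introduces every pair of $u$'s neighbors to each other (via \introduce{} messages carrying those neighbors, not $u$) before exiting, the neighbors remain mutually weakly connected after $u$ and its outgoing edges vanish, and no reference to $u$ survives. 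Hence weak connectivity of the present nodes is preserved across \textbf{exit}.

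For liveness I would argue by induction on the leaving nodes ordered by decreasing id. Let $u$ be the present leaving node of maximum id; I claim \nidec eventually holds at $u$, so that $u$ exits. The priority rule is what makes this work: $u$ sends \revandlinREQ{right} to its left neighbors and \revandlinREQ{left} to its right neighbors. Every right neighbor of $u$ is staying (a leaving right neighbor would contradict maximality of $u$), and staying nodes always answer a \revandlinREQ{}; the left neighbors answer too, since every node responds to \revandlinREQ{right} regardless of its mode. Each resulting handshake makes the neighbor drop its reference to $u$, replacing it by references to $u$'s opposite-side neighbors. Crucially, $u$ never transmits its own reference: in its \timeout action $u$ sends only \revandlinREQ{} messages, and the one place where a leaving node re-introduces itself (the \introduce{self,\bot} in the right-neighbor branch of \revandlin{}) is unreachable for $u$, because $u$ receives no \revandlinREQ{right} (no higher-id leaving node exists to send one), hence sends no \revandlinACK{}, and therefore receives no \revandlin{} message at all. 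Consequently references to $u$ can only be removed, never freshly created; once the last one is gone and $u.Ch$ has drained, \nidec holds and $u$ exits. Removing $u$ strictly decreases the number of present leaving nodes, so the hypothesis applies to the new maximum, and after finitely many exits property (ii) is attained.

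The main obstacle is the claim inside the liveness step that the references to the maximum-id leaving node $u$ are eventually and permanently eliminated, once the finitely many initial messages carrying $u$'s reference have been consumed. The difficulty is that the linearization layer (\introduce{}, \linearize{}, \tempdelegate{}) may transiently duplicate or relocate an existing reference to $u$ among staying nodes while the reversal handshakes are in progress, so a naive count of references to $u$ need not be monotone. I would handle this with a carefully weighted potential over the references to $u$ (assigning weights to a stored reference and to each message type that can carry it) and show that every reversal handshake strictly decreases it while every linearization step leaves it unchanged, using precisely the facts that $u$ never emits its own reference and that no higher-priority node re-introduces $u$. Verifying that this potential is well defined and monotone across all interleavings of the reversal and linearization actions is the technically delicate core of the proof.
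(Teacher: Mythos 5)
Your safety half is sound and matches the paper's own argument (Lemma~\ref{lem:fdp:safety}): the reversal handshake always leaves a replacement path, and \textbf{exit} is guarded by \nidec plus the pairwise introduction of neighbors. The genuine gap is in your liveness induction. You process leaving nodes by \emph{decreasing} id and claim that for the maximum-id present leaving node $u$, every holder of a reference to $u$ is reached by a handshake, so that ``references to $u$ can only be removed, never freshly created'' and \nidec eventually holds at $u$. But $u$ only sends \revandlinREQ{} messages to nodes \emph{it} has references to. A node that points to $u$ without $u$ knowing it never receives a request. For staying holders this self-corrects (they eventually send \introduce{self,\bot} to $u$ in \timeout, after which $u$ can run the handshake), but for a \emph{leaving} holder $v$ with $id(v)<id(u)$ and $u \in Right(v)$ it does not: a leaving node deletes a right-reference only inside the \revandlin{} action, which it can reach only after it has sent a \revandlinACK{}, which in turn requires receiving \revandlinREQ{right} --- and such requests are sent only by higher-id leaving nodes that know $v$. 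Since $v$ never volunteers its own reference (exactly the property you prove for $u$ applies to $v$ as well), $u$ may never learn of $v$, the edge $(v,u)$ then persists, and \nidec never holds at $u$ while $v$ is present. Concretely, take four nodes $a<v<w<u$ with $a,w$ staying and $v,u$ leaving, and initial explicit edges $Right(a)=\{v\}$, $Left(v)=\{a\}$, $Right(v)=\{u\}$, $Right(w)=\{u\}$, $Left(u)=\{w\}$: references to $v$ only ever drift \emph{toward} $v$, so $u$ never learns $v$, and $u$ can exit only \emph{after} $v$ has exited --- the exact opposite of your induction order. Trying to repair this by letting lower-id nodes exit first makes the induction circular, since their departure is what you are trying to prove.

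This is precisely why the paper's proof (Lemma~\ref{lem:fdp:liveness}) is structured so differently. It does not target the maximum-id leaving node; instead it considers the set $L$ of leaving nodes that are never pointed to (via $Left$/\templeft) by any higher-id leaving node, and targets $u^*$, the \emph{minimum}-id node in $L$. It then runs a left-to-right induction over \emph{all} leaving nodes with id at most $id(u^*)$, using Lemma~\ref{lem:staying_node_stops_pointing} and Lemma~\ref{lem:leaving_node_stops_pointing}, whose preconditions (no edges from further left onto the holder; no edges from the holder onto intermediate leaving nodes; existence of a higher-id leaving node pointing at a leaving holder) encode exactly the dependencies your argument skips --- in particular that a staying node cannot be permanently cleansed of references to $u^*$ while it may still receive \revandlin{} lists from intermediate leaving nodes. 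Your closing idea of a weighted potential over references to $u$ addresses only the relocation/duplication issue among staying nodes and is in any case asserted rather than constructed; even if it were carried out, it would not repair the wrong choice of which leaving node can be shown to depart first.
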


First of all, we prove the \emph{safety} property. Let $PNG$ be the subgraph of $NG$, whose nodes are all present nodes. 

\begin{lemma}
\label{lem:fdp:safety}
 If a computation of \blpp starts in a state in which $PNG$ is weakly connected, $PNG$ remains weakly connected in every state of this computation.
\end{lemma}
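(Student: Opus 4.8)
The plan is to mirror the structure of the proof of Lemma~\ref{lem:NG_remains_weakly_connected}, which established weak connectivity preservation for \blp, but now track connectivity only among \emph{present} nodes. Since a node becomes gone only by executing \textbf{exit} in its \timeout action when \nidec holds, the crux is twofold: first, show that every action other than \textbf{exit} preserves weak connectivity of $PNG$ exactly as in \blp (because for two staying nodes \blpp behaves like \blp, and the new leaving-related actions merely redirect references into \templeft/\tempright or generate \tempdelegate{} messages, none of which destroys a path); second, show that the single \textbf{exit} step cannot disconnect $PNG$.

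First I would handle the non-\textbf{exit} actions. The key observation, just as in Lemma~\ref{lem:NG_remains_weakly_connected}, is that whenever a message carrying a reference to $v$ is consumed by a node $u$, either $v$ is stored in one of $u$'s neighbor sets ($Left$, $Right$, \templeft, or \tempright) or a new message carrying $v$ is sent to some node $w$ in $u$'s neighborhood, so the edge $(u,v)$ is replaced by a path $(u,w,v)$. I would go through the genuinely new actions --- \revandlinREQ{}, \revandlinACK{}, and \revandlin{} --- and verify this property for each. In particular, in \revandlin{nodeList, uniqueValue} the reference $v$ is deleted from a neighbor set, but only after the nodes in $nodeList$ (received from $v$ on the opposite side) are stored; I would need to confirm that this still leaves a path between $u$ and $v$ through the reversed edges, using that $v$ originally sent these neighbors precisely to replace itself. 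This is the same ``introduce-before-delegate'' bookkeeping used throughout \blp.

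The main obstacle, and the heart of the lemma, is the \textbf{exit} step. Here I would argue that when a leaving node $u$ executes \textbf{exit}, it does so only under \nidec, which guarantees that no other present node holds a reference to $u$ and that $u.Ch$ is empty. Crucially, \emph{before} exiting, $u$'s \timeout action introduces all of its neighbors to each other (the nested double loop sending \introduce{w,\bot} and \introduce{v,\bot} messages over $Left \cup Right \cup \templeft \cup \tempright$), so that every pair of $u$'s neighbors becomes connected by a path not passing through $u$. I would show that deleting $u$ and all its incident edges therefore leaves the remaining present nodes weakly connected: any path through $u$ used $u$ only as an intermediary between two of its neighbors, and those neighbors are now joined by an alternative path. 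I expect the subtle point to be reconciling the instantaneous atomicity of \textbf{exit} with the fact that the replacement edges are merely \emph{implicit} (sitting in channels as \introduce{} messages) rather than explicit at the moment of exit --- but since weak connectivity of $PNG$ is defined over $NG$, i.e.\ over both explicit and implicit edges, these implicit introduction edges suffice to maintain weak connectivity immediately.

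Finally, I would combine these two parts by induction over the computation: weak connectivity of $PNG$ holds in the initial state by hypothesis, each non-\textbf{exit} step preserves it by the path-replacement argument, and each \textbf{exit} step preserves it by the \nidec-plus-introduction argument. This establishes the safety property and hence Lemma~\ref{lem:fdp:safety}.
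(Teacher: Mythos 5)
Your overall decomposition matches the paper's proof: the staying-node actions are covered by the argument of Lemma~\ref{lem:NG_remains_weakly_connected}, the leaving-node variants of \introduce{}, \linearize{} and \tempdelegate{} (and the \srpp actions) only redirect references into \templeft and \tempright, and the \textbf{exit} step is handled exactly as in the paper: \nidec guarantees no incoming edges and an empty channel, and the all-pairs \introduce{\cdot,\bot} messages sent within the same atomic \timeout execution create implicit edges among the departing node's neighbors, which count in $NG$ and hence in $PNG$. That part of your plan is sound and essentially identical to the paper's argument.

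The gap is in your treatment of \revandlin{nodeList,uniqueValue}, the one action that actually deletes a reference. You propose to reconnect $u$ and the deleted node $v$ ``through the reversed edges, using that $v$ originally sent these neighbors precisely to replace itself,'' i.e., via the stored $nodeList$. This justification does not work. First, $nodeList$ can be empty: a leaving node whose opposite-side neighbor set is empty still answers a \revandlinACK{} with an empty list, in which case storing $nodeList$ provides no path at all. Second, in a self-stabilizing setting the \revandlin{} message may be a corrupted initial message whose $nodeList$ and $uniqueValue$ are arbitrary (the matching of $uniqueValue$ gives no semantic guarantee). Third, even for a genuine message, the members of $nodeList$ are nodes that $v$ points \emph{to}, and $v$ may have deleted or delegated those references by the time $u$ processes the message, so no undirected path $u$--$nodeList$--$v$ is guaranteed to survive. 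What actually preserves connectivity is independent of $nodeList$: in the same atomic step in which $u$ removes $v$ from its neighbor set, the protocol sends \introduce{self,\bot} to $v$, so the explicit edge $(u,v)$ is replaced by the implicit edge $(v,u)$ in $NG$ --- the edge is reversed, not destroyed. This single observation, which is precisely how the paper proves this case, closes the gap; with it, your induction over the computation goes through.
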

\begin{proof}
 Note that the result of Lemma~\ref{lem:NG_remains_weakly_connected} still holds for the actions \timeout, \introduce{}, \linearize{} and \tempdelegate{} in case the executing node is staying.
 Furthermore, the result directly transfers to \introduce{}, \linearize{} and \tempdelegate{} if the executing node is leaving, since the only change is that references are stored in \templeft and \tempright instead of $Left$ and $Right$.
 The same is true for the actions of \srpp: \forwardprobe{}, \psuccess{} and \pfail{}.
Moreover, a leaving node executing the \timeout action can only endanger weak connectivity, if it executes \textbf{exit}.
However, in that situation \nidec is true for the node and it introduces all neighbors to each other before calling the \textbf{exit} command.
Hence, weak connectivity is also nevertheless preserved for all present nodes.

For the three new actions of \blpp we note that the only action that actively deletes a reference is \revandlin{}. 
However, if that happens, an \introduce{} message containing the own reference is sent to the deleted node.
Thus an explicit edge $(a,b)$ is replaced by an implicit edge $(b,a)$ (i.e., the edge is \emph{reversed}) and weak connectivity is preserved.
\end{proof}
Second, we prove the \emph{Liveness} property:
\begin{lemma}\label{lem:fdp:liveness}
 For any computation of \blpp there exists a computation suffix in which all leaving nodes are gone.
\end{lemma}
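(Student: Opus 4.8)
The plan is to show that every leaving node eventually executes \textbf{exit}. Since the only way a present leaving node becomes gone is via the \textbf{exit} command in its \timeout action, and that command fires exactly when \nidec is \textbf{true}, the whole task reduces to proving that \nidec eventually holds for each leaving node. Recall that \nidec is \textbf{true} for a node $u$ precisely when no other node stores a reference to $u$ (neither in local memory nor in a pending message in its channel) and $u.Ch$ is empty. First I would establish a priority/ordering argument: by the protocol, a leaving node honors a \revandlinREQ{right} request (coming from a left neighbor) but ignores a \revandlinREQ{left} request when it is itself leaving, so effectively the leaving node with the largest id among a group of mutually-referencing leaving nodes is never blocked by another leaving node. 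This suggests an induction on leaving nodes ordered by decreasing id (or, more carefully, considering at each stage the current maximal leaving node that still has incoming references).

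The key steps I would carry out in order are as follows. First, I would invoke Theorem~\ref{thm:blp_solves_linearization} (and its \blpp analogue, which is proved later but whose topological convergence I may use here for the sets $Left$ and $Right$ of staying nodes) together with Lemma~\ref{lem:fdp:safety} to argue that the topology stabilizes enough that no \emph{new} spurious references to a given leaving node $u$ are generated from far away — i.e., eventually only $u$'s genuine neighbors hold references to $u$, and delegation/linearization no longer introduces fresh edges pointing at $u$. Second, I would analyze the reversal handshake: when leaving node $u$ sends \revandlinREQ{} to a neighbor $v$, a staying $v$ (or a leaving $v$ with larger id, since $u$ lies to its left) replies with \revandlinACK{v,uniqueValue}, to which $u$ responds with \revandlin{nodeList,uniqueValue}; upon receipt of the matching unique value, $v$ deletes its reference to $u$ and reverses the edge. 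I would argue that each such handshake strictly reduces the number of explicit edges pointing \emph{into} $u$, and that by the fair-message-receipt and weakly-fair-action assumptions every initiated handshake completes. Third, combining these, once $u$ is the ``topmost'' leaving node still holding incoming references, all its incoming edges get reversed away, its channel drains (no new messages of the relevant types arrive), and \nidec becomes \textbf{true}, so $u$ executes \textbf{exit} and becomes gone. Removing $u$ then exposes the next leaving node, and the induction proceeds.

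The main obstacle I expect is twofold. The subtle part is ruling out an infinite ``ping-pong'' or livelock in which references to a leaving node keep being regenerated — for instance, the leaving node itself must \emph{never} send its own reference (the paper explicitly notes this design choice, following~\cite{departure1}), so I must verify that no action path causes $u$'s reference to be re-injected into the system after a handshake removes it; this is where the $dir \in \{\text{left},\text{right}\}$ asymmetry and the priority-by-id rule are essential to break symmetry between two mutually-leaving neighbors and guarantee progress. The second delicate point is handling the unique-value matching: I must ensure that stale \revandlin{} messages (whose unique values do not match) are harmlessly reprocessed via \tempdelegate{} and do not reintroduce a persistent reference to $u$, and that the generated unique values are indeed fresh so that no valid handshake is spuriously rejected. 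Once these progress and no-regeneration claims are in place, the well-founded induction on the leaving nodes ordered by id delivers the desired computation suffix in which all leaving nodes are gone.
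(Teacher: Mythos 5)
Your overall architecture (drain old messages, analyze the reversal handshake, then induct over the leaving nodes) matches the paper's, but your induction pivot is wrong, and this is precisely the crux of the liveness argument. Your claim that ``once $u$ is the topmost leaving node still holding incoming references, all its incoming edges get reversed away'' is false. Consider a leaving node $u$ that stores a reference to a leaving node $z$ with $id(u)<id(z)$, while $z$ stores no reference to $u$ and no leaving node with id larger than $id(u)$ stores $u$ in its $Left$ or \templeft. The removal of the edge $(u,z)$ can only be triggered when $u$ receives a \revandlinREQ{right} message, and by the protocol such a message is sent only by a leaving node that has $u$ as a left neighbor, i.e., a higher-id leaving node pointing at $u$ --- which does not exist here ($z$ cannot prompt $u$, having no reference to $u$, and $u$, being leaving, never introduces itself to $z$ in \timeout; note also that \revandlinREQ{right} arrives from a node to the receiver's \emph{right}, not from a left neighbor as you state). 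Hence $(u,z)$ persists, \nidec stays false at $z$, and $z$ cannot exit until $u$ itself exits and its edges are thereby deleted: the ``topmost'' node is blocked by a lower one. So neither decreasing-id order nor ``maximal with incoming references'' works; your symmetry-breaking observation is only valid for \emph{mutually} referencing pairs, and the one-way case breaks the induction at its first step. The paper instead chooses $u^*$ as the \emph{minimum}-id node in the set $L$ of leaving nodes that are never pointed at (via $Left$/\templeft) by a higher-id leaving node: this guarantees that every leaving node to the left of $u^*$ \emph{is} prompted by some higher-id leaving node (condition (iii) of Lemma~\ref{lem:leaving_node_stops_pointing}), and an induction over the leaving nodes in \emph{increasing} id order then clears all edges entering $u^*$ from the left, while edges entering $u^*$ from the right come only from staying nodes and are handled by (the analogue of) Lemma~\ref{lem:staying_node_stops_pointing}.

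A second genuine flaw is circularity: you invoke the \blpp analogue of Theorem~\ref{thm:blp_solves_linearization} to argue that spurious references to a leaving node stop being generated. But Theorem~\ref{thm:blpp_solves_linearization} is proved \emph{from} Lemma~\ref{lem:fdp:liveness} (the paper derives linearization of \blpp from the suffix in which all leaving nodes are already gone, after which \blpp behaves like \blp), so you may not use it here; and Theorem~\ref{thm:blp_solves_linearization} by itself says nothing about computations in which leaving nodes are still present and executing the handshake actions. The paper avoids this by defining the drained suffixes $CS_1$--$CS_3$ and proving, directly from the protocol, seven structural invariants on the messages that can still be in transit; it is these invariants --- not topological convergence --- that rule out regeneration of references to $u^*$.
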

 \begin{proof}
   Assume for contradiction there is a computation $C$ of \blpp for which there does not exist a computation suffix in which all leaving nodes are gone.
   Let $CS_1$ be the suffix of $C$ in which (i) all nodes that will ever decide to be leaving have done so and (ii) all leaving nodes that will execute \textbf{exit} are gone.
Since the node set is finite such a suffix has to exist.
   Let $s_1$ be the first state of $CS_1$.

  Let $CS_2$ be the suffix in which all \introduce{}, \linearize{}, \tempdelegate{}, \revandlinREQ{}, \revandlinACK{}, \revandlin{}, \psuccess{}, \pfail{} and \forwardprobe{}, messages that were in the incoming channel of any node in state $s_1$ have been received and all \revandlin{} messages sent in response to a \revandlinACK{} in $s_1$ have also been received.
  Note that for all states in $CS_2$ it holds that holds that $dest$ is staying, since leaving nodes answer every \forwardprobe{} with a \pfail{}.
    Additionally, leaving nodes do not send \forwardprobe{} messages in $CS_1$ so the number of  \forwardprobe{} message in $CS_2$ for which the $source$ is leaving is upper bounded.
  In fact, for $CS_2$ it holds that any \forwardprobe{} message has been received at least once.
  Therefore, a node cannot be added twice to the $Next$ field of a message since \forwardprobe{}messages are only forwarded into one direction according to the protocol, i.e., a \forwardprobe{} will visit only nodes with increasing id or only with decreasing ids.
  Therefore, each \forwardprobe{} can only be forwarded finitely many often and is thereby answered by \psuccess{} or \pfail{} eventually.
  Consequently, there is also a state (and thereby a computation suffix $CS_3$), in which all \forwardprobe{} message which have a leaving node as the $source$ are answered by their \psuccess{} or \pfail{} and also these \psuccess{} or \pfail{} messages in the incoming channel of a leaving node have been received.

  Note that in every state of $CS_3$, every message that is in $x.Ch$  has been sent in $CS_1$.
  We call the node that adds a message into the incoming channel the \emph{sender} of the message.
  By the definition of $CS_3$, the following invariants hold (which is easy to check, according to the protocol):
  \begin{enumerate}
   \item If \forwardprobe{source,destID,Next,seq} message is in $x.Ch$ and $id(source) < destID$, then for all $y \in Next$ with $y \neq x$: $id(y) > id(x)$ and when the sender $z$ sent the \forwardprobe{source,destID,Next,seq} to $x$, either $x \in Right(z) \cup \tempright(z)$ or $z = x$.
   \item If there is a \tempdelegate{y} message in $x.Ch$ and $id(x) < id(y)$, then for the sender $z$, $id(z) < id(x) < id(y)$ or $z = x$.
   \item If there is an \introduce{y,z} message in $x.Ch$ with $z \neq bot$ and $id(x) < id(y)$, then $z$ is the sender and when $z$ sent the \introduce{y,z} message, $y \in Right(w)$ (and vice-versa).
   \item If there is an \introduce{y,\bot} message in $x.Ch$ then either $y$ is also the sender and $y$ is not leaving (since otherwise $y$ would execute \textbf{exit} after sending the message contradicting the definition of $C$) or the sender $z \neq y$ is staying and sent the message as an answer to an \linearize{y} message.
   \item If there is a \linearize{y} message in $x.Ch$ and $id(x) < id(y)$, then in the state in which the sender $z$ sent the \linearize{y} message, it must have done so in response to an \introduce{y,x} it received.
   \item If there is a \revandlinACK{y,uniqueValue} message in $x.Ch$, then $y$ is the sender, and $id(y) < id(x)$ and in the state in which $y$ sent the message, $x \in Right(y) \cup \tempright(y)$ (or $id(x) < id(y)$ and $y$ is staying).
   \item If there is a \revandlin{nodeList,uniqueValue} in $x.Ch$ then the sender $z$ must be leaving and for every $y \in NodeList$, it holds that $id(y) > id(z)$. Additionally, the message is a response due to a \revandlinACK{x,uniqueValue} message received by $z$.
  \end{enumerate}
In order to prove the desired statement, we first show two additional lemmas before continuing with the proof.

\begin{lemma}
\label{lem:staying_node_stops_pointing}
Consider a state $s_3$ of $CS_3$ and let $u$ be a staying node and $v$ be a leaving node with $id(u)<id(v)$. 
If it holds in $s_3$ that (i) there is no edge $(u',v) \in NG$ with $id(u')<id(u)$, 
and (ii) for any leaving node $v'$ with $id(u)<id(v')<id(v)$ there will never be an edge $(u,v') \in NG$ in a subsequent state, then there is a state $s'$ in $CS_3$ such that for the computation suffix $CS'$ starting in $s'$ it holds that $(u,v) \notin NG$ for every state in $CS'$.
\end{lemma}

\begin{proof}
Since there is no edge $(u',v) \in NG$ with $id(u')<id(u)$, no node to the left of $u$ can add a message to $u.Ch$ that contains the reference of $v$.
Additionally, since $id(u)<id(v)$ no node $x$ to the right of $u$ can add a \tempdelegate{v} or \introduce{v,x} message to $u.Ch$, according to the protocol.
Moreover, no leaving node to the right can add a message to $u.Ch$ that contains the reference of $v$ (i.e., a \revandlin{} message).
This is due to the fact that a \revandlin{} message to sent $u$ by a leaving node $v'$ with $id(u)<id(v')<id(v)$ can only be sent as a response to a \revandlinACK{u,uniqueValue} by $v'$ (see Invariant~7), which cannot happen since there will never be an edge $(u,v') \in NG$.
Note that we only consider states in $CS_3$, therefore the above mentioned invariants hold.

At first assume that no edge $(u,v)$ exists.
If never gets a reference to $v$ in $CS_3$ the lemma holds trivially.
Consequently, $u$ can only get the reference of $v$ in an \introduce{v,\bot} or in a \linearize{v}  message.
In the first case, the \introduce{v,\bot} was sent by a node $w\neq v$ as a response to a former \linearize{v} message, according to Invariant~4.
According to the pseudocode of \linearize{}, this can only happen if $id(v)>id(u)>id(w)$ or $id(v)<id(u)<id(w)$.
Both cases cannot happen since $id(u)<id(v)$ and no node to the left of $u$ can add a message to $u.Ch$.
So in this scenario, the lemma holds as well.
In the second case, the \linearize{v} message $u$ will send a \tempdelegate{v} it itself.
Consequently, there is a state in $CS_3$ in which an edge $(u,v)$ exists, which is handled in the following.

Now consider the case that an edge $(u,v)$ exists.
Note that $(u,v)$ can be a multi-edge and be explicit as well as implicit. 
In fact, it can be both and if it is implicit it can be due to multiple messages in $u.Ch$.
At first we show that all messages in $u.Ch$ that contain a reference to $v$, will be made explicit or vanish completely.
\begin{itemize}
\item If there is an \introduce{v,\bot} message in $u.Ch$ then $u$ will send a \tempdelegate{v} message to itself upon receipt.
\item There can be no \introduce{v,w} for some node $w$ in $u.Ch$, since (i) if $id(w)<id(u)$, then due to Invariant~3 $v \in Right(w)$ which contradicts the choice of $u$ and (ii) if $id(w)>id(u)$ then according to the pseudocode $w$ can only send \introduce{u,w} message to $v$ and not vice-versa.
\item If there is a \linearize{v} message in $u.Ch$, then $u$ will either convert it into a \tempdelegate{v} message to itself or delete a previously saved reference to $v$ and send an \tempdelegate{v} to a node with a higher id.
\item If there is a \tempdelegate{v} message in $u.Ch$, $u$ either saves the reference (thereby deleting the implicit edge) or sends a \tempdelegate{v} to a node to a node with a higher id.
\item If there is \revandlin{} message in $u.Ch$ (i.e., $v \in nodeList$), then $u$ either saves the reference or sends  a \tempdelegate{v} to itself.
\item There can be no \revandlinACK{} message in $u.Ch$  (since $u$ is staying).
\end{itemize}
Consider the case in which $(u,v)$ is explicit.
If there is no node $x \in Right(u)$ with $id(u) < id(x) < id(v)$, then $u$ will introduce itself to $v$ in \timeout. 
The leaving node $v$ saves the reference of $u$ and sends \revandlinREQ{} to $u$, and according to the protocol $u$ will eventually delete its reference to $v$ due to \revandlin{} message.
If there exists a $x \in Right(u)$ with $id(u) < id(x) < id(v)$, then by definition of $v$ the node $x$ is staying.
In \timeout $u$ will send an \introduce{v,u} message to $x$, $x$ will respond to $u$ with a \linearize{v} message causing $u$ to remove $v$ from $Right(u)$.
Thus, there will be a state in which $u$ will never have an explicit or implicit reference to $v$ again.

Similar to the case that an edge $(u,v)$ does not exist, $u$ could always possibly get the reference of $v$ in an \linearize{v} or in an \introduce{v,\bot} message (i.e., we cannot exclude that nodes to the right of $v$ send these). 
However, the \linearize{v} message has to be a response to a former \introduce{v,u} message by $u$ according to Invariant~5, which are only sent by $u$ if it still has the reference to $v$.
Moreover, the \introduce{v,\bot} was sent by a node $w\neq v$ as a response to a former \linearize{v} message, according to Invariant~4.
Again, this can only happen if $id(v)>id(u)>id(w)$  $id(v)<id(u)<id(w)$ (i.e., it never happens). 

\end{proof}

\begin{lemma}
\label{lem:leaving_node_stops_pointing}
Consider a state $s_3$ of $CS_3$ and let $u$ and $v$ be leaving nodes with $id(u)<id(v)$.
If it holds in $s_3$ that (i) there is no edge $(u',u) \in NG$ with $id(u')<id(u)$, (ii) for any leaving node $v'$ with $id(u)<id(v')<id(v)$ there will never be an edge $(u,v') \in NG$ in a subsequent state, and (iii) there exists a $(w,u) \in NG$ with $w$ leaving and $id(u)<id(w)$,
% and (iii) there is no edge $(u,v') \in NG$ with $id(v')<id(v)$ and $v'$ is leaving, 
  then there is a state $s'$ in $CS_3$ such that for the computation suffix $CS'$ starting in $s'$ it holds that $(u,v) \notin NG$ for every state in $CS'$.
\end{lemma}

\begin{proof}
Since there is no edge $(u',u) \in NG$ with $id(u')<id(u)$, no node to the left of $u$ can add a message to $u.Ch$.
Additionally, since $id(u)<id(v)$ no node $x$ to the right of $u$ can add a \tempdelegate{v} or \introduce{v,x} message to $u.Ch$, according to the protocol.
Furthermore, no node to the right of $u$ can send a \linearize{v} to $u$, since the message has to be a response to a former \introduce{v,u} message by $u$ (according to Invariant~5), which $u$ does not send.
Moreover, no node to the right of $u$ can send an \introduce{v,\bot}, since it is has to be sent by a node $w\neq v$ as a response to a former \linearize{v} message, according to Invariant~4.
This can only happen if $id(v)>id(u)>id(w)$ or $id(v)<id(u)<id(w)$ (i.e., it never happens).
Finally, no leaving node to the right can add a message to $u.Ch$ that contains the reference of $v$, because for any leaving node $v'$ with $id(u)<id(v')<id(v)$ there will never be an edge $(u,v') \in NG$ and Invariant~7.
Note that we only consider states in $CS_3$, therefore the above mentioned invariants hold.

At first assume that no edge $(u,v)$ exists.
Analogous to the same situation in Lemma~\ref{lem:staying_node_stops_pointing}, one can show that statement of the lemma is true.

In case $(u,v)$ exists, $(u,v)$ can be a multi-edge and be explicit as well as implicit. 
At first consider all implicit edges $(u,v)$.
\begin{itemize}
\item If there is an \introduce{v,\bot} message or \introduce{v,x} message in $u.Ch$ for some node $x$, then $u$ will save the reference of $v$.
\item In case there is a \tempdelegate{v} message in $u.Ch$, $u$ either saves the reference (thereby deleting the implicit edge) or sends an \tempdelegate{v} to a node with a higher id.
\item If there is a \revandlin{} message in $u.Ch$, it cannot contain the reference of $v$, since for the leaving sender $id(u)<id(sender)<id(v)$ has to hold (contradicting the choice of $(u,v)$ and the fact that for any leaving node $v'$ with $id(u)<id(v')<id(v)$ there will never be an edge $(u,v') \in NG$).
\item There can be no \revandlinACK{v,uniqueValue} message in $u.Ch$ that contains $v$ (since $id(u)<id(v)$ and Invariant~6).
\end{itemize}

Therefore eventually, $(u,v)$ is only an explicit edge.
Due to our choice of $u,v,w$ in the statement the node $w$ eventually sends a \revandlinREQ{right} to $u$ and $u$ responds with a \revandlin{u,uniqueValue} to $v$.
Node $v$ will receive said message, save $u$ in its local memory and send a \revandlin{nodeList,uniqueValue} back to $u$.
Consequently, $u$ deletes its reference to $v$ and saves the $nodeList$ instead.
Note that any further \revandlin{nodeList,uniqueValue} message from $v$ do not create an edge $(u,v)$, since $u$ has no node $x$ in its local memory with $uniqueValue[x]=uniqueValue$, so it only saves the nodeList itself.
Thus, there is a $s'$ in $CS_3$ such that for the computation suffix $CS'$ starting in $s'$ it holds that $(u,v) \notin NG$ for every state in $CS'$.
\end{proof}

 With these two lemmas in place, we can focus on the main statement.
 Note that since $CS_3$ is a computation suffix of $C$, by our initial assumption there exists at least one present leaving node in $CS_3$.
  Consider the set $L$ of present leaving nodes $x$ with the property that throughout $CS_3$ there does not exist a leaving node $y$ with $id(y) > id(x)$ with $x \in Left(y)$ or $x \in \templeft[y]$.
  Furthermore, let $u^*$ be the node with minimum id in $L$.
  Such a node must always exist since due to Lemma~\ref{lem:left_and_right_are_what_they_say}, the present leaving node with highest id is always in $L$.

We will show a contradiction to our initial assumption by proving that the node $u^*$ can execut e\textbf{exit} eventually. In order to do so consider the following lemma.

\begin{lemma}
There is a computation suffix $CS^*$ of $CS_3$ such that no edge $(u,u^*)$ with $id(u)<id(u^*)$ exists in $CS^*$.
\end{lemma}

\begin{proof}
We will prove the statement by induction over all leaving nodes $v$ with $id(v) \leq id(u^*)$.
For the sake of simplicity we address those nodes by $v_1, v_2, \ldots v_k =u^*$ with $id(v_i)<id(v_{i+1})$

For the induction base consider the leaving node with lowest id $v_1$.
Let $w_1, \ldots,w_m$ with $id(w_i)<id(w_{i+1})$ be all nodes with a lower id than $v_1$.
By definition all $w_i$ nodes are staying.
Due to the definition of $v_1$ and $w_1$ Lemma~\ref{lem:staying_node_stops_pointing} is applicable (in fact part (ii) of the if-statement is irrelevant) and there is a suffix such that $(w_1,v_1)$ will cease to exist forever.
Consequently, Lemma~\ref{lem:staying_node_stops_pointing} is applicable to $w_2$ and we can continue this approach until we have a suffix such that no edge $(u,v_1)$ with $id(u)<id(v_1)$ exists in that suffix.

For the induction step assume that the statement holds for some leaving node $v_i$.
Similar to the induction base let $w_1, \ldots,w_\ell$  be all nodes with a lower id than $v_i$ and let $w_{\ell+1}, \ldots,w_m$ be all nodes with an id bigger than $v_i$ but smaller than $v_{i+1}$ (with $id(w_i)<id(w_{i+1})$).
At first consider all $w_i \in \{w_1, \ldots,w_\ell \}$ in increasing order. 
In case the currently considered node $w_i$ is staying, we can apply Lemma~\ref{lem:staying_node_stops_pointing} to show that there is a suffix such that all nodes with an id lower than $w_i$ will never have an edge to $v_{i+1}$.
In case the currently considered $w_i$ is leaving we can apply Lemma~\ref{lem:leaving_node_stops_pointing} to get the same outcome.
Now consider $v_i$, by the induction hypothesis, we know that we can also apply Lemma~\ref{lem:leaving_node_stops_pointing}.
For all $w_i \in \{w_{\ell+1}, \ldots,w_m \}$ we know that they are staying, i.e., Lemma~\ref{lem:staying_node_stops_pointing} is applicable again.
Therefore the induction step is complete which proves the statement.
\end{proof}
Aisde from this, we can show that there is also a computation suffix in which there exists no edge $(u,u^*)$ with $id(u)>id(u^*)$.
We can do so by an argument analogous to Lemma~\ref{lem:staying_node_stops_pointing} (only that in this case the staying nodes has a higher id) and due to the choice of $u^*$ (i.e., throughout the computation suffix $CS_3$ only staying nodes with higher id have an edge to $u^*$).

Consequently, there exists a state in $CS^*$ (and thereby also in $C$) such that for all nodes $u$ no edge $(u,u^*)$ exists. Therefore, $u^*$ cannot receive any messages anymore and once its channel is empty, \nidec evaluates to true (i.e., it executes \textbf{exit}). This is a contradiction to the choice of $C$.

 \end{proof}

\subsection{\blpp solves the linearization problem}
\label{sec:blpp_linearization_proof}
Here, we show the following theorem.

\begin{theorem}\label{thm:blpp_solves_linearization}
 \blpp is a self-stabilizing solution to the linearization problem.
\end{theorem}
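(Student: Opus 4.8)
The plan is to prove the theorem by \emph{reduction} to the already-established correctness of \blp (Theorem~\ref{thm:blp_solves_linearization}). The crucial observation is that \blpp differs from \blp only in the behavior of leaving nodes and in the three departure-specific actions \revandlinREQ{}, \revandlinACK{} and \revandlin{}; for two staying nodes the two protocols coincide, and a staying node never writes to the auxiliary sets \templeft and \tempright. Hence, once no leaving node is present anymore, the restriction of \blpp to the present nodes behaves exactly like \blp, so convergence and closure should follow from Section~\ref{sec:self_stabilization_proof} applied to the staying nodes.

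First I would invoke the liveness part of the \fdp solution: by Lemma~\ref{lem:fdp:liveness} every computation of \blpp has a suffix $CS$ in which all leaving nodes are gone, so that every present node in $CS$ is staying. By the safety part, Lemma~\ref{lem:fdp:safety}, $PNG$ stays weakly connected throughout; since the computation starts in a weakly connected state and the present nodes in $CS$ are precisely the staying nodes, the staying nodes form a weakly connected graph in every state of $CS$, which is exactly the target vertex set for the line.

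Next I would argue that all departure-specific messages vanish after finitely many steps of $CS$. No new \revandlinREQ{} or \revandlin{} message is created in $CS$, since both are only ever sent by a leaving node, and there are only finitely many such residual messages. A residual \revandlinREQ{} processed by a (staying) node only spawns \revandlinACK{} messages, which a staying node turns into \tempdelegate{} messages; and a residual \revandlin{} is, in a staying node, either completed as a connectivity-preserving edge reversal (which, as there are finitely many such messages, occurs only finitely often) or turned into \tempdelegate{} messages. Crucially, because the oracle \nidec guarantees that at the moment a node exits there is no reference to it anywhere (neither in local memory nor in any channel) and the protocol is compare-store-send, no present node ever holds a reference to a gone node in $CS$; consequently no residual \revandlin{} can match the stored unique value of a gone neighbor, so the branch that would send an \introduce{} to a gone node is never taken. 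Thus none of these residual interactions cascades into further departure-specific messages, and after finitely many steps the only messages left are \blp/\srp messages (\introduce{}, \linearize{}, \tempdelegate{}, \forwardprobe{}, \psuccess{}, \pfail{}), all edges point only to present nodes, and every present node's state lies entirely within the \blp state space. From that point on the computation is a legal computation of \blp on the weakly connected set of staying nodes, and I would directly apply the weak-connectivity invariant of Lemma~\ref{lem:NG_remains_weakly_connected}, the potential-function argument culminating in Corollary~\ref{cor:explicit_edge_graph_will_become_supergraph}, and the superfluous-edge elimination of Lemma~\ref{lem:superfluous_edges_will_vanish} to obtain both convergence to the line topology on the staying nodes and the closure property.

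I expect the main obstacle to be the careful bookkeeping of the third step: one has to verify that the departure-specific machinery genuinely terminates in $CS$ and, above all, that it never manufactures a dangling reference to a gone node, which would break the reduction and could even threaten weak connectivity. This is exactly where the precise semantics of \nidec and the compare-store-send assumption are indispensable, together with the message invariants already established in the proof of Theorem~\ref{thm:blpp_solves_fdp}. Once these guarantee a clean hand-off to a pure \blp configuration, the rest is a direct appeal to the results of Section~\ref{sec:self_stabilization_proof}.
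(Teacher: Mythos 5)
Your proposal is correct and takes essentially the same route as the paper: invoke Lemma~\ref{lem:fdp:liveness} to obtain a suffix in which all leaving nodes are gone, use Lemma~\ref{lem:fdp:safety} to retain weak connectivity of the staying nodes, and then apply Theorem~\ref{thm:blp_solves_linearization} to that suffix. The only difference is that your careful bookkeeping about residual \revandlinREQ{}/\revandlinACK{}/\revandlin{} messages and the role of \nidec in preventing references to gone nodes is compressed in the paper into the single assertion that, from that state on, \blpp ``acts exactly as \blp''.
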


\begin{proof}
Note that by Lemma~\ref{lem:fdp:liveness}, in every computation of \blpp there is a suffix in which all leaving nodes are gone.
Note that starting from this state, \blpp acts exactly as \blp.
By Lemma~\ref{lem:fdp:safety}, $NG$ is still weakly connected in this state.
Thus, the properties of Theorem~\ref{thm:blp_solves_linearization} are fulfilled, yielding that \blpp is a solution to the linearization problem as well. 
\end{proof}

\subsection{\blpp satisfies non-trivial monotonic searchability}
\label{sec:blpp_monotonic_searchability_proof}
Finally, we prove the following thereom concerning monotonic searchability.

\begin{theorem}\label{thm:blpp_guarantees_monotonic_searchability}
 \blpp admissible-message satisfies non-trivial monotonic searchability according to \srpp.
\end{theorem}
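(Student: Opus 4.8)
The plan is to mirror the development of Section~\ref{sec:monotonic_searchability_proof} almost verbatim, treating \blpp as \blp augmented with the reversal machinery for leaving nodes, and to restrict the searchability claim to pairs of staying nodes (since leaving destinations deliberately answer with \pfail{}). First I would extend the list of message invariants so that, beyond the six invariants used for \srp, it also constrains the three new message types: a \revandlinREQ{dir} carries no node reference and so needs only a direction-consistency invariant; a \revandlinACK{v,uniqueValue} in $x.Ch$ should certify that $v$ is the sender that previously held $x$ as a neighbor on the side indicated by the stored unique value; and a \revandlin{nodeList, uniqueValue} in $x.Ch$ should certify that its sender $u$ is leaving and that $nodeList$ equals $Right(u)$ (resp.\ $Left(u)$) at the moment $u$ issued the message, so that $nodeList$ \emph{covers} $u$'s rightward (leftward) reachability. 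Crucially, I would keep the reachability sets $R(\cdot)$ and $L(\cdot)$ defined through the $Right$ and $Left$ sets only, excluding \templeft and \tempright. This is justified because the \forwardprobe{} action of \srpp expands $Next$ using $Right$ (resp.\ $Left$) and never the temporary sets, so only these edges are relevant for search. With this reading, the invariants and their preservation proof (the analogue of Lemma~\ref{lem:once_admissible_always_admissible}) carry over after a routine case check of the new actions.

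The heart of the argument, and the step I expect to be the main obstacle, is re-establishing the analogue of Lemma~\ref{lem:R_grows_monotonically}: that for a staying node $v$, once a node $w$ lies in $R(v)$ it stays in $R(v)$. In \blp the only edge-deleting action was \linearize{}; in \blpp the \revandlin{} action can also delete an explicit edge. When a staying node $y$ processes a \revandlin{nodeList, uniqueValue} that matches a leaving right neighbor $u$, it removes $u$ from $Right(y)$ but simultaneously inserts all of $nodeList$, which by the new invariant equals $Right(u)$. Hence the single edge $(y,u)$ is replaced by edges $(y,x)$ for every $x \in Right(u)$, so every node reachable from $y$ through $u$ along $Right$-edges remains reachable (using the inductive monotonicity of $R(x)$ for the transferred neighbors $x$); the node $u$ itself may drop out of $R(y)$, but $u$ is leaving and therefore never a relevant search destination. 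The symmetric statement for $L(\cdot)$, and the case where the processing node is itself leaving (which stores into \templeft/\tempright and thus does not affect $R$/$L$ at all), follow analogously. This is exactly why it is essential to have defined $R$ through $Right$ rather than through all explicit rightward edges: the reversal transfers precisely the $Right$ neighbors.

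For the existence of an admissible suffix (the second condition of admissible-message satisfaction) I would invoke Lemma~\ref{lem:fdp:liveness}: every computation has a suffix in which all leaving nodes are gone, after which \blpp behaves exactly as \blp and no \revandlin{}-type message is generated or present, so the three new invariants hold vacuously and the argument of Lemma~\ref{lem:admissible_state_always_exists} together with Corollary~\ref{cor:admissible_suffix} produces an admissible state, which remains admissible thereafter. Finally, monotonic searchability in admissible computations follows by repeating the proof of Theorem~\ref{thm:blp_guarantees_monotonic_searchability} for a pair of staying nodes $u,w$: if a \search{u,id(w)} succeeds then $w \in R(u)$ by the sixth invariant, the extended Lemma~\ref{lem:R_grows_monotonically} keeps $w \in R(u)$, and Lemma~\ref{lem:w_in_R_leads_to_message_success} transfers essentially unchanged, since \forwardprobe{} forwarding uses only $Right$ edges and an intermediate leaving node still forwards (indeed, because \nidec requires an empty channel, a leaving node must process any pending \forwardprobe{} before it can \textbf{exit}). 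Thus every later probe reaches $w$, which—being staying—answers with \psuccess{}. Non-triviality then follows from Theorem~\ref{thm:blpp_solves_linearization} and the fact that on the leaving-node-free suffix searches along the target topology succeed exactly as under \blp.
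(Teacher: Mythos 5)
Your overall route is the same as the paper's: restrict the claim to pairs of staying nodes, define reachability only through the $Left$/$Right$ sets (the paper's $R_s$/$L_s$), add message invariants for the new message types, re-establish the monotonicity lemma, obtain the admissible suffix from Lemma~\ref{lem:fdp:liveness} (after which \blpp behaves like \blp and the new invariants hold vacuously), and finish by replaying the \blp argument with a forwarding lemma. This is precisely the decomposition the paper uses in Lemmas~\ref{lem:blpp_once_admissible_always_admissible}, \ref{lem:blpp_Rs_grows_monotonically}, \ref{lem:blpp_admissible_state_always_exists} and \ref{lem:blpp_w_in_R_leads_to_message_success}.

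There is, however, a genuine flaw at exactly the step you call the heart of the argument. Your invariant for \revandlin{nodeList,uniqueValue} states that $nodeList$ equals $Right(u)$ \emph{at the moment the leaving node $u$ issued the message}, and your monotonicity proof then uses it as if $nodeList = Right(u)$ held when the receiver $y$ processes the message. Neither version is usable: the first is a statement about a past state, so it cannot be carried through the state-by-state invariant-preservation induction on which the whole argument rests; the second is false in general, because a leaving node can change its own $Right$ set after issuing the message. Concretely, if $u$ has a leaving right neighbor $x \in Right(u)$ with $u \in Left(x)$, then $x$'s \revandlinREQ{right} triggers $u$ to send \revandlinACK{u,\cdot} to $x$, $x$ answers with \revandlin{Right(x),\cdot}, and upon processing it $u$ executes $Right \gets Right \cup nodeList'$ followed by $Right \gets Right \setminus \{x\}$. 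So with adjacent leaving nodes reversing concurrently, the set $y$ eventually receives need not equal $Right(u)$, and your claim that ``the single edge $(y,u)$ is replaced by edges $(y,x)$ for every $x \in Right(u)$'' breaks. The paper's repair is to weaken the invariant to a \emph{current-state reachability} statement — Invariant~4: $R_s(v) = R_s(nodeList)$ (resp.\ $L_s(v) = L_s(nodeList)$) — and to prove in Lemma~\ref{lem:blpp_once_first_four_invariants_hold_then_always} that this equality is itself preserved by every action, including $u$'s own concurrent reversals; this weaker form is exactly what Lemma~\ref{lem:blpp_Rs_grows_monotonically} needs. A related, smaller formulation issue: since your $R(\cdot)$ contains leaving nodes, your monotonicity statement ``once $w \in R(v)$, always $w \in R(v)$'' is false for leaving $w$ (reversal and \textbf{exit} remove them); the paper avoids this by keeping only staying nodes in $R_s$ and stating monotonicity as preservation of containments $\rsp(x) \subseteq R_s(v)$, with corresponding ``or leaving'' carve-outs in the \pfail{}/\psuccess{}/\search{} invariants. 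Both repairs are routine and lead to the paper's proof, but as written your key lemma does not go through.
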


In general, the proof follows the structure of the results from Subsection~\ref{sec:monotonic_searchability_proof}.
However, since we want to satisfy monotonic searchability even under the presence of leaving nodes, the proof is more involved.
First we define $R_s(v)$ as the set of all staying nodes $x$ with $id(v) < id(x)$ for which there is a directed path from $v$ to $x$ consisting solely of explicit edges $(y,z)$ with $id(y) < id(z)$ that arise from $z \in Right(y)$.
Furthermore, we define $\rsvw := \{x \in R_s(v) | id(x) \leq id(w)\}$.
In addition, we define $L_s(v)$ as the set of all staying nodes $x$ with $id(x) < id(v)$ for which there is a directed path from $v$ to $x$ consisting solely of explicit edges $(y,z)$ with $id(z) < id(y)$ that arise from $z \in Left(y)$.
%For a set $U$, $R_s(U) := \bigcup_{u \in U}R(u)$ and $R(U,ID) := \{x \in R(U) | id(x) \leq ID\}$.
For a set of nodes$U$, we define $R_s(U) := U \cup \bigcup_{u \in U}R_s(u)$ and $L_s(U) := U \cup \bigcup_{u \in U}L_s(u)$.
Additionally, we define $R_s(U,ID) := \{x \in R_s(U) | id(x) \leq ID\}$, and $L_s(U,ID) := \{x \in L_s(U) | id(x) \geq ID\}$
Last, we have $\rsp(u) := R_s(u)$ if $u$ is leaving, or $\rsp := R_s(u) \cup \{u\}$ if $u$ is staying (with $\lsp(u)$ defined analogously).

Moreover, we define the following message invariants:
\begin{enumerate}
    \item If there is an \introduce{v,w} message with $w \neq \bot$ in $u.Ch$, then $v \neq w$, and $\rsp(u) \subseteq R_s(w)$ (or $\lsp(u) \subseteq L_s(w)$).
    \item If there is a \linearize{v} message in $w.Ch$, then there is a node $u \neq v$ with $u \in Right(w)$ and $\rsp(v) \subseteq R_s(u)$ if $w < v$ (or $u \in Left(w)$ and $\lsp(v) \subseteq L_s(u)$ if $v < w$).
    \item If there is a \revandlinACK{v,uniqueValue} message in $u.Ch$, then $u \neq v$ and $u.uniqueValues[v] = uniqueValue$ and $v$ is the only node with $u.uniqueValues[v] = uniqueValue$.
    \item If there is a \revandlin{nodeList,uniqueValue} message in $u.Ch$, then there is exactly one node $v$ with $u.uniqueValues[v] = uniqueValue$.
    Furthermore, $v$ is leaving, and $R_s(v) = R_s(nodeList)$ if $u < v$ (or $L_s(v) = L_s(nodeList)$ if $v < u$).
    \item If there is a \forwardprobe{source,destID,Next,seq} message in $u.Ch$, then
    \begin{enumerate}
	\item $id(source) < destID$ and $\forall x \in Next: id(x) \geq id(u)$ and $u = argmin_u\{id(u) | u \in Next\}$ 
	(alternatively $destID < id(source)$ and $\forall x \in Next: id(x) \leq id(u)$ and $u = argmax_u\{id(u) | u \in Next\}$).
	\item $id(source) < destID$ and $R_s(Next) \subseteq R_s(source)$ (or $destID < id(source)$ and $\lsp(u) \subseteq L(source)$).
	\item if $v$ exists with $id(v) = destID$ and $v$ is staying, such that $id(source) < destID$, and $v \notin R_s(Next,destID)$ (or $id(source) < destID$ and $v \notin L_s(Next,destID)$) then for every admissible state with $source.seq[destID] < seq$, $v \notin R_s(source,destID)$ ($v \notin L_s(source,destID)$).
    \end{enumerate}
    \item If there is a \psuccess{destID, seq, dest} message in $u.Ch$, then $id(dest) = destID$ and $dest \in R_s(u)$ if $destID > id(u)$ (or $dest \in L_s(u)$ if $destID < id(u)$), or $dest$ is leaving.
    \item If there is a \pfail{destID, seq} message in $u.Ch$, then either there is no staying node with id $destID$, or for every admissible state with $u.seq[destID] < seq$, $v \notin R_s(u)$ (and $v \notin L_s(u)$), where $v$ is the node with $id(v)= destID$.
    \item If there is a \search{v, destID} message in $u.Ch$ and $u$ is staying, then $id(u) = destID$ and $u \in R_s(v)$ if $id(v) < destID$ (or $u \in L_s(v)$ if $destID < id(v)$).
\end{enumerate}
A state is therefore admissible if all four invariants hold.
As in Section~\ref{sec:monotonic_searchability_proof}, we can prove:
\begin{lemma}\label{lem:blpp_once_admissible_always_admissible}
 If in a computation of \blpp, there is an admissible state, then all subsequent states will be admissible as well.
\end{lemma}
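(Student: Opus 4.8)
The plan is to replicate the cascade of closure lemmas from Subsection~\ref{sec:monotonic_searchability_proof} (Lemmas~\ref{lem:once_first_two_invariants_hold_then_always} through~\ref{lem:once_admissible_always_admissible}): proceed by a chain of auxiliary lemmas stating that, once the first $k$ invariants hold, they keep holding, for increasing $k$. The skeleton of each step is unchanged: assume an admissible state $s_1$ with direct successor $s_2$, and show that no single enabled action can invalidate any invariant in $s_2$. The genuinely new work is fourfold: (i) the two additional invariants~3 and~4 governing the reversal machinery; (ii) the three new actions \revandlinREQ{}, \revandlinACK{} and \revandlin{}; (iii) the leaving-node branches of all existing actions, in which references land in \templeft/\tempright rather than $Left$/$Right$; and (iv) the refinement of $R(\cdot)$/$L(\cdot)$ to $R_s(\cdot)$/$L_s(\cdot)$, which count only staying endpoints reached along edges stemming from $Right$/$Left$ sets, together with the seed-inclusive variants $\rsp(\cdot)$/$\lsp(\cdot)$.

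First I would reprove the analogue of Lemma~\ref{lem:once_first_two_invariants_hold_then_always} for invariants~1 and~2. As in \blp, these can only be endangered by sending a fresh \introduce{}/\linearize{} message or by removing a node from some $Right(w)$/$Left(w)$. The new twist is that removals now occur not only in \linearize{} but also in \revandlin{}, where a staying node $u$ drops a leaving neighbour $v$ from $Right(u)$ and inserts $nodeList$ in its place. Here invariant~4 guarantees $R_s(v) = R_s(nodeList)$, so $R_s(u)$ is unchanged by the swap, mirroring exactly how invariant~2 protected \linearize{}-removals. The leaving-node branches are harmless for invariants~1 and~2 since they only ever add to \templeft/\tempright, which do not enter the definition of $R_s$/$L_s$. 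With this in hand I would reprove Lemma~\ref{lem:R_grows_monotonically} for $R_s$/$L_s$: once invariants~2 and~4 hold, any deletion of an edge $(y,z)$ from $Right(y)$—whether through \linearize{} or \revandlin{}—leaves a node through which $z$, and hence all of $R_s(z)$, stays reachable, so for a staying endpoint $x$ the membership $x \in R_s(v)$ persists. I would also record that a node switching from staying to leaving never deletes an edge and only drops the node itself from the various $\rsp(\cdot)$/$R_s(\cdot)$ sets; it therefore disturbs no other staying node's reachability, so every subset invariant survives and every invariant whose left-hand side requires staying-ness becomes vacuous for the departed node.

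Next I would establish invariants~3 and~4 directly from the protocol: a \revandlinACK{v,uniqueValue} is only emitted in \revandlinREQ{}, always to a node $\neq v$ and after storing a fresh $uniqueValues$ entry, giving invariant~3; a \revandlin{nodeList,uniqueValue} is only emitted in \revandlinACK{} by a leaving node $v$ with $nodeList$ equal to $Right(v)$ (resp.~$Left(v)$), whence $R_s(v) = R_s(nodeList)$ and invariant~4 follows. The remaining invariants then proceed along the same lines as Lemmas~\ref{lem:once_first_three_invariants_hold_then_always} through~\ref{lem:once_admissible_always_admissible}: invariant~5 because \forwardprobe{} still augments $Next$ only from $Right$ (never \tempright), so the chain of $Right$-edges underlying $R_s(Next) \subseteq R_s(source)$ is preserved; invariants~6 and~7 because a \psuccess{}/\pfail{} is still created only upon receipt of a valid \forwardprobe{}, the only new cases being that a leaving destination answers \pfail{} and that a \psuccess{} may carry a leaving $dest$ (absorbed by the ``or $dest$ is leaving'' clause); and invariant~8 because a \search{} is still launched only after a matching \psuccess{}, with its guard now conditioning on the receiving node being staying.

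I expect the main obstacle to be the monotonicity statement for $R_s$/$L_s$ under the edge-reversal protocol. Unlike in \blp, where an edge left a $Right$ set only via \linearize{}, here \revandlin{} deletes an explicit edge $(u,v)$ to a leaving node $v$ and simultaneously rewires $u$ to $v$'s former neighbours while returning $u$'s reference to $v$. Proving that every staying node formerly reachable through $v$ remains reachable—through the full interplay of the \revandlinREQ{}/\revandlinACK{}/\revandlin{} handshake, the higher-id-first priority among leaving nodes, and possibly concurrent mode changes—is where the real care is needed, and it is precisely the relation that invariant~4 is engineered to encapsulate. A secondary subtlety to watch is that an \srpp probe may sit at a \emph{leaving} intermediate node, so the reachability argument for invariant~5 must rely on the $Right$-edge path reaching the probe rather than on the probe's current node being a member of $R_s(source)$.
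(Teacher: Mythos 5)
Your proposal mirrors the paper's own proof essentially step for step: the same cascade of closure lemmas (first invariants~1--4, where invariant~4's guarantee $R_s(v) = R_s(nodeList)$ protects \revandlin{}-removals exactly as invariant~2 protects \linearize{}-removals; then establishing invariants~3 and~4 from the emission points in the protocol), the same key monotonicity lemma for $R_s$/$L_s$ under both kinds of edge deletion (the paper's Lemma~\ref{lem:blpp_Rs_grows_monotonically}, including the observation that a node turning leaving removes only itself), and the same reduction of invariants~5--8 to the corresponding \blp arguments. This matches the paper's decomposition into Lemmas~\ref{lem:blpp_once_first_four_invariants_hold_then_always}, \ref{lem:blpp_Rs_grows_monotonically}, \ref{lem:blpp_once_first_five_invariants_hold_then_always} and \ref{lem:blpp_once_first_seven_invariants_hold_then_always}, so the approach is essentially identical and I see no gap.
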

The general structure of the proof is similar to the proof of Lemma\ref{lem:once_admissible_always_admissible}, although the details are different as we have to take into account that nodes can become leaving and due to the additional message invariants.

First, we show the following:
\begin{lemma}\label{lem:blpp_once_first_four_invariants_hold_then_always}
 If in a computation of \blpp, there is a state in which Invariants~1-4 hold, then in all subsequent states Invariants~1-4 will hold.
\end{lemma}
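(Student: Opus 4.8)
The plan is to follow the closure argument of Lemma~\ref{lem:once_first_two_invariants_hold_then_always}, but to handle all four invariants \emph{simultaneously}: unlike in the static case, the topology invariants~1--2 and the departure invariants~3--4 are now intertwined (closing Invariant~1 will use the equality furnished by Invariant~4). I would assume a state $s_1$ in which Invariants~1--4 hold and a direct successor $s_2$ in which at least one of them fails, and derive a contradiction by a case analysis over which invariant is first violated and which action produced the step $s_1\to s_2$.

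Since Invariants~1--2 are now \emph{subset} statements such as $\rsp(u)\subseteq R_s(w)$ rather than the membership statements of the static proof, a violation can only stem from the right-hand set $R_s(\cdot)$ shrinking, the left-hand set $\rsp(\cdot)$ growing, a governed message being newly created, or a governed edge being removed. I would first record how $R_s$ and $L_s$ change across a single step: adding explicit edges only enlarges them; a \emph{staying} node executing \linearize{v} deletes an edge but, by Invariant~2, the target stays reachable, so these sets do not shrink; a \emph{leaving} node executing \revandlin{} replaces a neighbour $v$ by $nodeList$, and the equality $R_s(v)=R_s(nodeList)$ of Invariant~4 leaves $R_s$ unchanged; finally, the only genuine loss happens when a node becomes leaving, which deletes exactly that node from every $R_s/L_s$ set, hence from both sides of each subset and equality invariant at once, preserving all of them.

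Granting this, the creation and removal cases are routine. A new \introduce{v,w} with $w\neq\bot$ is emitted only by a staying node in \timeout (a leaving node emits only $\bot$-introductions or \revandlinREQ{} messages), to a recipient $u\in Right(w)$ (or $Left(w)$); transitivity of reachability then gives $\rsp(u)\subseteq R_s(w)$, and $v\neq w$ follows from Lemma~\ref{lem:left_and_right_are_what_they_say}. A new \linearize{v} is emitted only inside an \introduce action triggered by an \introduce{v,w} message that satisfied Invariant~1 in $s_1$; from $\rsp(u')\subseteq R_s(w)$ I pick the first hop $u\in Right(w)$ on a right-path from $w$ to the (staying) executing node $u'$ and, using the freshly added $v\in Right(u')$, obtain $\rsp(v)\subseteq R_s(u')\subseteq R_s(u)$ with $u\neq v$, which is exactly the witness Invariant~2 demands. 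For the departure invariants, a new \revandlinACK{} carries the value just generated and stored by the \revandlinREQ{} action, so the uniqueness required by Invariant~3 holds, and a new \revandlin{} is emitted by a leaving node in its \revandlinACK{} action with $nodeList$ equal to its $Right$ (resp.\ $Left$) set, so $R_s(v)=R_s(nodeList)$ together with the matching unique value yields Invariant~4.

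The step I expect to be the genuine obstacle is the \emph{growth} of the left-hand sets $\rsp(u)$, which has no counterpart in the static proof, where the invariants were plain memberships and $R$ was simply monotone. When a staying node $y$ inserts a staying node $x$ into $Right(y)$, every $u$ that can reach $y$ acquires $x$ in $\rsp(u)$, so to keep $\rsp(u)\subseteq R_s(w)$ I must show that the same edge also puts $x$ into $R_s(w)$. This is exactly where $y\in\rsp(u)\subseteq R_s(w)$ (valid in $s_1$) is used: $w$ already reaches $y$, the path to $y$ is undisturbed because the step only modifies $y$'s own sets, and the new edge $y\to x$ extends it, so $x\in R_s(w)$ in $s_2$ as well. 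The remaining source of growth---a leaving node enlarging its own $Right$ via \revandlin{}---is neutralised by the equality in Invariant~4. Carrying out this simultaneous-reachability bookkeeping precisely, together with verifying that the synchronous removal of a newly-leaving node never breaks the equalities in Invariants~3 and~4, is where the real care is required.
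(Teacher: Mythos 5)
Your proof is correct and follows essentially the same route as the paper's: the same direct-successor contradiction, closure of Invariants~1--2 by the argument from the static lemma, uniqueness of the freshly generated value for a new \revandlinACK{} message, the equality $R_s(v) = R_s(nodeList)$ for a new \revandlin{} message, the Invariant~2 witness for \linearize{} removals, and the Invariant~4 equality for \revandlin{} replacements. One slip in your bookkeeping: the \revandlin{} action that removes an edge is executed by the \emph{receiver} of that message, and this receiver is typically a \emph{staying} node deleting its leaving neighbour $v$ (the node carrying the matching unique value); only the replaced neighbour $v$ itself, i.e., the message's sender, is guaranteed to be leaving. Your case list names only a leaving executor, so read literally it omits the staying-executor case --- but since your argument uses nothing about the executor's mode (only that $v$ is leaving and that $R_s(v)=R_s(nodeList)$), it applies verbatim there, so this is a labelling error rather than a gap. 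Incidentally, your explicit verification that edge insertions cannot break the subset invariants (growth of $\rsp(u)$ forces the same growth of $R_s(w)$ because $w$ already reaches the insertion point) is spelled out more carefully than in the paper, which checks additions only against the equality in Invariant~4 and leaves the effect on Invariants~1--2 implicit.
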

\begin{proof}
 Assume there is a state $s_1$ in which Invariant~1-4 hold, such that in the (direct) subsequent state $s_2$ one of the Invariants~1-4 does not hold.
 First of all, check that none of the first four invariants can be invalidated because some node becomes leaving.
 Secondly, note that the first four invariants cannot become falsified due to a new \introduce{v,w} or \linearize{v} message for very similar reasons as in the proof of Lemma~\ref{lem:once_first_two_invariants_hold_then_always} (since in this part \blp and \blpp are exactly the same).
 Furthermore, note that according to the protocol when a node $w$ sends a \revandlinACK{v,uniqueValue} to a node $u$, then $w=v$ and it makes sure that $uniqueValue$ is stored in $v.uniquevalues[u]$ (and we assume that $uniqueValue$ is only stored for $u$).
 Thus, sending such a message also cannot invalidate one of the first four invariants.
 Moreover, note that when a node $v$ sends a \revandlin{nodeList,uniqueValue} message to a node $u$ with $u < v$ between state $s_1$ and $s_2$, then $v$ must have received a \revandlinACK{u,uniqueValue} message right before and $v$ must be leaving.
 Since Invariant~3 holds in $s_1$, this means that $u.uniqueValues[v] = uniqueValue$ and $v$ is the only node such that $u.uniqueValues[v] = uniqueValue$.
 In addition, when sending the message, $v$ added all nodes from $Right(u)$ to $nodeList$.
 Thus, in state $s_2$, $R_s(v) = R_s(nodeList)$ holds and $v$ is the only node with $uniqueValues[v] = uniqueValue$.
 If $v < u$, $L_s(v) = L_s(nodeList)$ holds, for analogous arguments.
 Besides, note that the $R_s(v) = R_s(nodeList)$ part of Invariant~4 for a node $v$ cannot be invalidated due to the addition of any node to the set $Right(v)$ (or $Left(v)$) because $v$ is leaving and a leaving node never adds a member to $Right$ (or $Left$).
 Any other addition of a node to a set $Right(x)$ (or $Left(x)$) for another node $x$ adds this node to $R_s(v)$ and $R_s(nodeList)$ at the same time or not at all.
 
 Thus, the only event that can invalidate one of the first four invariants is the removal of a node $y$ from a set $Right(x)$ or $Left(x)$ for a node $x$.
 This may only happen in a \linearize{y} action for a staying node or a \revandlin{nodeList, uniqueValue} action.
    We will consider both actions invidivdually.
    
    First of all, assume a \linearize{y} action has been executed in a staying node $w$ between $s_1$ and $s_2$ and thus removed a node $y$ from $Right(w)$ (or $Left(w)$).
    This can only happen if there was a \linearize{y} message in $w.Ch$ in $s_1$ for which, by definition of $s_1$, Invariant~2 holds.
    Thus, there is a node $u \neq y$ with $u \in Right(w)$ and $\rsp(y) \subseteq R_s(u)$ (or $u \in Left(w)$ and $\rsp(y) \subseteq L_s(v)$), implying that after the removal of $(w,y)$, $\rsp(u) \subseteq R_s(w)$ ($\lsp(u) \subseteq L_s(w)$) still holds, i.e., there is no node $x$ for which a node has been removed from $R_s(x)$ and the first four invariants cannot be invalidated due to the change.
 
    Now assume that a \revandlin{nodeList, uniqueValue} action has been executed in a node $u$ between state $s_1$ and $s_2$.
    In this case, the corresponding message must have been in $u.Ch$ in $s_1$.
 Since in $s_1$ the first four invariants hold, by the fourth invariant, there must be exactly one node $v$ that is leaving with $u.uniqueValues[v] = uniqueValue$, and $R_s(v) = R_s(nodeList)$ if $u < v$ (or $L_s(v) = L_s(nodeList)$, otherwise).
 W.l.o.g. assume that $u < v$ (note that in case $v < u$ and $u$ leaving, no node is removed from or added to $Left(u)$ at all, but in this case, the invariant still holds, which is what we want to prove anyway).
 If $v \notin Right(x)$, no node is removed from or added to $Right(x)$ at all and the claim follows immediately.
 Thus, assume $v \in Right(x)$.
 In this case, $u$ removes $v$ from $Right(u)$ and adds $nodeList$ to $Right(u)$.
 Since $R_s(v) = R_s(nodeList)$ and $R_s(v) \subseteq R_s(u)$, and $v \notin R_s(v)$ (because $v$ is leaving), no node has been removed from or added to $R_s(u)$ after the action has been performed, implying that all four invariants still hold.  
\end{proof}
Similar to Lemma~\ref{lem:R_grows_monotonically}, one can show the following:
\begin{lemma}\label{lem:blpp_Rs_grows_monotonically}
  If there is a state in which the first four invariants hold, and $\rsp(x) \subseteq R_s(v)$ ($\lsp(x) \subseteq L_x(v)$), then in every subsequent step, $\rsp(x) \subseteq R_s(v)$ ($\lsp(x) \subseteq L_s(v)$).
\end{lemma}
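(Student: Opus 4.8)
The plan is to mirror the proof of Lemma~\ref{lem:R_grows_monotonically}, reducing to a single computation step and then inducting, while accounting for the two features that are new in \blpp: the additional \revandlin{} action that can remove an edge from a $Right$ (or $Left$) set, and the fact that a node may switch from staying to leaving. I would treat only the case $\rsp(x) \subseteq R_s(v)$, since $\lsp(x) \subseteq L_s(v)$ is completely symmetric. It suffices to fix a state $s_1$ in which the first four invariants hold and $\rsp(x) \subseteq R_s(v)$, and to show that the inclusion still holds in the direct successor $s_2$; Lemma~\ref{lem:blpp_once_first_four_invariants_hold_then_always} guarantees the invariants persist, so induction then yields the claim for all subsequent states.

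First I would isolate the three ways in which the relevant $R_s$-sets can change in one step. (i) A node $z$ is added to some $Right(y)$: this can only enlarge $R_s$-sets, never shrink them. (ii) A node $z$ is removed from some $Right(y)$: by the protocol this happens only in a \linearize{z} action of a staying node $y$, or in a \revandlin{} action. In the first case Invariant~2 supplies a node $u \neq z$ with $u \in Right(y)$ and $\rsp(z) \subseteq R_s(u)$, so the staying endpoints reachable through $y$ are unchanged; in the second case Invariant~4 gives $R_s(v') = R_s(nodeList)$ for the removed node $v'$, which is itself leaving and hence never an $R_s$-endpoint, so again the set of staying nodes reachable from $y$ is preserved. (iii) A node switches from staying to leaving: by definition it then drops out of every set $R_s(\cdot)$ as an endpoint, but --- and this is the key observation --- it drops out of $\rsp(x)$ and of $R_s(v)$ \emph{simultaneously}, so the inclusion is unaffected.

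With these observations I would verify the inclusion directly. Take any staying node $c$ in the $s_2$-version of $\rsp(x)$. If $c$ already lay in $\rsp(x)$ at $s_1$, then $c \in R_s(v)$ at $s_1$ by hypothesis; since edge removals do not destroy reachability of staying endpoints (case (ii)) and $c$ is still staying in $s_2$, a membership-monotonicity argument exactly like the one in Lemma~\ref{lem:R_grows_monotonically} keeps $c \in R_s(v)$ in $s_2$. If instead $c$ is newly reachable from $x$ in $s_2$, then by case (i) the responsible edge $(y,z)$ was added to $Right(y)$ for a node $y$ that is \emph{staying} --- a leaving node only ever modifies its $Right$ set through \revandlin{}, which by case (ii) creates no new staying endpoints --- and $y$ was already reachable from $x$ at $s_1$. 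Hence $y \in R_s(x) \subseteq \rsp(x) \subseteq R_s(v)$ at $s_1$, so $y \in R_s(v)$ in $s_2$, and prepending the $s_1$-path from $v$ to $y$ to the new $R_s$-path from $y$ to $c$ witnesses $c \in R_s(v)$ in $s_2$.

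The main obstacle I anticipate is case (ii) for the \revandlin{} action, which has no analogue in Lemma~\ref{lem:R_grows_monotonically}: one must check that the simultaneous removal of $v'$ and insertion of $nodeList$ genuinely leaves the staying-reachable set fixed, which is precisely what Invariant~4 (together with $v'$ being leaving and thus never an $R_s$-endpoint) is designed to guarantee. The second subtlety is the bookkeeping around a node becoming leaving; framing it as a simultaneous deletion from both sides of the inclusion, as in (iii), is what keeps that case routine.
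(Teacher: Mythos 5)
Your proposal is correct and follows essentially the same route as the paper's proof: a single-step analysis in which additions to $Right$ sets are monotone, removals are handled via Invariant~2 (for \linearize{}) and Invariant~4 (for \revandlin{}, using that the removed node is leaving and hence never an $R_s$-endpoint), and a node becoming leaving is treated as a simultaneous deletion from both sides of the inclusion. The only cosmetic slip is writing $y \in R_s(x) \subseteq \rsp(x)$ where it should simply be $y \in \rsp(x)$ (to cover $y = x$), which does not affect the argument.
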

\begin{proof}
Assume there is a state $s_1$ such that $\rsp(x) \subseteq R_s(v)$ holds, but in the (direct) subsequent state $s_2$, $\rsp(x) \subseteq R_s(v)$ does not hold.
We consider all possible reasons for why $\rsp(x) \subseteq R_s(v)$ does not hold in $s_2$.
Obviously, neither the addition of a node to $R_s(v)$ nor the removal of a node from $\rsp(x)$ can violate the claim.
Note that if a node $z$ is added to $\rsp(x)$, this happens because a node $y \in \rsp(x)$ added $z$ to $Right(x)$.
However, since $y \in R_s(v)$, $z$ is also added to $R_s(v)$ (by definition of this set).
This yields that the only reason for the claim to be incorrect in $s_2$ is that a (staying) node $z \in \rsp(x)$ was removed from $R_s(v)$ but not from $\rsp(x)$.
We consider all possible cases for this.

First, assume $z$ was removed from $R_s(v)$ because $z$ became leaving.
Then $z$ was also removed from $\rsp(x)$.

Secondly, assume that $z$ was removed from $R_s(v)$ due to a \linearize{y} action at a node $w \in R_s(v) \cup \{v\}$ between $s_1$ and $s_2$.
Then, by the second invariant, there was a node $u \neq y$ with $u \in Right(w)$ and $\rsp(y) \subseteq(u)$ in $s_1$.
Thus, after $y$ is removed from $Right(w)$, $\rsp(y) \in R_s(w)$ still holds, implying $\rsp(y) \subseteq R_s(v)$, i.e., neither $y$ nor any other node $z$ in $R_s(y)$ was removed from $R_s(v)$.

Thirdly, assume a staying node $z \in \rsp(x)$ was removed from $R_s(v)$ but not from $\rsp(x)$ due to a \revandlin{nodeList,uniqueID} action in a node $u$, removing node $y$ from $Right(u)$.
In this case, according to Invariant~4, $y$ is the unique node with $u.uniqueVAlues[y] = uniqueValue$, $y$ is leaving, and $R_s(y) = R_s(nodeList)$.
Thus, when $y$ is removed from $Right(u)$ and $NodeList$ is added to $Right(u)$, no node is removed from $R_s(u)$, implying that no node is removed from $R_s(v)$.

Thus, the claim holds in every case.
Note that the argument for $\lsp(x) \subseteq L_x(v)$ is completely analogous.
\end{proof}
Using this, we can prove the following lemmata:
\begin{lemma}\label{lem:blpp_once_first_five_invariants_hold_then_always}
 If in a computation of \blpp, there is a state in which Invariants~1-5 hold, then in all subsequent states Invariants~1-5 will hold.
\end{lemma}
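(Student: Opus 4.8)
The plan is to mirror the proof of Lemma~\ref{lem:once_first_three_invariants_hold_then_always}, which established the analogous statement for \blp, but now tracking $R_s$ (staying-node reachability along explicit $Right$-edges) in place of $R$. I would argue by contradiction: suppose $s_1$ is a state in which Invariants~1--5 hold while its direct successor $s_2$ violates one of them. By Lemma~\ref{lem:blpp_once_first_four_invariants_hold_then_always}, none of Invariants~1--4 can be the culprit, so the violated invariant must be Invariant~5. The first reduction is to show that the only event capable of breaking Invariant~5 is the sending of a fresh \forwardprobe{} message. For this I would invoke Lemma~\ref{lem:blpp_Rs_grows_monotonically} together with the fact that every $u.seq[\cdot]$ is monotonically non-decreasing: monotonicity of the $R_s$-containments keeps clauses (a) and (b) intact for any message already present; a node turning leaving removes it from both sides of the relevant containments simultaneously, so (b) survives; and the ``for every admissible state with $source.seq[destID] < seq$'' quantifier in clause (c) is insensitive to further state changes because $R_s$ never shrinks on staying nodes. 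Hence no message already in a channel can start to violate Invariant~5, and the analysis reduces to newly created \forwardprobe{} messages.

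For the new-message case I would distinguish the two ways \srpp emits a \forwardprobe{}. First, in a \timeout action: here only staying nodes initiate probes, so $Next=\{x\}$, $source=x$ and $seq=x.seq$; clauses (a) and (b) then hold immediately and (c) follows exactly as in the \blp proof, since by Lemma~\ref{lem:blpp_Rs_grows_monotonically} any witness $v\in R_s(x,destID)$ could only have arisen in an earlier state and would persist. Second, in the \forwardprobe{} action upon receipt of a message \forwardprobe{source,destID,Next',seq} at a node $x$, which emits a message with $Next = Next'\setminus\{x\}\cup\{w\in Right(x)\mid id(w)\le destID\}$. The decisive structural observation --- and the reason \srpp preserves searchability under departures --- is that this update draws only from $Right(x)$ and never from \tempright[x], irrespective of whether $x$ is staying or leaving. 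Consequently the verification of clauses (a) and (b) proceeds for essentially the same reasons as in Lemma~\ref{lem:once_first_three_invariants_hold_then_always} with $R$ replaced by $R_s$: clause (a) holds because $x$ is the minimum of $Next'$ and, by Lemma~\ref{lem:left_and_right_are_what_they_say}, every newly added right neighbour has larger id; and from $x\in R_s(source)$ (clause (b) for the incoming message) together with $R_s(w)\subseteq R_s(x)$ for each $w\in Right(x)$ one obtains $R_s(Next)\subseteq R_s(source)$.

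The delicate point I expect to be the main obstacle is clause (c) of Invariant~5 in the forwarding case, since it is quantified over all future admissible states. I would handle it as in Lemma~\ref{lem:once_first_three_invariants_hold_then_always}: assuming the hypothesis $v\notin R_s(Next,destID)$ for the emitted message (otherwise (c) is vacuous), observe that the only node in $R_s(Next',destID)$ but not in $R_s(Next,destID)$ is $x$ itself, while $x\in R_s(source)$ by clause (b) of the incoming message; hence $v\notin R_s(Next',destID)$ as well, and clause (c) for the incoming message transfers the desired negative conclusion to the outgoing one. The genuine extra care relative to \blp lies entirely in the bookkeeping of ``staying'' versus ``leaving'': one must confirm that an intermediate or destination node becoming leaving only ever \emph{removes} nodes from the $R_s$-sets and never adds them --- so that the negative conclusions ``$v\notin R_s(source,destID)$'' cannot be falsified --- and that leaving nodes provably never route probes through \tempright[], which is exactly where the $R_s$-bookkeeping could otherwise fail. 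Once these checks are discharged, all of Invariants~1--5 hold in $s_2$, contradicting the assumption and proving the lemma.
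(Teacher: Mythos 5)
Your proposal is correct and follows essentially the same route as the paper's proof: the same reduction to Invariant~5 via Lemma~\ref{lem:blpp_once_first_four_invariants_hold_then_always}, the same use of Lemma~\ref{lem:blpp_Rs_grows_monotonically} and the monotonicity of $u.seq[\cdot]$ to reduce the problem to newly sent \forwardprobe{} messages, and the same \timeout/forwarding case split, including the observation that $R_s(Next,destID)$ and $R_s(Next',destID)$ differ only in the forwarding node (which cannot be the target) to transfer clause~(c). Your additional bookkeeping remarks about leaving nodes and \tempright{} are exactly the points the paper discharges through Lemma~\ref{lem:blpp_Rs_grows_monotonically}, so nothing is missing.
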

\begin{proof}
 Assume there is a state $s_1$ in which Invariant~1-5 hold, such that in the (direct) subsequent state $s_2$ one of the first five invariants does not hold.
 By Lemma~\ref{lem:blpp_once_first_four_invariants_hold_then_always}, this can only be Invariant~5.
 Note that Invariant~5a) is equal to Invariant~3a) from Section~\ref{sec:monotonic_searchability_proof}.
 Thus, Invariant~5a) cannot be violated for the same reasons mentioned in the proof of Lemma~\ref{lem:once_first_three_invariants_hold_then_always}.

 Note that if Invariant~5b) and 5c) hold for a \forwardprobe{source,destID,Next,seq} message when this message is sent, they also do so when the message is delivered because of Lemma~\ref{lem:blpp_Rs_grows_monotonically}.
 Thus, the only reason why Invariant~5b) or 5c) do not hold in $s_2$ is that a new \forwardprobe{source,destID,Next,seq} message has been sent.
 There may be two reasons for this:
 Either because a node $u$ executed \timeout, or because a node $u$ received another \forwardprobe{source,destID,Next',seq} message.
 We consider both cases individually (each time, for $id(source) < destID$ because the other case is analogous).
 
 In the first case, the \forwardprobe{source,destID,Next,seq} message is sent to $u$ itself, with $u = source$ and $Next=\{u\}$, which is why Invariant~5b) holds.
 Also note that since $u.seq[destID]$ is monotonically increasing, and $seq = source.seq[destID]$ in this state, if there was an admissible state with $source.seq[destID] < seq$ with $v \in R_s(source,destID)$, then this must have been a previous state.
 Note that $v \in R_s(source,destID)$ implies $\rsp(v) \subseteq R_s(source)$.
 By Lemma~\ref{lem:blpp_Rs_grows_monotonically}, $\rsp(v) \subseteq R_s(source)$ must still hold in $s_1$, which, if $v$ is staying, implies $v \in R_s(source,destID)$.
 Thus, Invariant~5c) still holds in this case.
 
 In the second case, Invariant~5 held for the \forwardprobe{source,destID,Next',seq} message $u$ received.
 Note that $u$ only sends the \forwardprobe{source,destID,Next,seq} message if $id(u) \neq destID$.
 Thus, if there is a $v$ such that $id(v) = destID$ then $u \neq v$ and since $R_s(Next,destID)$ and $R_s(Next',destID)$ only differ in $u$ (since $Next = Next'\setminus\{u\} \cup Right(u)$), Invariant~5c) also holds for the new message. 
 Notice that the new message is sent to a node $w \in Right(u)$ or $w \in Next'$, i.e., $w \in R_s(Next')$ in any case.
 $R_s(Next') \subseteq R_s(source)$ implies $R_s(Next) \subseteq R_s(source)$ ($Next = Next'\setminus\{u\} \cup Right(u)$), yielding the claim of Invariant~5b) for the new message.
 
 All in all, Invariant~5 has to hold in $s_2$, too, proving the claim.
\end{proof}
\begin{lemma}\label{lem:blpp_once_first_seven_invariants_hold_then_always}
 If in a computation of \blpp, there is a state in which Invariants~1-7 hold, then in all subsequent states Invariants~1-7 will hold.
\end{lemma}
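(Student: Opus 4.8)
The plan is to mirror the structure of Lemma~\ref{lem:blpp_once_first_five_invariants_hold_then_always} one level up. I would assume a state $s_1$ in which all seven invariants hold and a direct successor $s_2$ in which one of them fails, and derive a contradiction. First I would invoke Lemma~\ref{lem:blpp_once_first_five_invariants_hold_then_always} to conclude that none of Invariants~1--5 can be the one violated in $s_2$, so the offending invariant must be Invariant~6 or Invariant~7. Next, using Lemma~\ref{lem:blpp_Rs_grows_monotonically} (so that the sets $R_s$ and $L_s$ only grow, apart from nodes becoming leaving) together with the fact that $u.seq[\cdot]$ is monotonically non-decreasing, I would argue that an already-present \psuccess{} or \pfail{} message cannot come to violate its invariant merely because a set or a sequence counter changes; hence the only possible cause is the creation of a \emph{new} \psuccess{} or \pfail{} message between $s_1$ and $s_2$. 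As usual I would treat only the case $id(u) < destID$, the other being symmetric.

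For a freshly sent \psuccess{destID, seq, dest} message to a node $u$, I would trace it back through \srpp: according to the protocol it can only originate in a \forwardprobe{source, destID, Next, seq} action executed at a \emph{staying} node $x$ with $id(x) = destID$, where $dest = x$ and $u = source$ (a leaving node with matching id answers with \pfail{} instead, which is the crucial new case). By Invariant~5a) the executing node lies in its own $Next$, so $x \in Next \subseteq R_s(Next)$, and by Invariant~5b) $R_s(Next) \subseteq R_s(source)$; since $x$ is staying this gives $dest \in R_s(u)$, which is exactly Invariant~6.

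For a freshly sent \pfail{destID, seq} message to $u = source$ there are now two sub-cases in \srpp. If it is sent by a \emph{leaving} node $x$ with $id(x) = destID$, then since identifiers are unique there is no staying node with that id, so the first disjunct of Invariant~7 holds trivially. Otherwise it is sent because $Next$ became empty after the update $Next \gets Next \setminus \{x\} \cup \{w \in Right(x) \mid id(w) \le destID\}$; as in the proof of Lemma~\ref{lem:once_first_five_invariants_hold_then_always} this forces $v \notin R_s(Next, destID)$ for the node $v$ with $id(v) = destID$, and Invariant~5c) then yields precisely the statement of Invariant~7. This closes the step, so all seven invariants hold in $s_2$.

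The main obstacle, and the genuinely new ingredient relative to the static-set argument of Section~\ref{sec:monotonic_searchability_proof}, is the handling of nodes that become leaving. I would have to check carefully that a node switching to leaving mode can never silently falsify Invariant~6 or~7: for Invariant~6 this is absorbed by the disjunct ``or $dest$ is leaving'' together with Lemma~\ref{lem:blpp_Rs_grows_monotonically}, whose proof already establishes that a staying node can leave $R_s(u)$ only by itself becoming leaving; and for Invariant~7 the ``no staying node with id $destID$'' disjunct is exactly what covers a leaving target. The delicate point is therefore confirming that the reachability sets $R_s$ and $L_s$ interact correctly with the disjunctive form of these two invariants, which is precisely what the monotonicity lemma is designed to guarantee.
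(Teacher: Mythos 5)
Your proof is correct and follows essentially the same route as the paper's: rule out Invariants~1--5 via Lemma~\ref{lem:blpp_once_first_five_invariants_hold_then_always}, reduce to newly sent \psuccess{}/\pfail{} messages via Lemma~\ref{lem:blpp_Rs_grows_monotonically} and the monotonicity of $u.seq[\cdot]$, and close both cases using Invariants~5b) and 5c). Your explicit treatment of a \pfail{} sent by a leaving destination node is in fact slightly more careful than the paper's own write-up, which covers that sub-case only implicitly through its opening observation that Invariants~6 and~7 can only be violated when a staying node with id $destID$ exists.
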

\begin{proof}
  Again, assume there is a state $s_1$ in which Invariant~1-7 hold, such that in the (direct) subsequent state $s_2$ one of the first seven invariants does not hold.
 By Lemma~\ref{lem:blpp_once_first_five_invariants_hold_then_always}, this can only be Invariant~6 or Invariant~7.
 Observe that Invariant~6 and Invariant~7 can only be violated if there is a node $v$ with $id(v) = destID$ and $v$ is staying.
 Again, by Lemma~\ref{lem:blpp_Rs_grows_monotonically} and because $\rsp(x) \subseteq R_s(y)$ is equivalent to $x \in R_s(y)$ if $x$ is staying, any of the two invariants can only be violated if a new \psuccess{destID,seq,dest} or a new \pfail{destID,seq} was sent by a node $w$ between $s_1$ and $s_2$.
 We consider both cases individually.
 
 Assume a new \psuccess{destID,seq,dest} message has been sent by $w$ to a node $u$.
 	According to the protocol, this only happens in a \forwardprobe{} action, when a \forwardprobe{source,destID,Next,seq} message has arrived at $w = dest$ with $id(w) = destID$ and $u = source$.
 	As stated before, $w$ must be staying.
	Thus, Invariant 5~b) implies $dest \in R_s(u)$.
	
	For the \pfail{} messages, assume a node $w$ sends a \pfail{destID, seq} message to a node $u$.
	According to the protocol, this only happens in a \forwardprobe{} action, when a \forwardprobe{source,destID,Next,seq} message has arrived at $w$ with $id(w) \neq destID$, $u = source$ and $Next = \{w\}$ and there is no $y$ in $Right(x)$ with $id(y) \leq destID$.
	If no staying node with id $destID$ exists, we are done.
	Otherwise, we have that for this node $v$, $v \notin R(Next,w)$.
	By Invariant~3c), this implies the claim.

 Thus, Invariant~6 and Invariant~7 have to hold in $s_2$, too, proving the claim.
\end{proof}
Now we can finally prove Lemma~\ref{lem:blpp_once_admissible_always_admissible}:
\begin{proof}
 Assume there is an admissible state $s_1$, such that the (direct) subsequent state $s_2$ is not admissible.
 By Lemma~\ref{lem:blpp_once_first_seven_invariants_hold_then_always}, only Invariant~8 can be violated in $s_2$.
 However, by a similar argument as in the proof of Lemma~\ref{lem:once_admissible_always_admissible}, this is not possible.
\end{proof}
The following also holds:
\begin{lemma}\label{lem:blpp_admissible_state_always_exists}
 In every computation of \blpp there is an admissible state.
\end{lemma}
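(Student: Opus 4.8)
The plan is to reach an admissible state by first passing to a regime in which \blpp behaves like \blp and then flushing the corrupted messages family by family, exactly in the spirit of Lemma~\ref{lem:admissible_state_always_exists}. First I would invoke Lemma~\ref{lem:fdp:liveness} to pass to a computation suffix $C_1$ in which all leaving nodes are gone, and Theorem~\ref{thm:blpp_solves_linearization} to obtain a state $s_1$ in $C_1$ from which on every node has at most one node in $Left$ and at most one node in $Right$. Since no present node is leaving in $C_1$, the leaving-specific branches of \blpp are never taken there, so from $s_1$ on \blpp acts exactly as \blp and, moreover, $R_s$ coincides with $R$ because every present node is staying. This is what lets me transport the earlier argument almost verbatim.

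The new ingredient relative to the \blp proof is the \textsc{ReverseAndLinearize} family, which I would flush first. A \revandlinREQ{} message is sent only by a leaving node in its \timeout action, so none is created in $C_1$; by fair message receipt the finitely many present at $s_1$ are eventually consumed. A \revandlinACK{} message is created only while processing a \revandlinREQ{}, and a \revandlin{} message only while a leaving node processes a \revandlinACK{}; since no leaving node is present and no \revandlinREQ{} survives, neither family is regenerated once its initial members have been consumed. Hence after finitely many steps there are no \textsc{ReverseAndLinearize} messages at all, so Invariants~3 and~4 hold vacuously and, by Lemma~\ref{lem:blpp_once_first_four_invariants_hold_then_always}, keep holding.

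From here the argument mirrors Lemma~\ref{lem:admissible_state_always_exists} step by step, with $R$ replaced by $R_s$. Because every node now has at most one neighbor per side, no \introduce{v,w} message with $w \neq \bot$ is emitted any more (these are sent only by a node with at least two neighbors on one side; the surviving \introduce{self,\bot} messages are irrelevant to Invariant~1), so such messages vanish and Invariant~1 holds; since \linearize{} messages are triggered exactly by these, they vanish as well and Invariant~2 holds. Together with the vacuous Invariants~3 and~4 this yields a state where Invariants~1--4 hold, persisting by Lemma~\ref{lem:blpp_once_first_four_invariants_hold_then_always}. I would then flush the \forwardprobe{} messages violating Invariant~5 by the same reasoning as for \blp: a \forwardprobe{} initiated in \timeout cannot violate the invariant (by $R_s$-monotonicity, Lemma~\ref{lem:blpp_Rs_grows_monotonically}), a message satisfying Invariant~5a) cannot spawn one violating~5a), and thereafter every forwarded \forwardprobe{} strictly raises the id of its host, so with finitely many nodes all offending copies disappear; Invariant~5 then holds and persists by Lemma~\ref{lem:blpp_once_first_five_invariants_hold_then_always}. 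Flushing the \psuccess{} and \pfail{} messages yields Invariants~6 and~7 (persisting by Lemma~\ref{lem:blpp_once_first_seven_invariants_hold_then_always}), and flushing the \search{} messages yields Invariant~8; at the resulting state all eight invariants hold, so it is admissible.

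The main obstacle is not the routing-message bookkeeping, which transfers directly from the \blp proof, but arguing cleanly that the \textsc{ReverseAndLinearize} chain terminates so that Invariants~3 and~4 become vacuously true; this is precisely where reliance on Lemma~\ref{lem:fdp:liveness} is essential, since while leaving nodes are present these messages are continually regenerated. Once one observes that Theorem~\ref{thm:blpp_solves_linearization} already forces all leaving nodes to have departed, the identification $R_s = R$ makes the remainder a faithful copy of the earlier argument.
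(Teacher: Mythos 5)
Your proposal is correct and follows essentially the same route as the paper's proof: invoke Lemma~\ref{lem:fdp:liveness} to obtain a suffix without leaving nodes, observe that the \textsc{ReverseAndLinearize} message chain (\revandlinREQ{} $\to$ \revandlinACK{} $\to$ \revandlin{}) can then no longer be regenerated so Invariants~3 and~4 hold, and use Theorem~\ref{thm:blpp_solves_linearization} together with the identification of $R_s$ with $R$ on staying nodes to replay the argument of Lemma~\ref{lem:admissible_state_always_exists} for the remaining invariants. Your write-up is in fact somewhat more explicit than the paper's (which delegates the flushing of \forwardprobe{}, \psuccess{}/\pfail{}, and \search{} messages to an "analogous to Lemma~\ref{lem:admissible_state_always_exists}" remark), but the decomposition and key ingredients are identical.
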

\begin{proof}
 Note that according to Lemma~\ref{lem:fdp:liveness}, every computation of \blpp has is a suffix in which nodes that will eventually be leaving are gone and note that these nodes do not perform any actions.
 Furthermore, note that a \revandlin{nodeList,uniqueValue} message is only sent if a node received a \revandlinACK{v,uniqueValue} message.
 Moreover, a \revandlinACK{v,uniqueValue} message can only be sent if a node receives a \revandlinREQ{DIR} message.
 Such a message, can only be sent from a leaving node.
 However, in the aforementioned suffix, no leaving node can send a message any more. 
 Thus, there is a suffix, in which the third and the fourth invariant always hold.
 
 Note that by Theorem~\ref{thm:blpp_solves_linearization}, the remaining nodes will converge to the list.
 In this state, similar to the argument used in the proof of Lemma~\ref{lem:admissible_state_always_exists}, no new \introduce{v,w} messages with $v \neq w$ and no new \linearize{u} messages can be initiated, i.e., the first two invariants always hold.
 Note that since $\rsp(v) \subseteq R_s(u)$ is equivalent to $v \in R(u)$ if $v$ is staying, and the system only consists of staying nodes in the current suffix, Invariant~5-8 are equivalent to Invariant~3-6 in Section~\ref{sec:monotonic_searchability_proof}.
 Thus, the rest of the proof is analogous to the proof of Lemma~\ref{lem:admissible_state_always_exists}.
\end{proof}
Note that Lemma~\ref{lem:blpp_once_admissible_always_admissible} and Lemma~\ref{lem:blpp_admissible_state_always_exists} imply the following corollary:
\begin{corollary}
    In every computation of \blpp, there exists a suffix in which every state is admissible. 
\end{corollary}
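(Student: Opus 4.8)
The plan is to derive the corollary directly from the two immediately preceding lemmas, which already do all the substantive work. Concretely, I would fix an arbitrary computation $C$ of \blpp and argue in two moves: an \emph{existence} step supplying one admissible state, followed by a \emph{closure} step propagating admissibility forward from that state.

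First I would invoke Lemma~\ref{lem:blpp_admissible_state_always_exists}, which guarantees that $C$ contains at least one admissible state; call it $s$. Since a computation is an infinite sequence of states and $s$ occurs at some finite position, only finitely many states of $C$ precede $s$. By the definition of a computation suffix given in the model (a sequence obtained by deleting the initial state and finitely many subsequent ones, which is itself a computation), the subsequence $CS$ of $C$ beginning at $s$ is a well-defined computation suffix. This establishes that the candidate suffix exists and is legitimate to reason about.

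Next I would apply Lemma~\ref{lem:blpp_once_admissible_always_admissible}, the closure result, to the state $s$. Because $s$ is admissible, that lemma yields that every state subsequent to $s$ in $C$ is admissible as well; together with the admissibility of $s$ itself, this means that \emph{every} state of the suffix $CS$ is admissible. Hence $CS$ is a suffix in which every state is admissible, which is exactly the claim of the corollary. Since $C$ was arbitrary, the statement holds for every computation of \blpp.

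There is no genuine obstacle at this level: the corollary is a formal composition of an existence guarantee with a closure guarantee, and the only point requiring minor care is the bookkeeping that the chosen admissible state lies at a finite index so that the tail starting there is a bona fide computation suffix (rather than, say, requiring the deletion of infinitely many states). All of the difficulty resides in Lemmas~\ref{lem:blpp_once_admissible_always_admissible} and~\ref{lem:blpp_admissible_state_always_exists}, which I would cite verbatim rather than re-prove here.
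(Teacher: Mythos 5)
Your proposal is correct and matches the paper's argument exactly: the paper also obtains this corollary as an immediate consequence of Lemma~\ref{lem:blpp_admissible_state_always_exists} (existence of an admissible state) combined with Lemma~\ref{lem:blpp_once_admissible_always_admissible} (admissibility is preserved), taking the suffix starting at the first admissible state. Your additional bookkeeping that this tail is a bona fide computation suffix is a fine, if routine, explicit touch the paper leaves implicit.
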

For the rest of this subsection, we assume that every computation starts in an admissible state.
This is due to the fact that monotonic searchability must hold starting from admissible states only.
Furthermore, w.l.o.g. we only consider requests \search{u,destID} with $id(u) < destID$.

As in Section~\ref{sec:monotonic_searchability_proof}, we need some additional results before we can prove Theorem~\ref{thm:blpp_guarantees_monotonic_searchability}.
\begin{lemma}\label{lem:blpp_forwardprobe}
Assume for a \forwardprobe{v,destID, Next, seq} message $m \in x.Ch$, there is a $u \in R_s(Next,destID)$.
Then either $u = x$ or there will be a state in which a \forwardprobe{v,destID, Next', seq} message is in $y.Ch$ for some node $y$ with $id(y) > id(x)$ and $u \in R_s(Next',destID)$, or $u$ is leaving.
\end{lemma}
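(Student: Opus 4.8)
The plan is to mirror the proof of Lemma~\ref{lem:forwardprobe}, systematically replacing $R$ by $R_s$ and Invariant~3 by Invariant~5, while adding the bookkeeping forced by leaving nodes. First I would follow the message $m$ until it is actually processed by $x$, which happens eventually by fair message receipt, and examine the state at that processing moment. There are exactly three possibilities for the (a priori staying) node $u$: it has become leaving by then, in which case the third alternative of the statement holds; or $u = x$, giving the first alternative; or $u$ is still staying and $u \neq x$, which is the only case requiring work. In this last case I would first observe that $u \in R_s(Next,destID)$ survives from the hypothesized state to the processing moment: since $u$ is staying, $u \in R_s(n)$ for some fixed $n \in Next$ is equivalent to $\rsp(u) \subseteq R_s(n)$, and because $Next$ is immutable while carried inside $m$, Lemma~\ref{lem:blpp_Rs_grows_monotonically} guarantees this inclusion is preserved as long as $u$ does not become leaving.

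Next I would pin down that $x$ takes the forwarding branch. By Invariant~5a we have $x = argmin\{id(z) \mid z \in Next\}$, so every node of $Next$, and hence every node of $R_s(Next)$ by rightward reachability, has id at least $id(x)$; together with $u \neq x$ and $id(u) \leq destID$ this forces $id(x) < id(u) \leq destID$. Thus $x$ enters the $destID > id(self)$ case and computes $Next' = (Next \setminus \{x\}) \cup \{w \in Right(x) \mid id(w) \leq destID\}$. The heart of the argument is the inclusion $R_s(Next,destID) \setminus \{x\} \subseteq R_s(Next',destID)$, proved by taking a staying $u' \neq x$ with $u' \in R_s(n)$, $n \in Next$, and $id(u') \leq destID$: if $n \neq x$ then $n \in Next'$ and we are done, while if $n = x$ the rightward path witnessing $u' \in R_s(x)$ makes a first hop to some $w \in Right(x)$ with $id(w) \leq id(u') \leq destID$, so $w \in Next'$ and $u' \in \rsp(w) \subseteq R_s(Next')$. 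Applying this with $u' = u$ yields $u \in R_s(Next',destID)$, in particular $Next' \neq \emptyset$, so $x$ forwards a \forwardprobe{v,destID,Next',seq} message to $y = argmin\, Next'$; since the surviving members of $Next$ all differ from $x$ and the added members lie in $Right(x)$ (Lemma~\ref{lem:left_and_right_are_what_they_say}), every element of $Next'$ has id strictly greater than $id(x)$, so $id(y) > id(x)$, establishing the second alternative.

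The hard part will not be the set algebra but the careful handling of leaving nodes. I must verify that the definition of $R_s$ permits intermediate path nodes to be leaving (only the endpoint must be staying), so that the first-hop node $w$ may legitimately be drawn from $Right(x)$ even when $w$ itself is leaving, and I must confirm that the equivalence $u \in R_s(n) \iff \rsp(u) \subseteq R_s(n)$ used to invoke monotonicity genuinely holds for staying $u$. I would also dispose of the degenerate sub-case $id(x) = destID$ with $u \neq x$: Invariant~5a would then give $id(u) \geq id(x) = destID$ alongside $id(u) \leq destID$, forcing $id(u) = id(x)$ and hence $u = x$ by uniqueness of identifiers, a contradiction, so this configuration simply cannot arise. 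Once these points are settled, the conclusion follows exactly as in Lemma~\ref{lem:forwardprobe}, with the \emph{``or $u$ is leaving''} clause absorbing precisely the one new branch introduced by \blpp.
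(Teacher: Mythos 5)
Your proof is correct and follows essentially the same route as the paper's: when $x$ processes $m$, Invariant~5a) and Lemma~\ref{lem:left_and_right_are_what_they_say} force the forwarding target $y = argmin\{id(z) \mid z \in Next'\}$ to satisfy $id(y) > id(x)$, and the key inclusion $R_s(Next,destID)\setminus\{x\} \subseteq R_s(Next',destID)$ carries $u$ into the new message unless $u$ has become leaving. You in fact supply details the paper leaves implicit (an explicit proof of that inclusion respecting the $id(w) \leq destID$ filter on $Right(x)$, the monotonicity argument that the hypothesis survives until $m$ is actually processed, and the impossibility of the $id(x)=destID$ branch), so your write-up is, if anything, more complete than the paper's.
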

\begin{proof}
    Assume $u \neq x$.
    Note that when $m$ is received by $x$, a new message with $Next' = Next\setminus\{x\} \cup Right(x)$ will be sent.
    According to the fifth invariant, for all nodes $z$ in $Next$, $id(z) > id(y)$ holds, and $x$ is the node with minimum id among all nodes in $Next$.
    By Lemma~\ref{lem:left_and_right_are_what_they_say}, the same holds for the nodes $z$ in $Right(x)$.
    Thus, $x$ is the node with minimum id among all ones in $R(Next,w)$ and for the node $y$ to which a new \forwardprobe{v,destID,Next',seq} message is sent it holds $id(y) > id(x)$.
    Furthermore, $R_s(Next(x),destID)\setminus \{x\} \subseteq R_s(Next',destID)$ implying $u \in R_s(Next',destID)$ unless $u$ has become leaving.
\end{proof}
This allows us to prove the following lemma:
\begin{lemma}\label{lem:blpp_w_in_R_leads_to_message_success}
 For every message $m = \forwardprobe{v,destID, Next, seq} \in u.Ch$ with $id(u) < destID$, it holds that if there is a staying node $w$ with $id(w)=destID$ in the network and $w \in R_s(u)$, then eventually there will be a \forwardprobe{v,destID,Next'} message in $w.Ch$, or $w$ will be leaving.
\end{lemma}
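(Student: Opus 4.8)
The plan is to mirror the proof of Lemma~\ref{lem:w_in_R_leads_to_message_success}, replacing the reachability set $R$ by its staying-only analogue $R_s$ and the \blp forwarding lemma by Lemma~\ref{lem:blpp_forwardprobe}. Since we assume the computation starts in an admissible state, all message invariants hold throughout, so in particular Lemma~\ref{lem:blpp_Rs_grows_monotonically} applies and the sets $R_s(\cdot)$ never lose a node except when that node itself becomes leaving. The key new feature compared to the static case is that the fixed destination $w$, although staying now, may decide to leave later; this is exactly why the conclusion allows the alternative ``$w$ will be leaving,'' and my proof will dispatch that case immediately whenever it arises.

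First I would establish the base step. When $m$ is processed at $u$, the \forwardprobe{} action sets $Next \gets Next \setminus \{u\} \cup \{w' \in Right(u) \mid id(w') \leq destID\}$. Because $w \in R_s(u)$ with $id(u) < id(w) = destID$, the witnessing path of explicit right-edges from $u$ to $w$ has a first hop $y_1 \in Right(u)$ with $id(u) < id(y_1) \leq destID$, so $y_1$ is added to $Next$; since $w \in R_s(y_1)$ and $id(w) = destID$, we obtain $w \in R_s(Next,destID)$ for the forwarded message. In particular $Next \neq \emptyset$, so the message is forwarded (not failed) to the minimum-id node of $Next$. This puts us into a state with a \forwardprobe{v,destID,Next,seq} message whose set satisfies $w \in R_s(Next,destID)$.

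Next I would iterate Lemma~\ref{lem:blpp_forwardprobe} with its target node instantiated as the fixed staying node $w$. Each application yields one of three outcomes: the current holder $x$ already equals $w$ (then the message is in $w.Ch$ and we are done); $w$ has become leaving (then the alternative in the conclusion holds and we are done); or a new \forwardprobe{} message appears at a holder $y$ with $id(y) > id(x)$ and $w \in R_s(Next',destID)$ still holding. The holder of the message is always the minimum-id element of its $Next$ set (Invariant~5a), and since $w$ is reachable rightward from $Next$ with $id(w)=destID$, every holder has id at most $destID$. As holder ids strictly increase and the node set is finite, this recursion terminates after finitely many steps; termination can only occur by reaching a holder of id $destID$---which, as ids are unique, is $w$ itself---or by $w$ turning leaving. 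Either way the claimed conclusion follows, and the final \forwardprobe{} in $w.Ch$ is received by fair message receipt.

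The main obstacle is bookkeeping the interaction between the ``leaving'' disjunct and the reachability witness: I must argue that $w \in R_s(Next,destID)$ is genuinely preserved across each forwarding step rather than silently destroyed by a concurrent delegation (a \linearize{} or \revandlin{} action shortening an explicit edge) or by an intermediate node leaving. This is precisely what Lemma~\ref{lem:blpp_Rs_grows_monotonically} guarantees under admissibility, so the crux is to invoke it correctly and to observe that a loss of $w$ from $R_s(\cdot)$ can happen only if $w$ itself leaves, in which case we are already done. A secondary care point is the base step, where the definition of $R_s$ through explicit $Right$-edges to staying nodes forces me to exhibit the concrete first hop $y_1$ rather than argue abstractly about reachability.
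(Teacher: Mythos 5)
Your proposal is correct and follows essentially the same route as the paper: establish $w \in R_s(Next,destID)$ when $u$ first processes $m$, then apply Lemma~\ref{lem:blpp_forwardprobe} recursively, with the strictly increasing holder ids and the ``$w$ becomes leaving'' disjunct handled exactly as in the paper's argument. Your write-up is merely more explicit about the base step (exhibiting the first hop $y_1$) and about termination of the recursion, which the paper leaves implicit.
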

\begin{proof}
	Note that when $m$ arrives as $u$, $Next$ will be changed such that $R_s(u,w) \subseteq R_s(Next,w)$.
	If $w \in R_s(u)$, then $w \in R_s(Next,w)$ afterwards.
	Thus, by applying Lemma~\ref{lem:blpp_forwardprobe} recursively, we have that eventually a \forwardprobe{v,destID, Next', seq} will be in $w.Ch$, which will be received according to the fair message receipt assumption, unless $w$ becomes leaving.
\end{proof}

Using these results, the proof of Theorem~\ref{thm:blpp_guarantees_monotonic_searchability} is analogous to the proof of Theorem~\ref{thm:blp_guarantees_monotonic_searchability} (substituting $R(v)$ by $R_s(v)$, noting that $\rsp(v) \subseteq R_s(u)$ is equivalent to $v \in R(u)$ if $v$ is staying, and using Lemma~\ref{lem:blpp_Rs_grows_monotonically} instead of Lemma~\ref{lem:R_grows_monotonically}, and Lemma~\ref{lem:blpp_w_in_R_leads_to_message_success} instead of Lemma~\ref{lem:w_in_R_leads_to_message_success}).
Note that as soon as a node becomes leaving, searchability to this node does not need be satisfied any longer.

\section{Conclusion and Outlook}
To the best of our knowledge, we presented the first protocol that self-stabilizes a topology whilst satisfying monotonic searchability.
We focused on the line topology as a starting point and extended our protocol such that it additionally solves the Finite Departure Problem.
In the design of our protocol, it turned out that the principle of delegating explicit edges only if they have been successfully introduced before is crucial to enable monotonic searchability.
A natural open question is whether the application of this principle is sufficient for monotonic searchability.
That is, does applying this principle to other protocols that stabilize a topology (e.g., rings, skip-graphs, Delaunay graphs) directly yield monotonic searchability, or do other topologies require more-specialized solutions?

% We presented a self-stabilizing protocol that does not only converge to the desired list topology, but is also able to preserve monotonic searchability during the self-stabilization process and can solve the Finite Departure Problem.
% 
% It would be very interesting to see if our proposed ideas on how to guarantee monotonic searchability are generalizable to other self-stabilizing topologies. 
% A natural progression would be to investigate topologies that are either close to the list (e.g., rings, skip-graphs) or that are multidimensional (e.g., for two dimensions the Delaunay graph).

%%
%% Bibliography
%%
\bibliography{bibliography.bib}

\end{document}